\newcommand{\pushright}[1]{\ifmeasuring@#1\else\omit\hfill$\displaystyle#1$\fi\ignorespaces}
\newcommand{\pushleft}[1]{\ifmeasuring@#1\else\omit$\displaystyle#1$\hfill\fi\ignorespaces}
\renewcommand\qedsymbol{$\blacksquare$}
\newcommand{\emptyword}{\varepsilon}
\newcommand{\bra}[1]{\ensuremath{\langle#1|}}
\newcommand{\ket}[1]{\ensuremath{|#1\rangle}}
\newcommand{\ketbra}[2]{\ensuremath{|#1\rangle\!\langle#2|}}
\newcommand{\tr}{\mathrm{tr}}
\newcommand{\C}{\mathscr C}
\newcommand{\vecspan}{\ensuremath{\mathrm{span}}}
\newcommand{\M}{\mathcal{M}}
\newcommand{\uu}{{\bf u}}
\newcommand{\vv}{{\bf v}}
\newcommand{\scarrafone}{\mathscr{P}}
\newcommand{\w}{{\bf w}}
\newcommand{\e}{\mathbf{e}}
\newcommand{\E}{\mathcal{E}}
\newcommand{\PSD}{\mathrm{S^+}}
\renewcommand{\P}{\Pi_{\mathcal P}}
\newcommand{\Q}{{\Pi_{\mathcal{Q}}}}
\newcommand{\B}{\mathcal{B}}
\renewcommand{\H}{\mathcal{H}}
\newcommand{\BH}{{\mathcal B(\mathcal H)}}
\newcommand{\D}{D}
\renewcommand{\S}{\mathcal{S}}
\newcommand{\I}{\mathcal{I}}
\newcommand{\W}{\mathcal{W}}
\newcommand{\K}{\mathcal{K}}
\newcommand{\Y}{\mathcal{Y}}
\newcommand{\id}{\mathcal{I}}
\newcommand{\F}{\mathcal{F}}
\newcommand{\V}{\mathcal{V}}
\newcommand{\CP}{\mathrm{CP}}
\newcommand{\cone}{\mathrm{cone}}
\newcommand{\sa}{\mathrm{sa}}
\newcommand*{\fracc}[2]{
  \raisebox{0.33\height}{\ensuremath{#1}}
  \mkern-5mu\diagup\mkern-4mu
  \raisebox{-0.33\height}{\ensuremath{#2}}
}
\newtheorem{theorem}{Theorem}
\newtheorem{lemma}[theorem]{Lemma}
\newtheorem{proposition}[theorem]{Proposition}
\newtheorem{definition}[theorem]{Definition}
\newtheorem{example}{Example}
\title{Quantum learning of classical stochastic processes: \protect\\ 
       The Completely-Positive Realization Problem}
\author[1]{Alex Monr\`as}
\author[1,2]{Andreas Winter}
\affil[1]{F\'isica Te\`orica: Informaci\'o i Fen\`omens Qu\`antics,\protect\\
Universitat Aut\`onoma de Barcelona, 08193 Bellaterra (Barcelona)}
\affil[2]{ICREA -- Instituci\'o Catalana de Recerca i Estudis Avan\c{c}ats,\protect\\ Pg.~Lluis Companys, 23, 08010 Barcelona, Spain}
\date{}
\begin{document}
\maketitle

\thispagestyle{empty}

\begin{abstract}
Among several tasks in Machine Learning, a specially important one
is the problem of inferring the latent variables of a system and their 
causal relations with the observed behavior. 
A paradigmatic instance of this is the task of inferring the 
Hidden Markov Model underlying a given stochastic process. 
This is known as the positive realization problem (PRP)~\cite{benvenuti_tutorial_2004},
and constitutes a central problem in machine learning. 
The PRP and its solutions have far-reaching consequences in many areas of 
systems and control theory, and is nowadays an important piece in the 
broad field of positive systems theory~\cite{luenberger_positive_1979}.

We consider the scenario where the latent variables are quantum 
(\emph{i.e.} quantum states of a finite-dimensional system), and the 
system dynamics is constrained only by physical transformations on the 
quantum system. The observable dynamics is then described by a quantum 
instrument, and the task is to determine which quantum instrument --if any--
yields the process at hand by iterative application.

We take as a starting point the theory of quasi-realizations, whence a 
description of the dynamics of the process is given in terms of linear 
maps on state vectors and probabilities are given by linear functionals 
on the state vectors. This description, despite its remarkable 
resemblance with the Hidden Markov Model, or the iterated quantum 
instrument, is however devoid of any stochastic or quantum mechanical 
interpretation, as said maps fail to satisfy any positivity conditions. 
The Completely-Positive realization problem then consists in determining 
whether an equivalent quantum mechanical description of the same process exists.

We generalize some key results of stochastic realization theory, and 
show that the problem has deep connections with operator systems theory, 
giving possible insight to the lifting problem in quotient operator 
systems. Our results have potential applications in quantum machine 
learning, device-independent characterization and reverse-engineering of 
stochastic processes and quantum processors, and more generally, of dynamical 
processes with quantum memory~\cite{guta_fisher_2011, guta_systems_2013}.
\end{abstract}

\section{Introduction}
\label{sec:intro}
Let $\M$ be an alphabet with $|\M|=m$ symbols and let $\M^\ell$ be the set of words of length $\ell$. Let $\M^*$ be the free monoid generated by $\M$, \emph{i.e.}, the set of all finite words
\begin{equation}
	\M^*=\bigcup_{\ell\geq0} \M^\ell,
\end{equation}
with concatenation as ``multiplication'' and the empty word $\emptyword$ as unit element.
We will be concerned with stochastic processes defined on sequences of random variables over $\M$, i.e., stationary probability measures over $\M^*$. We assume throughout that $p$ is a stationary stochastic process on $\M^\infty$, namely, 
\begin{equation}
	p(\uu)\equiv p(\mathcal Y_t=u_1,\mathcal Y_{t+1}=u_2,\ldots,\mathcal Y_{t+\ell-1}=u_\ell),\quad	\uu=(u_1,\ldots,u_\ell)\in\M^\ell
\end{equation}
is independent of $t$. We will use $\ell$ to denote a generic length of a word $\uu$, so that $\uu$ can be written as $\uu=(u_1,\ldots,u_\ell)$ instead of the more cumbersome $\uu=(u_1,\ldots,u_{|\uu|})$. Let $p$ be a stationary stochastic process defined on the alphabet $\M$. Our focus will be on processes generated by ``Markovian'' dynamics of a ``finite memory''; much of the subsequent developments is to make these notions precise.

\begin{definition} 
A quasi-realization of a stochastic process is a quadruple $(\V, \pi,$ $D,\tau)$,
where $\V$ is a vector space, $\tau\in\V$, $\pi\in\V^*$ and 
$D:\M^*\rightarrow \mathcal L(\V)$ is a unital representation of 
$\M^*$ over $\V$, \emph{i.e.},
\begin{align}
	D^{(\emptyword)}&=\mathbb{1},\\
\label{eq:Drepresentation}
	D^{(\uu)}D^{(\vv)}&=D^{(\uu\vv)},\quad \forall \uu,\vv\in \M^*.
\end{align}
In addition, the following relations hold,
\begin{equation}
\label{eq:eigenrelations}
	\pi \left[\sum_{u\in\M}D^{(u)}\right] = \pi,
	\qquad
	\left[\sum_{u\in\M}D^{(u)}\right]\tau=\tau,
\end{equation}
and
\begin{equation}\label{quasi-realization}
	p(\uu)=\pi^\top D^{(\uu)}\tau\quad \forall \uu\in\M^*.
\end{equation}
\end{definition}

As can be readily seen from the definition, the quasi-realization of a 
stochastic process is far from unique. 

\begin{definition}
Two quasi-realizations are said to be {\bf equivalent} 
if they generate the same stochastic process.
\end{definition}

\medskip
The problem of deciding whether two quasi-realizations are equivalent is 
known as the \emph{identifiability} problem. Let us now just point out 
the most immediate case of equivalence. Given a quasi-realization 
$(\V, \pi,D,\tau)$, an equivalent quasi-realization can be obtained by a 
similarity transformation $T$, $(\V, \pi T,$ $T^{-1}DT,T^{-1}\tau)$. 
Such equivalent quasi-realizations are said to be {\bf isomorphic}; note 
that not all equivalent realizations are isomorphic.

Any smallest dimensional quasi-realization admitted by $p$ is called a 
{\em regular} realization, and its dimension is the {\em order} of $p$. 
Given the finite description consisting of the matrices $\{D^{(u)} : u\in\M\}$, 
and of $\tau\in \V$, $\pi\in\V^*$, the $p(\uu)$ are efficiently computable
for all $\uu=(u_1,\ldots,u_\ell)\in\M^*$ because of Eqs.~\eqref{eq:Drepresentation} 
and \eqref{quasi-realization}. Conversely, the regular realization is 
efficiently computable given the probabilities of words of length $2r-1$, 
where $r$ is the order of $p$~\cite{erickson_functions_1970,vidyasagar_complete_2011}. 
We will always consider finite dimensional spaces, thus, when we say regular 
quasi-realization we imply finite-dimensionality.

\begin{example}
\label{ex:quasirealization} 
Consider $\V=\mathbb{R}^4$ and $\M^*$ generated by the alphabet 
\[
	\M=\{+,-,x,y,z,t\}.
\]
A quasi-realization is then given by
\begin{align}
	D^{(+)}&=\frac{\gamma}{2}\left(
		\begin{array}{cccc}
 			\frac{1}{2} & 0 & 0 & \frac{1}{2} \\
 			0 & 0 & 0 & 0 \\
 			0 & 0 & 0 & 0 \\
 			\frac{1}{2} & 0 & 0 & \frac{1}{2} \\
		\end{array}
	\right), 
	&D^{(-)}&=\frac{\gamma}{2}\left(
		\begin{array}{cccc}
			\frac{1}{2} & 0 & 0 & -\frac{1}{2} \\
			0 & 0 & 0 & 0 \\
			0 & 0 & 0 & 0 \\
			 -\frac{1}{2} & 0 & 0 & \frac{1}{2} \\
		\end{array}
	\right)\\
	D^{(x)}&=\frac{\gamma}{6}\left(\begin{array}{cccc}
		1&0&0&0\\
		0&1&0&0\\
		0&0&\cos\theta&\sin\theta\\
		0&0&-\sin\theta&\cos\theta
	\end{array}\right), 	
	&D^{(y)}&=\frac{\gamma}{6}\left(\begin{array}{cccc}
		1&0&0&0\\
		0&\cos\theta&0&-\sin\theta\\
		0&0&1&0\\
		0&\sin\theta&0&\cos\theta
	\end{array}\right)\\
	D^{(z)}&=\frac{\gamma}{6}\left(\begin{array}{cccc}
		1&0&0&0\\
		0&\cos\theta&\sin\theta&0\\
		0&-\sin\theta&\cos\theta&0\\
		0&0&0&1
	\end{array}\right), 	
	&D^{(t)}&=(\gamma-1)\left(\begin{array}{cccc}
		-1&0&0&0\\
		0&1&0&0\\
		0&0&1&0\\
		0&0&0&1
	\end{array}\right)
\end{align}
and vectors/covectors
\[
	\pi=(1,0,0,0),\quad
	\tau=\left(\begin{array}{c}1\\0\\0\\0\end{array}\right).
\]
One can check that the eigenvector conditions are satisfied for 
this set of matrices and pair of vectors.
\end{example}

That a quasi-realization describes a normalized and stationary measure 
on $\M^*$ follows easily from the eigenvector conditions \eqref{eq:eigenrelations}. 
However, whether the measure is nonnegative is a nontrivial question.

\begin{lemma}
\label{nonnegative}
A quasi-realization defines a nonnegative measure if and only if there 
is a convex cone $\C\subset\V$ such that
\begin{enumerate}
	\item $\tau\in\C$,
	\item $D^{(\uu)}(\C)\subseteq\C$, and
	\item $\pi\in\C^* = \{ f\in\V^* : f(x) \geq 0 \ \forall x\in\C \}$.
\end{enumerate}
\end{lemma}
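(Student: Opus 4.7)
The plan is to prove the two directions separately, with the ``if'' direction being a direct computation and the ``only if'' direction relying on an explicit construction of the cone as the conic hull of reachable ``state vectors''.

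For the \emph{if} direction, suppose such a cone $\C$ exists. I would pick an arbitrary word $\uu\in\M^*$ and argue in two lines: by (1) we have $\tau\in\C$, iterating (2) along the factorisation of $\uu$ gives $D^{(\uu)}\tau\in\C$, and then (3) together with the definition of the dual cone yields
\begin{equation*}
    p(\uu)=\pi^\top D^{(\uu)}\tau \geq 0.
\end{equation*}
So the quasi-realization defines a nonnegative measure.

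For the \emph{only if} direction, assume $p(\uu)\geq0$ for all $\uu\in\M^*$. The natural candidate is the cone of reachable vectors,
\begin{equation*}
    \C \;:=\; \cone\bigl\{D^{(\uu)}\tau \,:\, \uu\in\M^*\bigr\}
    \;=\;\Bigl\{\textstyle\sum_i \lambda_i\, D^{(\uu_i)}\tau : \lambda_i\geq 0,\ \uu_i\in\M^*,\ \text{finite sum}\Bigr\},
\end{equation*}
which is a convex cone by construction. Property (1) is immediate from $D^{(\emptyword)}=\mathbb 1$. Property (2) follows from the representation identity \eqref{eq:Drepresentation}: on generators, $D^{(\vv)}D^{(\uu)}\tau = D^{(\vv\uu)}\tau\in\C$, and the conic structure extends this to all of $\C$ by linearity. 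Property (3) is where the hypothesis enters: since $\C^* = \{f : f(x)\geq 0 \ \forall x\in\C\}$ and every element of $\C$ is a nonnegative combination of the generators $D^{(\uu)}\tau$, membership $\pi\in\C^*$ reduces to checking $\pi^\top D^{(\uu)}\tau\geq 0$ for every $\uu\in\M^*$, which is exactly $p(\uu)\geq 0$.

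There is no real obstacle here; the only conceptual choice is the guess of $\C$, and once it is made the three conditions are essentially tautological. One could equivalently take the topological closure of this cone without affecting any of the three properties (the dual cone is insensitive to closure in finite dimensions), but for the statement as given the raw conic hull already works, so I would keep it to avoid unnecessary topological baggage.
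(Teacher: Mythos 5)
Your proposal is correct and follows essentially the same strategy as the paper: the authors also handle the ``only if'' direction by taking the conic hull of the vectors $D^{(\uu)}\tau$ (written there as $\cone\{p(\uu)^{-1}D^{(\uu)}\tau\}$, with the normalization being immaterial for the resulting cone) and verify the three properties exactly as you do. Your unnormalized formulation is in fact the cleaner one, since it sidesteps the issue of vanishing $p(\uu)$ in the denominator.
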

\begin{proof}
It is easy to see that $p(\uu)\geq0$ follows from the conditions above. 
The converse follows by considering $\C=\cone\{p(\uu)^{-1}D^{(\uu)}\tau:\uu\in\M^*\}$ 
as the conic hull of all the vectors $D^{(\uu)}\tau$. Then $\tau\in\C$ and 
$D^{(\uu)}(\C)\subseteq\C$ by construction. That $\pi\in\C^*$ follows from 
$p(\uu)\geq0$.
\end{proof}

\medskip
The general problem of deciding whether a quasi-realization defines a 
nonnegative measure is equivalent to deciding whether a rational power 
series always has nonnegative coefficients, a problem known to be 
undecidable~\cite{sontag_questions_1975}. A generalization of this 
undecidability result has been recently shown for matrix-product 
operators~\cite{kliesch_matrix_2014}. Below, we will show that the 
above example induces a nonnegative measure by explicitly providing 
a cone $\C$ (see Examples~\ref{ex:noclassical}, \ref{ex:SDR1},
\ref{ex:twoqubits} and \ref{ex:qubit+qubit}). 
Before, we derive some general properties about cones associated to 
regular realizations.

\begin{lemma}
\label{LEM:ORDERUNIT}
Let $\mathcal R=(\V,\pi,D^{(\uu)},\tau)$ be a regular quasi-realization, \emph{i.e.}, no equivalent quasi-realization exists of smaller dimension. Then, any cone $\C$ satisfying conditions in Lemma~\ref{nonnegative} is \emph{proper}: $\C$ does not contain nor is contained in a proper subspace of $\V$, and $\tau$ is an order unit of $\C$.
\end{lemma}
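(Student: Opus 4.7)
The plan is to use regularity of $\mathcal R$ as a minimality principle: for each property that could fail I will manufacture a strictly smaller equivalent quasi-realization of $p$ and so reach a contradiction.

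\textbf{$\C$ spans $\V$.} Put $\mathcal W = \vecspan(\C)$. From $D^{(u)}(\C)\subseteq\C\subseteq\mathcal W$, linearity gives $D^{(u)}(\mathcal W)\subseteq\mathcal W$, so $\mathcal W$ is invariant under every $D^{(\uu)}$. Since $\tau\in\mathcal W$, the restricted quadruple $(\mathcal W,\pi|_{\mathcal W},D|_{\mathcal W},\tau)$ is a quasi-realization of $p$, which by regularity forces $\mathcal W=\V$.

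\textbf{$\C$ is pointed.} Let $\mathcal U=\C\cap(-\C)$, a linear subspace of $\V$. For $u\in\mathcal U$ both $D^{(\vv)}u$ and $-D^{(\vv)}u=D^{(\vv)}(-u)$ lie in $\C$, so $\mathcal U$ is $D$-invariant; and $\pi(u)\geq 0$, $\pi(-u)\geq 0$ force $\pi|_{\mathcal U}=0$. Passing to the quotient $\V/\mathcal U$ yields an equivalent quasi-realization of strictly smaller dimension, so regularity forces $\mathcal U=\{0\}$. Together these two steps give that $\C$ is proper.

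\textbf{$\tau$ is an order unit.} Consider the reachable cone $\C_+=\cone\{D^{(\uu)}\tau:\uu\in\M^*\}\subseteq\C$ and its linear span $\V_\tau$. Since $D^{(u)}\C_+\subseteq\C_+$ and $\tau\in\C_+$, the minimality argument above (applied with $\C_+$ playing the role of $\C$) yields $\V_\tau=\V$. Let $F$ be the smallest face of $\C$ containing $\tau$. The eigenvector identity $\tau=\sum_{u\in\M}D^{(u)}\tau$ decomposes $\tau\in F$ into summands in $\C$, so the face property forces each $D^{(u)}\tau\in F$; iterating, $\C_+\subseteq F$, and hence $\vecspan(F)=\V$. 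Being a nonempty convex set containing $0$ and spanning $\V$, the face $F$ has nonempty interior in $\V$, and any such interior point of $F$ is also an interior point of $\C$; but any interior point of $\C$ belonging to a face $F$ forces $F=\C$, since every other point of $\C$ can then be expressed as a strict convex combination of it with a third point of $\C$. Therefore $F=\C$, so $\tau$ lies in the interior of $\C$, which for a proper cone is precisely the order-unit condition.

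The step requiring the most care is the convex-geometric argument in the final paragraph --- that a face of $\C$ containing an interior point of $\C$ must equal $\C$ --- but this is a standard consequence of the face axiom together with the definition of relative interior.
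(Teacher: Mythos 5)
Your argument is correct, but it takes a genuinely different route from the paper's in two of the three parts. For the spanning property you argue essentially as the paper does. For pointedness, you work primally with the lineality space $\C\cap(-\C)$ and pass to the quotient $\V/\mathcal U$, whereas the paper dualizes (non-pointedness of $\C$ makes $\C^*$ non-generating in $\V^*$, yielding a smaller realization via $\widetilde\W=\C^*-\C^*$); the two are equivalent, but yours is arguably the more direct route. The most substantial divergence is in the order-unit step. The paper proves that $\tau$ is an order unit by an explicit eigenvector computation: writing $w_+=\sum_i c_i D^{(\uu_i)}\tau$ and iterating $\tau=\sum_u D^{(u)}\tau$ shows $(\sum_i c_i)\tau - w \in \C$. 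You instead give a purely convex-geometric argument via the minimal face $F$ of $\C$ containing $\tau$: the eigenvector decomposition pushes the reachable cone $\C_+$ into $F$, the spanning result forces $\vecspan F = \V$, hence $F$ meets $\operatorname{int}\C$, hence $F=\C$, hence $\tau\in\operatorname{int}\C$. Your proof buys conceptual transparency and invokes only standard facts about faces and relative interior; the paper's buys an explicit bound $(\sum_i c_i)\tau\geq w$. One remark on your self-assessment: the step you flag as delicate (a face meeting the interior is the whole set) is the easy direction; the implicit use of the converse characterization -- that the smallest face containing $\tau$ being $\C$ itself forces $\tau\in\operatorname{ri}\C$ -- is the statement that actually requires a supporting-hyperplane argument, though it too is standard and is fine in the generality needed here.
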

The proof is given in Appendix~\ref{app:proof1}.

\bigskip
The rest of the paper is organized as follows: 
In the next Section~\ref{sec:HMM} we review Hidden Markov Models, quasi-realizations,
the associated positive realization problem (PRP) and a characterization in terms
of stable polyhedral cones. This puts into the focus generalized probabilistic
theories (GPTs), which provide quasi-state spaces to explain processes.
In Section~\ref{sec:QMC} we move on to the object of our study, Quantum Hidden
Markov Models, and the completely positive realization problem (CPRP).
The subsequent sections are devoted to developing the theory leading to our
characterization of the CPRP in terms of semidefinite representable (SDR) cones:
In Section~\ref{sec:quotients} we introduce some general results about identifiability which introduce the notion of quotient realization, which will be central throughout the paper.
In Section~\ref{sec:SDR+quotients} we pick up the notion of quotient realization and show how this naturally determines the class of cones that will be of our interest, namely, semidefinite representable cones. Then, complete positivity leads us to consider quotient operator systems, and show how semidefinite representable cones arise naturally in this context, as the positive sets in $\V\otimes \B(\mathbb{C}^n)$.
Then, in Section~\ref{sec:regular-is-quotient}, we use the fact that every regular quasi-realization is a quotient realization to derive necessary conditions for the feasibility of the CPRP. We show how in certain specific cases, these conditions can be also sufficient, and discuss why in general this is not the case. Then, we present the main subject of our result, namely, the SDR mapping cone.
Section~\ref{sec:ogni-scarraffon-e-bbello-a-mmamma-suia} finally presents and
proves our main result: That the existence of a compatible SDR mapping cone is necessary and sufficient for the feasibility of the CPRP.
Throughout, we develop the simple Example~\ref{ex:quasirealization} above,
exhibiting its nonclassical and indeed surprising quantum nature as we go
along, to illustrate the theoretical concepts.

\section{Classical learning problem: Hidden Markov Models}
\label{sec:HMM}
A central task in machine learning is to obtain the latent 
variables that account for the apparent complexity of a given 
process $p$. These variables, although not directly accessible 
to the observable dynamics summarize past behavior while still 
providing complete information about future probabilities of 
events. To accomplish this, one aims to find a random variable 
$X$ such that the future is independent of the past, given $X$:
\begin{equation}
  \label{eq:factor}
  p(\vv|\uu)=\sum_{X}P(\vv|X)P(X|\uu).
\end{equation}
However, for such a decomposition to exist at any given time we 
require that state transition probabilities are only dependent 
on the generated output, 
\begin{equation}
  P(X_t=x', \Y_t=u|X_{t-1}=x) \equiv [M^{(u)}]_{x'x},
\end{equation} 
in a time-invariant manner. This implies that $X$ is Markovian, 
and we say that $p$ is a Hidden Markov Process. In such case, 
$\{X_t\}$ represents the latent variables of $p$, and an important 
problem in machine learning consists in recovering the transition 
probabilities $M^{(u)}$.

A processes quasi-realization constitutes an abstract model of the behavior of $p$. 
However this does not suffice to identify its latent variables in an operational sense.
Indeed, the elements of a quasi-realization $(\V, \pi,D,\tau)$ are just a vector
space and some elements, maps and dual elements, with a priori no meaning attached 
to their coordinates.

In particular, the vector $\pi$ does not necessarily satisfy any positivity criterion, 
and the maps $D^{(\uu)}$ need not be related to any stochastic transition probabilities. 
Moreover, the vectors $\pi D^{(\uu)}$ will potentially take an unbounded number of distinct 
values over $\V$, giving little insight into the essential mechanism driving~$p$.

\begin{definition}
A \textbf{\emph{positive} realization} of $p$ is a quasi-realization $(\mathbb{R}^d,\pi,M,\tau)$, such that $M^{(\uu)}$ are nonnegative matrices and
\begin{align}
	M=\sum_{u\in\M}M^{(u)} \text{ is stochastic},
\end{align}
$\pi\in (\mathbb{R}^d)^*$ is a stationary distribution of $M$, and $\tau=(1,1,\ldots,1)\in \mathbb{R}^d$.
\end{definition}

Note that in a positive realization, the cone required in Lemma~\ref{nonnegative} 
is simply $\C=\mathbb{R}^d_+$, the cone of entry-wise nonnegative vectors.

\medskip
\noindent{\bf The Positive Realization Problem} (PRP) is the problem of finding a positive realization of a process $p$, given its regular realization~\cite{vidyasagar_complete_2011}. This problem stands as one of the main quests in machine learning, and several variants can be considered depending on the application. It remains open in its more general formulation, although certain general aspects of it are well-understood~\cite{benvenuti_tutorial_2004}. It has been addressed from a variety of perspectives~\cite{anderson_realization_1999,vidyasagar_complete_2011,cybenko_learning_2011}. However, a full solution remains open and its computational cost is still unknown.  One of the keystones of this problem, was established early on.

\begin{theorem}[Positive realization problem~\cite{dharmadhikari_sufficient_1963}]
\label{thm:cone} 
Given a quasi-realization $(\V,\pi,D,\tau)$, an equivalent positive realization exists if and only if there is a convex pointed \textbf{polyhedral} cone $\C\subset \V$ such that
\begin{enumerate}
\item $\tau\in \C$
\item $D^{(v)}(\C)\subseteq \C$,
\item $\pi\in\C^*$
\end{enumerate}
where $\C^*$ is the dual cone of $\C$.
\hfill\qedsymbol
\end{theorem}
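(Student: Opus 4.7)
My plan is to prove both implications by explicit construction, exploiting the finite-generation property characteristic of polyhedral cones.

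For sufficiency ($\Leftarrow$), given a polyhedral cone $\C$ satisfying the three conditions, I fix a finite generating set $v_1,\ldots,v_d$ of $\C$ and define the linear map $L\colon\mathbb{R}^d\to\V$ by $L(e_j)=v_j$, so that $L(\mathbb{R}^d_+)=\C$. The stability $D^{(u)}(\C)\subseteq\C$ forces each $D^{(u)}v_j$ to admit a nonnegative expansion in the $v_i$, which yields entrywise-nonnegative matrices $M^{(u)}$ intertwining with $L$, $D^{(u)}L=LM^{(u)}$. Analogously, $\tau\in\C$ furnishes a nonnegative $\tau_0$ with $L\tau_0=\tau$, and $\pi\in\C^*$ furnishes a nonnegative covector $\pi_0^\top=\pi L$. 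Then $p(\uu)=\pi_0^\top M^{(\uu)}\tau_0$, so $(\mathbb{R}^d,\pi_0,M,\tau_0)$ is an equivalent quasi-realization with entrywise-nonnegative data. A final normalization --- rescaling and, if necessary, augmenting the generator set to enforce $\tau_0=(1,\ldots,1)$ --- together with the eigenvector identities~\eqref{eq:eigenrelations}, puts everything into the canonical stochastic form demanded by the definition of a positive realization.

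For necessity ($\Rightarrow$), given a positive realization $(\mathbb{R}^d,\pi',M,\tau')$, the cone $\mathbb{R}^d_+$ is polyhedral and trivially satisfies the three conditions in $\mathbb{R}^d$. To transport it to $\V$, I would invoke the identifiability theory of quasi-realizations: since both realizations generate the same process, they factor canonically through the regular realization of $p$ via quasi-realization morphisms. Composing these morphisms (with a suitable lift if the given realization is not itself regular) produces a linear intertwiner $L\colon\mathbb{R}^d\to\V$ satisfying $D^{(u)}L=LM^{(u)}$, $L\tau'=\tau$, and $\pi L=\pi'^\top$. The cone $\C:=L(\mathbb{R}^d_+)$ is then polyhedral --- the image of a polyhedral cone under a linear map is polyhedral --- and the three conditions are inherited from those of $\mathbb{R}^d_+$ via the intertwining relations: invariance since $D^{(u)}Lx=LM^{(u)}x\in L(\mathbb{R}^d_+)$ whenever $x\geq 0$; membership of $\tau$ from $L\tau'=\tau$; and $\pi\in\C^*$ from $\pi L x=\pi'^\top x\geq 0$ for $x\geq 0$.

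The main obstacle is constructing the intertwiner $L$ in the necessity direction when the given quasi-realization is not regular: one must lift through the common regularization $\V\twoheadrightarrow\V_0\twoheadleftarrow\mathbb{R}^d$, choosing the lift so that invariance under every $D^{(u)}$ is preserved simultaneously. In the regular case this arises from a canonical similarity-type morphism and is immediate; in general it requires a careful choice of section $s\colon\V_0\to\V$. The sufficiency direction is comparatively routine, the only delicate point being the bookkeeping in the rescaling step for generators where $\pi$ or the $\tau$-expansion coefficient vanishes, which can be discarded or duplicated without loss.
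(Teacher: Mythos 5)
The paper itself does not prove Theorem~\ref{thm:cone}: it cites Dharmadhikari~\cite{dharmadhikari_sufficient_1963} and closes the statement with a \qedsymbol, so there is no in-paper proof to compare against. Evaluating your proposal on its own merits, the overall architecture (generators of a polyhedral cone $\leftrightarrow$ coordinates of a nonnegative realization, and an intertwiner $L\colon\mathbb{R}^d\to\V$ with $D^{(u)}L=LM^{(u)}$) is the right one, but both directions conceal a genuine gap at exactly the points you brush past.

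In the sufficiency direction, the step that waves through the normalization is where the substance of the theorem actually lives. After building $L\colon\mathbb{R}^d\to\V$, $L(e_j)=v_j$, and choosing nonnegative $M^{(u)}$ with $D^{(u)}L=LM^{(u)}$, the eigenvector identity $\left(\sum_u D^{(u)}\right)\tau=\tau$ transports only to $L\!\left(M\tau_0-\tau_0\right)=0$ where $M=\sum_u M^{(u)}$. Since the number of generators $d$ typically exceeds $\dim\V$, $L$ is not injective, so this does \emph{not} give $M\tau_0=\tau_0$. You cannot simply ``rescale'': the matrices $M^{(u)}$ are not determined by $D^{(u)}$ and one must show that a choice of $M^{(u)}$ \emph{within each fiber} $\{M\geq 0: D^{(u)}v_j=\sum_i M_{ij}v_i\}$ can be made so that $\sum_u M^{(u)}$ has $(1,\dots,1)^\top$ as a right eigenvector. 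The same issue afflicts the dual eigenvector relation for $\pi$. This is a nontrivial argument (it is essentially the content of Dharmadhikari's original proof), not a bookkeeping step.

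In the necessity direction, you correctly identify that everything hinges on constructing the intertwiner $L\colon\mathbb{R}^d\to\V$, and you also correctly identify the obstruction; but you leave it standing. Both realizations quotient onto the (isomorphic) regular realization $\V_0$, giving $\V\twoheadrightarrow\V_0\twoheadleftarrow\mathbb{R}^d$ and a polyhedral stable cone $\C_0\subset\V_0$. To pull $\C_0$ back to a \emph{pointed} polyhedral cone in $\V$, you need a section $s\colon\V_0\to\V$ whose image is invariant under all $D^{(u)}$ (otherwise $D^{(u)}s(\C_0)\not\subseteq s(\C_0)$); equivalently, the short exact sequence $0\to\ker(\V\to\V_0)\to\V\to\V_0\to 0$ of $\M^*$-modules must split. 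Such a splitting need not exist for a general monoid representation. Taking the preimage $\bar L^{-1}(\C_0)$ instead is invariant and polyhedral, but contains the full subspace $\ker\bar L$ and therefore is not pointed. So the non-regular case is not ``routine plus a careful choice of lift'' --- it requires either restricting to the accessible subspace and arguing the remaining kernel can be split off, or showing the statement can be reduced to the regular case in some other explicit way. As written, your necessity argument is complete only when $(\V,\pi,D,\tau)$ is already regular (where Theorem~\ref{THM:ISOMORPHISM} hands you the isomorphism directly).
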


This theorem highlights the nature of the cone $\C$ that guarantees not only positivity of the induced measure $p$ on $\M^*$, but also on the level of the vector $\pi$ and maps $D^{(\uu)}$. A polyhedral condition on the cone $\C$ gives a nice geometrical interpretation of the mechanisms underlying the system's dynamics, where each of the generating vectors of $\C$ is mapped onto a nonnegative linear combination of all the vertices. This insight allows to draw equivalent positive realizations from nonpositive ones~\cite{vidyasagar_complete_2011}. This, however, is far from being a solution to the problem. The difficulty in pushed into finding the polyhedral cone that satisfies the conditions~\cite{van_den_hof_positive_1999}. Partial results in this direction have been obtained~\cite{anderson_realization_1999, vidyasagar_complete_2011}. 

\begin{example}
\label{ex:noclassical} 
We revisit the quasi-realization of Example~\ref{ex:quasirealization} to note that in 
the event of $\theta/2\pi$ being irrational, elements of the monoid generated by 
$\{x,y,z\}$ will be arbitrarily close to any element of $1\oplus SO(3)$. This implies 
that a polyhedral stable cone $\C$ cannot exist, as is seen by noticing that any 
such cone must be stable under all $1\oplus SO(3)$ transformations.
\end{example}

This can be stated as a more general result.

\begin{lemma} 
Given a quasi-realization $(\V,\pi,D^{(\uu)},\tau)$, 
if there is a group $\mathcal G$ such that for all $g\in \mathcal{G}$
there exist an element $\uu_g\in\M^*$ and a scalar $c_g > 0$ such that
\begin{align}
  c_g c_{g'} D^{(\uu_g)}D^{(\uu_{g'})} = c_{gg'}D^{(\uu_{gg'})},
\end{align}
then any cone $\C$ satisfying the conditions in Lemma~\ref{nonnegative} 
must be invariant under the representation $g\rightarrow c_g D^{(\uu_g)}$.
\end{lemma}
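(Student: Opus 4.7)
The plan is to reduce the invariance statement directly to condition~2 of Lemma~\ref{nonnegative} together with the scaling-stability of cones. Setting $T_g := c_g D^{(\uu_g)}$, the hypothesis $\uu_g\in\M^*$ makes condition~2 of Lemma~\ref{nonnegative} applicable, giving $D^{(\uu_g)}(\C)\subseteq \C$; since $c_g>0$ and $\C$ is a convex cone, $c_g\C\subseteq\C$, so
\[
T_g(\C) \;=\; c_g D^{(\uu_g)}(\C)\;\subseteq\; c_g\C\;\subseteq\; \C,
\]
which is exactly the invariance claim.

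The second piece is to check that $g\mapsto T_g$ is genuinely a representation of $\mathcal{G}$ and not just a set-theoretic assignment. The multiplicativity hypothesis $c_g c_{g'} D^{(\uu_g)} D^{(\uu_{g'})} = c_{gg'} D^{(\uu_{gg'})}$ is precisely the relation $T_g T_{g'} = T_{gg'}$. Specialising to $g=g'=e$ shows $T_e^2 = T_e$, so $T_e$ is idempotent; combined with $T_g T_{g^{-1}} = T_e$, each $T_g$ acts invertibly on $\mathrm{Im}(T_e)$ with inverse $T_{g^{-1}}$. Hence on the invariant subspace $\mathrm{Im}(T_e)$ one obtains the stronger two-sided conclusion $T_g(\C\cap \mathrm{Im}(T_e))= \C\cap \mathrm{Im}(T_e)$.

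The main obstacle is essentially none: the lemma is a direct structural observation, since the generators $D^{(\uu_g)}$ already preserve $\C$ and $\C$ is closed under positive scaling. The only real subtlety worth flagging is the distinction between mere one-sided invariance $T_g(\C)\subseteq \C$ and full stability $T_g(\C)= \C$, the latter requiring the group inverse and hence the restriction to $\mathrm{Im}(T_e)$. In applications such as Example~\ref{ex:noclassical} one additionally invokes closedness of $\overline{\C}$ to promote invariance from the dense subgroup generated by $\{T_g\}$ to its topological closure (here $1\oplus SO(3)$), which is what ultimately rules out polyhedrality.
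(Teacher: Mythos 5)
Your second paragraph is where the actual content of the lemma lives, and it is essentially the paper's own argument: combine $T_{g^{-1}}(\C)\subseteq\C$ with $T_g(\C)\subseteq\C$, use $T_g T_{g^{-1}}=T_e$ and positive-scaling stability of cones to promote the one-sided inclusion to equality $T_g(\C)=\C$. The one issue is in your first paragraph, where you conclude with ``$T_g(\C)\subseteq\C$, which is exactly the invariance claim.'' It is not: that is stability, which follows from condition~2 of Lemma~\ref{nonnegative} for \emph{every} word in $\M^*$ and makes no use of the group hypothesis at all. The lemma's point (and what makes it useful in Example~\ref{ex:noclassical}, ruling out polyhedral cones under a continuous symmetry) is the \emph{equality} $T_g(\C)=\C$, which is exactly what your second paragraph derives via $T_{g^{-1}}$.

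On the $\mathrm{Im}(T_e)$ subtlety: you are right that $T_e$ is a priori only an idempotent, and your restriction to $\mathrm{Im}(T_e)$ is the clean way to state the two-sided conclusion in general. However, under the natural (and implicitly assumed) choice $\uu_e=\emptyword$, $c_e=1$ — which is forced to be consistent with the multiplicativity relation at $g=g'=e$ giving $c_e^2=c_e$ — one has $T_e=\mathbb{1}$, so the restriction is vacuous and one recovers invariance on all of $\V$, matching the paper's statement and its step passing from $D^{(\uu_g)}D^{(\uu_{g^{-1}})}(\C)$ to a positive multiple of $\C$. So: correct in substance, same mechanism as the paper, but re-label the first paragraph's conclusion as ``stability'' and let the second paragraph carry the weight of ``invariance.''
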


\begin{proof}
It is easy to show that for all $\{\uu_g,g\in \mathcal G\}$, the stability condition becomes invariance. Indeed, combining $D^{(\uu_{g^{-1}})}(\C)\subseteq\C$ with $D^{(\uu_g)}(\C)\subseteq\C$ we have
\[
	D^{(\uu_g)}D^{(\uu_{g^{-1}})}(\C)\subseteq D^{(\uu_g)}(\C),
\]
so that $c_{e}/(c_gc_{g'})\C\subseteq D^{(\uu_g)}(\C)$, where $e$ is the identity of $\mathcal G$. Since a cone is invariant under multiplication by a positive constant, inclusion in the opposite sense follows, and thus equality $D^{(\uu_g)}(\C)=\C, \forall g\in\mathcal G$.
\end{proof}

A simple corollary is that if the realization generates a nontrivial 
representation of a continuous group (in the sense that it is not constant 
along all connected components), then  the stable cone must be invariant 
under such continuous symmetry. This rules out all polyhedral cones and 
thus the existence of equivalent positive realizations.

It is worth stressing the prominent role of polyhedral geometry in the 
constructions of stochastic models, a feature that has been revisited 
repeatedly in the study of quantum {\em vs} classical 
correlations~\cite{werner_bell_2001} and from  the perspective of more 
general probabilistic theories~\cite{pfister_one_2012}.
These latter are a theoretical framework that originated in the foundations of 
quantum mechanics \cite{Ludwig:1,Ludwig:2,Holevo:stat-struc-book,Barrett-et-al:PRA},
motivated by the realization that quantum mechanical state space (and also
the space of measurements) are merely there to predict probabilities of 
observations -- but these spaces are cones of semidefinite matrices rather than
simplicial cones characteristic of probability theory.
It has been shown that these ``generalized'' state spaces allow for the
development of rudimentary physical phenomenology, involving the statistical
elements of quantum mechanics --preparation, measurement and state transformation--,
serving as toy models or blueprints for quantum mechanical behaviour. The
implicit message is that, rather than being a purely mathematical construct,
the cone implied by a quasi-realization (Lemma~\ref{nonnegative}) describes
a potentially ``real'', if exotic, system. This is a fruitful point of view
not only for foundational investigations (where one wants to compare the
probability rule of quantum mechanics with alternatives), but also in our
present context. Note indeed that Dharmadhikari's Theorem~\ref{thm:cone}
above shows that polyhedral cones are naturally interpreted as effective
state spaces of a positive realization, \emph{i.e.}~a classical probabilistic 
model of the observed process in disguise.

\section{Quantum learning problem: Finitely correlated processes}
\label{sec:QMC}

We address the natural quantum generalization of the PRP, namely, when the relevant information about the past can be synthesized by a quantum state, rather than a classical random variable. This requirement, less restrictive than the classical one~\cite{monras_hidden_2011}, has been considered from the perspective of $\epsilon$-machines~\cite{gu_quantum_2012}, where it was shown that the statistical complexity of the system could be reduced by a quantum model. Instead, our approach focuses on the dimension of the quantum system, which can be drastically reduced once one allows for quantum states. A highly relevant example in a not too distant scenario can be found in~\cite{wolf_assessing_2009}. 

In the quantum mechanical setting, the factorization condition Eq.~\eqref{eq:factor} is replaced by
\begin{equation}
	p(\vv|\uu)=\rho_\uu[ E^{(\vv)}],
\end{equation}
where $\rho_\uu$ represents a quantum state associated with history $\uu$, and $E^{(\vv)}\geq0$ is the POVM element associated with future outcome $\vv$. Future probabilities are obtained by the Born rule applied to state $\rho_\uu$. The minimum dimension by which this description can be achieved is given by the positive semidefinite rank~\cite{fiorini_linear_2012}. However, in addition, in order to have a physically meaningful description of the mechanisms at work, one expects that the state transitions are given by physical transformations, 
\begin{equation}
	\rho_{\uu v}=\rho_\uu\circ \E^{(v)},
\end{equation}
where $\E^{(v)}$ are completely positive maps, and $\sum_{v\in\M}\E^{(v)}$ is unital. The set $\{\E^{(v)}\}$ is called a {\em quantum instrument}, and describes all relevant properties of a generalized quantum measurement, be it for the purpose of computing outcome probabilities by providing the corresponding POVM elements, $E^{(v)}=\E^{(v)}(\id)$, as well as for describing the conditional post-measurement state $\rho_v=\E^{(v)}{}^*(\rho)$.

\begin{definition} 
A \emph{\textbf{completely positive realization}} is a quasi-realization 
$(\BH^\sa,$ $\rho, \E, \id)$, where $\BH$ is the space of bounded operators on 
some finite-dimensional Hilbert space $\H$ and $\BH^\sa$ the (real) subspace of selfadjoint
operators, $\rho$ is a positive semidefinite density operator in $\BH$, 
$\E:\M^*\rightarrow \B(\BH)$ is a unital completely positive map-valued 
representation of the free monoid $\M^*$ and $\id$ is the identity of $\BH$.
\end{definition}
%

\begin{example} 
\label{ex:invertiblequbit}
A quasi-realization equivalent, in fact isomorphic,
to that of Example \ref{ex:quasirealization} 
is given by $\V = \B(\mathbb{C}^2)^\sa$, with the maps
\begin{align}
	\E^{(\pm)}(X) &= \frac{\gamma}{2}P_\pm X P_\pm,\\
	\E^{(j)}(X) &= \frac{\gamma}{6}e^{i\theta/2\sigma_j}X 
	                                 e^{-i\theta/2\sigma_j},\quad j\in\{1,2,3\},\\
     \label{eq:transpose}
	\E^{(t)}(X)   &= (1-\gamma)\sigma_2 X^\top \sigma_2,
\end{align}
where $\sigma_\mu$ are the $2\times 2$ Pauli matrices and 
$P_\pm=\frac{\id\pm \sigma_3}{2}$ are the up/down spin projections. 
The stationary state is $\pi=\frac12\id$ (strictly speaking, it is the 
functional $\frac12\tr$, \emph{i.e.} the normalized trace), the 
stationary element $\tau = \I$, the matrix identity.

The stable convex cone required by Lemma~\ref{nonnegative} is precisely the
set $\C = \V^+ = \B(\mathbb{C})^+$ of positive semidefinite matrices,
which is indeed left invariant by all maps $\E^{(u)}$, contains
the operator $\I$ and the dual cone $\C^*$, which is isomorphic
to $\B(\mathbb{C})^+$, contains $\pi$.

This quasi-realization is completely positive when $\gamma=1$, but 
for $\gamma<1$ the map $\E^{(t)}$ is not completely positive. The 
physical interpretation for this realization is the following: 
At every time step, the system undergoes a projective measurement 
in the computational basis with probability $1/2$, and with probability 
$1/6$ undergoes a $\theta$-rotation around a random axis $\mu\in(x,y,z)$. 
As discussed in Example \ref{ex:noclassical}, when $\theta/2\pi$ 
is irrational, the model has no equivalent positive realization. 
This can be understood as the system's qubit potentially occupying 
any point in the Bloch sphere, thus no classical means of encoding 
the quantum information of $\rho$ will suffice, unless it use infinite
memory.
\end{example}

\medskip\noindent 
{\bf The completely positive realization problem (CPRP):} 
\emph{Given a quasi-realization of process $p(\uu)$, determine whether there is an equivalent completely positive realization, \emph{i.e.}, find a quantum instrument $\{\E^{(u)}\}$, and positive semidefinite $\rho$ such that 
\begin{equation}
	p(\uu)=\rho[\E^{(u_1)}\circ \cdots\circ\E^{(u_\ell)}(\id)],
\end{equation}
and $\left(\sum_{u\in\M}\E^{(u)}\right)^*(\rho)=\rho$. 
Stochastic processes admitting a completely positive 
realization are called \emph{finitely correlated} or 
\emph{algebraic}~\cite{fannes_finitely_1992}, special cases being classical
Markov chains, Hidden Markov Models, and the previous notion of quantum Markov 
chains~\cite{accardi_noncommutative_1978}.}

\medskip
Compared to the large amount of effort devoted to the PRP, the CPRP has 
received significantly less attention in the literature. It arises naturally 
--albeit in slight disguise-- in~\cite{blume_kohout_robust_2013}, and more 
generally in quantum systems identification~\cite{burgarth_quantum_2012, guta_systems_2013, arenz_control_2014, burgarth_identifiability_2014}. 
It is worth mentioning that any positive realization can be cast as a 
completely positive one of the same (Hilbert space) dimension, so the CPRP 
cannot be harder than the PRP.

\section{Quotient realizations and identifiability}
\label{sec:quotients}
In a positive realization, the polyhedral nature of the cone $\C$ 
is a consequence of the classical nature of the process, and it can 
be understood, ultimately, as a consequence of the polyhedral nature 
of the simplex $\mathbb{R}^d_+$. We discuss now the precise way in 
which such features of the underlying mechanism are revealed as properties 
of the cone $\C$. 

In order to obtain necessary and sufficient conditions on a quasi-realization for the existence of an equivalent completely positive one, we first generalize a classical result by Ito, Amari and Kobayashi~\cite{ito_identifiability_1991}. The latter is the stochastic equivalent to a classic result on linear systems theory~\cite{kalman_irreducible_1965}, saying that minimal realizations are always isomorphic (\emph{i.e.}, related by similarity transformations), and are quotients of higher-dimensional ones.

Let $\mathcal{R}=(\mathcal{U}, \pi, D,\tau)$ be a quasi-realization. 
The following aims at understanding the structure of the smallest 
dimensional equivalent realization. Define  
$\W=\vecspan\{D^{(\uu)}\tau\}_{\uu\in\M^*}\subseteq\mathcal U$ as 
the \emph{accessible} subspace. It is trivially stable under the 
action of $D^{(\uu)}$ for all $\uu\in\M^*$:
\begin{equation}
	D^{(\uu)}\W\subseteq \W.
\end{equation}
Analogously, consider the span of states $\widetilde \W=\vecspan\{\pi D^{(\uu)}\}_{\uu\in\M^*}$. 
Its annihilator, $\widetilde \W^\perp =\bigcap_{\sigma\in\widetilde{\W}}\ker \sigma$, 
is the \emph{null} space, {\em i.e.}~the subspace which has no effect whatsoever for 
computing word probabilities. Also $\widetilde \W^\perp $ is stable under $D^{(\uu)}$
for all $\uu\in\M^*$:
\begin{equation}
	D^{(\uu)}\widetilde \W^\perp\subseteq \widetilde \W^\perp,
\end{equation}
since $\widetilde{\W}D^{(\uu)} \subseteq \widetilde{\W}$.

Define the \emph{quotient} space $\V$ as the accessible space modulo its 
null component $\K=\W\cap \widetilde \W^\perp$:
\begin{equation}
	\V \equiv \fracc{\W}{\K}.
\end{equation}
The elements of $\V$ are of the form $a+\K,~a\in \W$. Let $L:\W\rightarrow\V$ be the canonical projection onto $\V$,
\begin{equation}
\begin{array}{lllll}
	L:	
	&\W		&\rightarrow &\V\\
			&v	&\mapsto		&v + \K. 
\end{array}
\end{equation}
Since $D^{(\uu)}\K \subseteq \K$, we have a well-defined quotient map 
$\overline D^{(\uu)}:\V\rightarrow \V$, with the property 
$\overline D^{(\uu)}\circ L=L \circ D^{(\uu)}$. Also, define 
$\bar \tau=L(\tau)$ and $\bar \pi$ as the induced quotient functional 
$\bar \pi\circ L=\pi$. Using the fact that $\pi[\ker L]=0$ we factor 
through the entire set of maps $D^{(u)}$,
\begin{align}
	p(\uu) &= \pi\circ D^{(\uu)}(\tau)\\
		   &= \bar \pi\circ L\circ D^{(\uu)}(\tau)\\
		   &= \bar \pi\circ \overline D^{(\uu)}(\bar \tau).
\end{align}
This, together with easily shown eigenvector relations \eqref{eq:eigenrelations} proves that 
$(\V,\bar \pi, \overline D,\bar \tau)$ constitute a perfectly valid quasi-realization. 
Furthermore $(\V,\bar \pi,\overline D,\bar \tau)$ is equivalent to 
$(\mathcal{U},\pi,D,\tau)$. We call such quasi-realization the 
\emph{quotient} realization. 
An important step is to realize that equivalent quotient realizations are isomorphic.

\begin{theorem}[\cite{ito_identifiability_1991}]
\label{THM:ISOMORPHISM}
Two quasi-realizations $\mathcal{R}_1=(\V_1,\pi_1,D_1,\tau_1)$ and 
$\mathcal{R}_2=(\V_2,\pi_2,D_2,\tau_2)$ of the same stochastic process 
$p$, not necessarily of the same dimension, have isomorphic quotient 
realizations 
$\overline{\mathcal{R}}_i=(\overline\V_i,\overline{\pi}_i,\overline{D}_i,\overline{\tau}_i)$, 
$i=1,2$:
It holds $\overline\V_1\stackrel{T}\cong\overline\V_2$, and
\begin{align}
 	\overline\pi_1      &=\overline\pi_2 T,\\
	\overline D_1^{(u)} &=T^{-1}\overline D_2^{(u)}T,\\
	\overline\tau_1     &=T^{-1}\overline\tau_2.
\end{align} 
\end{theorem}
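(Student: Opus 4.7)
The plan is to construct $T\colon\overline{\V}_1\to\overline{\V}_2$ explicitly on a spanning set of ``accessible'' vectors and then verify that all the intertwining relations follow essentially automatically from the construction of the quotients. The key structural fact I would lean on throughout is the following pair of properties, which hold for any quotient realization $\overline{\mathcal R}=(\overline{\V},\overline\pi,\overline D,\overline\tau)$ by construction of $\V=\W/\K$ with $\W=\vecspan\{D^{(\uu)}\tau\}$ and $\K=\W\cap\widetilde\W^\perp$: the vectors $\{\overline D^{(\uu)}\overline\tau : \uu\in\M^*\}$ span $\overline{\V}$, and the covectors $\{\overline\pi\,\overline D^{(\vv)} : \vv\in\M^*\}$ separate points of $\overline{\V}$ (because any $\overline v$ they all annihilate comes from some $v\in\W$ with $\pi D^{(\vv)}v=0$ for all $\vv$, hence $v\in\widetilde\W^\perp\cap\W=\K$, so $\overline v=0$).

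With that in hand, I would define
\begin{equation*}
T\bigl(\overline D_1^{(\uu)}\overline\tau_1\bigr) \;:=\; \overline D_2^{(\uu)}\overline\tau_2
\end{equation*}
on the spanning set and extend linearly. Well-definedness is where the equivalence of the two realizations enters: if $\sum_i c_i\,\overline D_1^{(\uu_i)}\overline\tau_1=0$, then applying $\overline\pi_1\overline D_1^{(\vv)}$ and using $\overline D_1^{(\vv)}\overline D_1^{(\uu_i)}=\overline D_1^{(\vv\uu_i)}$ together with \eqref{quasi-realization} yields $\sum_i c_i\, p(\vv\uu_i)=0$ for every $\vv\in\M^*$; by hypothesis the same probabilities are produced by $\overline{\mathcal R}_2$, so $\overline\pi_2\overline D_2^{(\vv)}\bigl(\sum_i c_i\overline D_2^{(\uu_i)}\overline\tau_2\bigr)=0$ for all $\vv$, and the separation property on $\overline{\V}_2$ forces $\sum_i c_i\,\overline D_2^{(\uu_i)}\overline\tau_2=0$. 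The symmetric argument gives injectivity, and surjectivity is immediate from the spanning property, so $T$ is a linear isomorphism.

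The three intertwining identities then fall out. For the representation, on a generic spanning vector
\begin{equation*}
T\,\overline D_1^{(u)}\bigl(\overline D_1^{(\uu)}\overline\tau_1\bigr)
  = T\bigl(\overline D_1^{(u\uu)}\overline\tau_1\bigr)
  = \overline D_2^{(u\uu)}\overline\tau_2
  = \overline D_2^{(u)}\,T\bigl(\overline D_1^{(\uu)}\overline\tau_1\bigr),
\end{equation*}
so $\overline D_1^{(u)}=T^{-1}\overline D_2^{(u)}T$. For $\overline\tau$, the case $\uu=\emptyword$ of the definition of $T$ gives $T\overline\tau_1=\overline\tau_2$. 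For $\overline\pi$, comparing $\overline\pi_1\bigl(\overline D_1^{(\uu)}\overline\tau_1\bigr)=p(\uu)=\overline\pi_2\bigl(\overline D_2^{(\uu)}\overline\tau_2\bigr)=\overline\pi_2 T\bigl(\overline D_1^{(\uu)}\overline\tau_1\bigr)$ on a spanning set yields $\overline\pi_1=\overline\pi_2 T$.

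The main obstacle, and really the only nonroutine step, is the well-definedness of $T$; everything else is symbol-pushing on generators. This step is delicate precisely because it requires both quotient features simultaneously---spanning on the source side and separation on the target side---which is why the theorem is specific to quotient realizations and fails for generic equivalent quasi-realizations (where one may only get an intertwiner rather than an isomorphism).
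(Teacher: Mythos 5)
Your proposal is correct and is, up to reorganization, the same argument as in Appendix~\ref{app:isomorphism}: both define the isomorphism by sending $\overline D_1^{(\uu)}\overline\tau_1\mapsto\overline D_2^{(\uu)}\overline\tau_2$ and verify this is well defined and bijective by probing with $\overline\pi_i\,\overline D_i^{(\vv)}$ over all $\vv$ and invoking equality of word probabilities together with the construction of the quotient (your ``separation'' lemma is precisely what the paper uses when it argues linear independence of the $\mathbf{f}_i$). The only cosmetic difference is that the paper fixes a basis among the accessible vectors while you define $T$ on the full spanning set and check well-definedness; you also write out the three intertwining identities, which the paper leaves implicit.
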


This result follows from \cite{ito_identifiability_1991}, which proves it however
only for the Hidden Markov Model case. But the proof only relies on the 
nonnegativity of the process probabilities, and applies to any pair of 
equivalent and well-defined quasi-realizations (in the sense that they yield 
the same nonnegative measure on $\M^*$). For the sake of completeness,
we reproduce the proof in Appendix~\ref{app:isomorphism} with our notation.

This result is important in that it establishes the uniqueness of the quotient space $\V$, up to basis changes. Let $d$ be the dimension of $\V$. As can be seen from the definition, 
$d=\dim \V\leq n=\dim \mathcal U$, where $n$ is the original realization's dimension. By considering the quotient of a regular realization of dimension $r$ we get $d\leq r$. On the other hand $r$ is a lower bound to the dimension of any equivalent quasi-realization. Thus we conclude that $d=r$, hence quotient realizations are indeed regular, and all regular realizations can be regarded as quotient realizations. 
The concepts underlying realization theory have recently been used in quantum systems 
identification~\cite{burgarth_quantum_2012, guta_systems_2013, arenz_control_2014, burgarth_identifiability_2014}, and a similar result for multipartite quantum 
systems appeared in~\cite{guta_equivalence_2014}.

\begin{example}
\label{ex:SDR1} 
Consider the realization given in Example~\ref{ex:invertiblequbit} regardless 
of whether it is completely positive or not. It is defined in the $4$-dimensional 
vector space $\B(\mathbb{C}^2)^\sa$. We will show that its quotient is 
precisely that of Example~\ref{ex:quasirealization}. Its accessible subspace 
is the self-adjoint subspace, $\W=\B(\mathbb{C}^2)^\sa$, spanned by the Pauli 
matrices  and the identity, 
\begin{equation}
	\W=\vecspan\{\sigma_\mu : \mu=0,1,2,3\},
\end{equation}
and analogously $\widetilde \W=(\B(\mathbb{C}^2)^\sa)^*$, so 
$\widetilde \W^\perp\cap\W=\{0\}$. Therefore $L$ is just an isomorphism 
$L:\W\rightarrow\V\cong\mathbb{R}^4$. Fixing a basis in $\V$ as 
$\e_\mu=L(\sigma_\mu)$ yields the regular realization of Example~\ref{ex:quasirealization}. 

Because all maps in Example~\ref{ex:invertiblequbit} are positive 
(although not completely positive), they satisfy $\E^{(u)}(\PSD)\subseteq\PSD$. 
This implies that $D^{(u)}L(\PSD)\subseteq L(\PSD)$, while $\tau=L(\id)$ and 
$\pi\circ L=\rho$. Therefore, the cone $\C=L(\PSD)$ is given by
\begin{align}
	\C &= \left\{x\in\V\middle| \sum_\mu x_\mu \sigma_\mu\geq0\right\}\\
	   &= \left\{x\in\V\middle| x_0\geq|\vec x|\right\},
\end{align}
which is a parameterization of the well-known positive-semidefinite cone 
$\PSD$. We have introduced a notation that will be exploited in the following, 
namely $x_\mu=(x_0,\vec x)$, and $|\vec x|=\sqrt{\sum_{i=1}^3 x_i^2}$.
\end{example}

This example illustrates how the cone $\C$ referred to in Lemma~\ref{nonnegative} can be derived from the preserved cone in its higher dimensional equivalent realizations. This equivalent realization, however, is not completely positive. We next show two different completely positive realizations that, despite not being isomorphic, have that of Example~\ref{ex:quasirealization} as quotient realization.

\begin{example} 
\label{ex:twoqubits}
We now construct a completely positive realization on two qubits. 
Let $\H=\mathbb{C}^2\otimes\mathbb{C}^2$, and consider the following maps in $\BH$.
\begin{align}
	\E^{(\pm)}(X) &= \frac{\gamma}{2}P_\pm\tr_B[X]P_\pm\otimes P_\mp,\\
	\E^{(j)}(X)   &= \frac{\gamma}{6}e^{i\theta/2(\sigma_j\otimes\mathbb{1}+\mathbb{1}\otimes\sigma_j)}X e^{-i\theta/2(\sigma_j\otimes\mathbb{1}+\mathbb{1}\otimes\sigma_j)},\quad j\in\{1,2,3\},\\
	\E^{(t)}(X)   &= (1-\gamma)\Sigma X \Sigma^\dagger,
\end{align}
where $\Sigma=\tfrac12\sum_\mu \sigma_\mu\otimes\sigma_\mu$ is the SWAP operator, and $\tr_B[X]$ is the partial trace on the second subsystem. One can check that these maps are completely positive for all $\gamma\in[0,1]$. In addition, we define the reference elements $\id=\mathbb{1}\otimes\mathbb{1}$, and $\rho$ is the only vector such that $\sum_{u\in\M}{\E^{(u)}}^*(\rho)=\rho$.

Excluding critical values of $\theta$ and $\gamma$ such as $\gamma=0$ or $\theta=\pi$, the observable and accessible subspaces $\W$ and $\widetilde \W$ are, respectively,
\begin{align}
  \label{eq:Wspace}
  \W &=            \vecspan\{\sigma_\mu\otimes\mathbb{1},\mathbb{1}\otimes\sigma_\mu\}_{\mu=0,\ldots,4},\\
\nonumber
  \widetilde \W &= \vecspan\{\sigma_\mu\otimes\sigma_\mu,\sigma_\mu\otimes\mathbb{1}-\mathbb{1}\otimes\sigma_\mu\}_{\mu=0,\ldots,4}\\
	            &\quad
	             + \vecspan\{\sigma_1\otimes\sigma_2+\sigma_2\otimes\sigma_1,\sigma_2\otimes\sigma_3+\sigma_3\otimes\sigma_2,\sigma_3\otimes\sigma_1+\sigma_1\otimes\sigma_3\}.
\end{align}
We have used the Hilbert-Schmidt inner product to identify elements $f\in\BH^*$ with elements in $F\in\BH$, such that $f(X)=\tr[F^\dagger X]$. The respective dimensions are $\dim\W=7$ and $\dim\widetilde\W=10$. The annihilator of $\widetilde \W$, is 
\begin{align}
\nonumber
	\widetilde\W^\perp=&\,\vecspan\{\mathbb{1}\otimes\sigma_i+\sigma_i\otimes\mathbb{1}\}_{i=1,2,3}\\
	&\,+\vecspan\{\sigma_1\otimes\sigma_2-\sigma_2\otimes\sigma_1,\sigma_2\otimes\sigma_3-\sigma_3\otimes\sigma_2,\sigma_3\otimes\sigma_1-\sigma_1\otimes\sigma_3\},
\end{align}
so that for $\K=\W\cap\widetilde\W^\perp$ we have 
\begin{align}
	\K=\vecspan\{\mathbb{1}\otimes\sigma_i+\sigma_i\otimes\mathbb{1}\}_{i=1,2,3}.
\end{align}
Clearly, $\K\cap\PSD=\{0\}$. $\K$ is 3-dimensional, so $\dim\W/\K=4$. 

We can use the customary notation $\sigma_A=\sigma\otimes\mathbb{1}$, $\sigma_B=\mathbb{1}\otimes\sigma$, and $\vec n\cdot \vec \sigma\equiv\sum_in_i \sigma_i$ to characterize elements $\omega\in\W$ by
\begin{align}\label{eq:arbitraryomega}
	\omega=&\,c_0 \mathbb{1}+ \vec n_A \cdot \vec \sigma_A+\vec n_B\cdot \vec \sigma_B.
\end{align}
Hence, defining basis elements $\omega_\mu$ as
\begin{align}
	\omega_0=\mathbb{1},\qquad
	\omega_i=\frac12(\sigma_A{}_i-\sigma_B{}_i),
\end{align}
the quotient map $L:\W\rightarrow\W/\K\cong\V$ is $L(\omega_\mu)=\e_\mu$. Hence,  an arbitrary element $x\in\W/\K$ of the form $x=x_\mu\e_\mu$ corresponds to elements $\omega=x_0\mathbb{1}+\tfrac12\vec x \cdot (\vec \sigma_A-\vec \sigma_B)+k$, where $k\in \K$. The action of $\E^{(u)}$ on such elements is given by
\begin{align}
	\E^{(u)}(\omega_\mu)=\sum_{\nu}D^{(u)}_{\mu\nu}\omega_\nu+k_\mu^{(u)}
\end{align}
where $k_\mu^{(u)}\in \K$. The coefficients $D^{(u)}_{\mu\nu}$ are given in 
Example~\ref{ex:quasirealization}, and constitute the quotient realization,
\begin{align}
	L\circ \E^{(u)}=D^{(u)}\circ L.
\end{align}
\end{example}
This shows the first example of how a quasi-realization arises nontrivially from a higher dimensional completely positive realization. The fact that the maps in Example~\ref{ex:invertiblequbit} are positive already guarantees that a construction of this type (which essentially represents the transpose map as the induced quotient map of the particle exchange map, which is completely positive) will exists for dimension 2, as all such positive maps are a convex sum of completely positive and completely co-positive ones.

\begin{example} 
\label{ex:qubit+qubit}
We now construct a completely positive realization on two qubits. 
For the usual matrix transpose $X^\top$ define the \emph{universal spin flip}
map $\Phi(X) \equiv \sigma_2 X^\top \sigma_2$. 
Now let $\H=\mathbb{C}^2\oplus\mathbb{C}^2$, and consider the subspace
\begin{equation}
  \label{eq:W-space}
  \W = \{ Y \oplus Z\ :\ Z=\Phi(Y),\ Y \in \B(\mathbb{C}^2)^\sa \} \subset \BH^\sa.
\end{equation}
The following maps on $\BH$, whose elements we write as $2\times 2$-block 
matrices $X=\sum_{i,j=0}^{1} X_{ij} \otimes \ketbra{i}{j}$,
can be checked to map $\W$ to itself:
\begin{align}
	\E^{(\pm)}(X) &= \frac{\gamma}{2} (P_\pm \oplus P_\mp) X (P_\pm \oplus P_\mp), \\
	\E^{(j)}(X)   &= \frac{\gamma}{6} 
	                    \left( e^{i\theta/2\,\sigma_j} \oplus e^{-i\theta/2\,\sigma_j}\right) 
	                                     X 
	                    \left( e^{-i\theta/2\,\sigma_j} \oplus e^{i\theta/2\,\sigma_j}\right), 
	                                                                  \quad j\in\{1,2,3\}, \\
	\E^{(t)}(X)   &= (1-\gamma) (\mathbb{1}\otimes\sigma_1) X (\mathbb{1}\otimes\sigma_1)^\dagger\\
	              &= (1-\gamma) \sum_{i,j=0}^{1} X_{ij} \otimes \ketbra{1-i}{1-j},
\end{align}
where we observe that $\Phi(P_\pm) = P_\mp$ and 
$\Phi(\sigma_j)=-\sigma_j$ for all $j=1,2,3$, while $\Phi(\sigma_0)=\sigma_0$.

One can check that these maps are completely positive for all $\gamma\in[0,1]$. 
In addition, we define the reference elements 
$\id=\mathbb{1}_4=\mathbb{1}_2\oplus\mathbb{1}_2$, 
and $\pi = \frac14\tr$, corresponding to the maximally mixed state.
Using the identification
\[
  \W \ni Y \oplus \Phi(Y) \leftrightarrow Y \in \V=\B(\mathbb{C}^2)
\]
one can also check that this realization, restricted to $\W$ is 
isomorphic to the one from Example~\ref{ex:invertiblequbit}, 
and hence to Example~\ref{ex:quasirealization}.
Note that the isomorphism only determines the action of the maps
$\E^{(u)}$ on $\W$, but this does not uniquely determine them. For instance,
we could have used all or any subset of the following instead:
\begin{align*}
	\widetilde\E^{(\pm)}(X) &= \frac{\gamma}{2} \bigl( (P_\pm \oplus 0) X (P_\pm \oplus 0) 
	                                             + (0 \oplus P_\mp) X (0 \oplus P_\mp) \bigr), \\
	\widetilde\E^{(j)}(X)   &= \frac{\gamma}{6} \left(
	                                                  \left( e^{i\theta/2\,\sigma_j} \!\oplus\! 0 \right) 
	                                                   X 
	                                                  \left( e^{-i\theta/2\,\sigma_j} \!\oplus\! 0 \right)
	                          + 	                     \left( 0 \!\oplus\! e^{-i\theta/2\,\sigma_j} \right) 
	                                                   X 
	                                                  \left( 0 \!\oplus\! e^{i\theta/2\,\sigma_j} \right)
                    	                           \right),  \\
	                           &\pushright{j\in\{1,2,3\}}, \\
	\widetilde\E^{(t)}(X)   &= (1-\gamma) \sum_{i=0}^{1} X_{ii} \otimes \ketbra{1-i}{1-i}.
\end{align*}
One can readily check that the quotient realization of this is the same as for $\E^{(\uu)}$.
\end{example}

\section{Semidefinite representable cones and quotient operator systems}
\label{sec:SDR+quotients}
The CPRP is essentially the problem of inverting the quotient construction presented in the previous section. This corresponds to providing a completely positive lifting of a regular realization $\mathcal R=(\V, \pi,D^{(u)},\tau)$. As is the case in the classical context, a necessary and sufficient condition
will turn out to be the existence of a stable cone of a certain kind. 
We focus on finite-dimensional liftings from an $r$-dimensional regular realization $\mathcal R$ acting on $\V\cong\mathbb{R}^r$ to a completely positive realization acting on $\BH^\sa$ where $\H$ is a finite-dimensional Hilbert space, $\H=\mathbb{C}^n$. We use $\PSD$ to denote the positive semidefinite cone in $\BH$, which
is generating for $\BH^\sa$ (all cones we deal with are convex). 
A cone $\C$ is \emph{pointed} iff $x\in \C$ and $-x\in \C$ implies $x=0$, 
it is \emph{generating} if $\vecspan\,\C=\V$ and is \emph{proper} if it is 
both pointed and generating. We will use calligraphic letters for subspaces of $\BH^\sa$, 
and for any given subspace $\W$, $\W^+$ will denote its intersection with $\PSD$, 
$\W^+=\W\cap\PSD$.

\begin{definition} 
Let $\V$ be a finite dimensional real vector space. 
A \emph{\textbf{semidefinite representable cone}} (SDR) is a set $\C\subset\V$ such that  
\begin{equation}\label{states}
	\C=L(\W^+)
\end{equation}
where $\W\subseteq \B(\mathbb{C}^d)^\sa$ is a subspace, for some $d$, 
and $L:\W\rightarrow \V$ is a linear map. 
\end{definition}

It is easy to see that proper SDR cones can always be described by subspaces $\W$ such that $\W=\vecspan(\W^+)$ and $L$ is a quotient map from $\W$ to $\W/\K\cong \V$, with $\K\cap \PSD=\{0\}$. SDR cones are homogeneous instances of semidefinite representable sets, 
the feasibility regions of semidefinite programs~\cite{blekherman_semidefinite_2012}. 

As we will be dealing with various quotients and the quotient construction used to represent a given SDR will play a relevant role, we find it convenient to introduce the following notation. Given a finite-dimensional Hilbert space $\H=\mathbb{C}^n$, and a subspace $\W\subseteq \BH^\sa$, the cone $L(\W^+)$ will also be denoted by $\C=L(\W^+)=\fracc{\W^+}{\ker L}$ whenever $\ker L\subseteq \W$. Thus, we will often encounter expressions of the form $\fracc{A^+}B$ which imply $B\subseteq A$.

\begin{example}
\label{ex:SDR3}  
We have already encountered a trivial instance of SDR cone in Example~\ref{ex:SDR1}, 
which is isomorphic to the positive semidefinite cone in the subspace of 
self-adjoint operators $\PSD\subset \B(\mathbb{C}^2)^\sa$. We derive the relevant 
cone for Example~\ref{ex:twoqubits}. Consider the spaces $\BH$ and $\BH^*$, 
where $\H=\mathbb{C}^2\otimes \mathbb{C}^2$. Arbitrary elements in $\W$ are 
of the form Eq.~\eqref{eq:arbitraryomega}, which can be rewritten as
\begin{align}
	\omega
	=&\,(c_0-|\vec n_A|-|\vec n_B|)\mathbb{1}\otimes\mathbb{1}+2|\vec n_A|\ket{\hat n_A}\bra{\hat n_A}\otimes\mathbb{1}+2|\vec n_B|\mathbb{1}\otimes\ket{\hat n_B}\bra{\hat n_B},
\end{align}
where $\ket{\hat n}\bra{\hat n}=(\mathbb{1}+\hat n\cdot\vec \sigma)/2$. The condition $\omega\geq0$ then reads
\begin{align}\label{ex:positivity}
	c_0\geq|\vec n_A|+|\vec n_B|.	
\end{align}
Under the quotient map $L:\W\rightarrow\W/\K$ defined in Example~\ref{ex:twoqubits}, an arbitrary element $x\in\W/\K$ of the form $x=\sum_\mu x_\mu\e_\mu$ corresponds to elements $\omega=x_0\mathbb{1}+\tfrac12\vec x \cdot (\vec \sigma_A-\vec \sigma_B)+k$, where $k\in \K$. 
Thus we have $x\in L(\W^+)$ if and only if there is $k=\tfrac12\vec s \cdot (\vec \sigma_A+\vec \sigma_B)\in \K$ such that 
\begin{align}
	x_0\mathbb{1}+\vec x \cdot\frac{\vec \sigma_A-\vec \sigma_B}{2}+\vec s \cdot\frac{\vec \sigma_A+\vec \sigma_B}{2}\geq0,
\end{align}
Hence the cone $\C=L(\W^+)$ can be cast as an SDR cone in the form
\begin{align}
	L(\W^+)=\left\{x\in\V:\exists s_j \text{ s.t.~}x_0\mathbb{1}+\sum_i x_i\tfrac{1}{2}(\sigma_A-\sigma_B)_i+\sum_j s_j \tfrac{1}{2}(\sigma_A+\sigma_B)_j\geq0\right\}.
\end{align}

However, in this case, the expression can be further simplified. Using condition~\eqref{ex:positivity}, we have
\begin{align}
	x_0\geq\frac{1}{2}( |\vec s+\vec x|+|\vec s-\vec x|),
\end{align}
which can be fulfilled if and only if $x_0\geq|\vec x|$. Thus we can write
\begin{align}
	\C=L(\W^+)=\{x\in\V:x_0\geq|\vec x|\},
\end{align}
which coincides exactly with the cone in Example \ref{ex:SDR1}.
\end{example}

This example reveals two relevant aspects of SDR cones and their representations. 
First, a given SDR cone has an infinite number of equivalent representations. 
We have shown two for the positive semidefinite cone $\PSD\subset\B(\mathbb{C}^2)^\sa$, 
but it is easy to come up with several others. On the other hand, it is not the 
case that every SDR cone can be understood as a positive semidefinite cone, 
or a \emph{slice}, thereof (\emph{i.e.}~a $\W^+$).

\begin{lemma}
\label{LEMMA:DUAL}
Let $\id\in \W\subseteq\BH^\sa$ and $\widetilde\W \subseteq (\BH^\sa)^*$.  
The spaces $\big(\W/(\W\cap\widetilde \W^\perp)\big)^*$ and $\big(\widetilde \W+\W^\perp\big)/\W^\perp$ are naturally isomorphic.
If $\W=\vecspan(\W^+)$ and $\big(\W\cap\widetilde \W^\perp\big)^+=\{0\}$ then
\begin{enumerate}
	\item The cone
	  $\displaystyle{\C=\fracc{\W^+}{\W\cap\widetilde \W^\perp}}$
	  is a proper SDR cone.
	\item The dual cone of $\C$ is given by
      \begin{equation}
        \label{forms}
	    \C^*=\fracc{\big(\widetilde\W+\W^\perp\big)^+}{\W^\perp}.
      \end{equation}
\end{enumerate}
\end{lemma}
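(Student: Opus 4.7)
The plan is to handle the three claims in order, using standard quotient--annihilator duality for the first two and a positive-extension theorem for the third.

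To establish the natural isomorphism, I would set $\K=\W\cap\widetilde\W^\perp$ and note that $(\W/\K)^*$ is canonically the annihilator of $\K$ inside $\W^*$. In finite dimension, restriction of functionals yields $\W^*\cong(\BH^\sa)^*/\W^\perp$, so $(\W/\K)^*$ becomes $\{[\phi]\in(\BH^\sa)^*/\W^\perp:\phi|_\K=0\}=\K^\perp/\W^\perp$. The standard identity $(\W\cap\widetilde\W^\perp)^\perp=\W^\perp+(\widetilde\W^\perp)^\perp=\W^\perp+\widetilde\W$ then delivers the claimed isomorphism $(\W/\K)^*\cong(\widetilde\W+\W^\perp)/\W^\perp$.

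For claim 1, let $L:\W\to\W/\K$ denote the quotient map, so $\C=L(\W^+)$. Generating is immediate from $\W=\vecspan(\W^+)$ and surjectivity of $L$. For pointedness, I would argue as follows: if $x,-x\in\C$, say $x=L(a)$ and $-x=L(b)$ with $a,b\in\W^+$, then $L(a+b)=0$, so $a+b\in\K$. Since also $a+b\in\W^+$, the hypothesis $\K^+=\{0\}$ forces $a+b=0$; combined with $a,b\geq0$ this gives $a=b=0$, and hence $x=0$.

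For claim 2, I would identify $(\BH^\sa)^*$ with $\BH^\sa$ via the Hilbert--Schmidt pairing. One inclusion is straightforward: if $\phi\in(\widetilde\W+\W^\perp)^+$, then $\phi\in\K^\perp$ so the class $[\phi]$ acts consistently on $\W/\K$, and $[\phi]([\omega])=\phi(\omega)\geq0$ for all $\omega\in\W^+$ by self-duality of $\PSD$, giving $[\phi]\in\C^*$. For the converse, I would take $[\phi]\in\C^*$ with a representative $\phi\in\widetilde\W+\W^\perp$; the restriction $g:=\phi|_\W$ is then a positive linear functional on $(\W,\W^+)$. Since $\id\in\W$ is an order unit of $\PSD$ in $\BH^\sa$ (every $X\in\BH^\sa$ satisfies $\|X\|_\infty\id\pm X\in\PSD$), the subspace $\W$ is majorizing, so Krein's positive-extension theorem furnishes $\tilde\phi\in\PSD^*\cong\PSD$ with $\tilde\phi|_\W=g$. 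Then $\tilde\phi-\phi\in\W^\perp$, whence $\tilde\phi\in\widetilde\W+\W^\perp$ and $[\tilde\phi]=[\phi]$, producing the required PSD representative.

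The main obstacle I anticipate is the positive-extension step: this is where the hypothesis $\id\in\W$ is indispensable, supplying the majorizing property that Krein's theorem needs; without it, the dual formula could only be expected as an inclusion. The remaining ingredients---quotient--annihilator duality, self-duality of $\PSD$, and the pointedness argument---are standard manipulations in finite-dimensional ordered vector spaces.
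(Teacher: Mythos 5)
Your proof is correct, and statements~(1) and the natural isomorphism are handled essentially as in the paper (the same quotient--annihilator bookkeeping and the same argument that $\K^+=\{0\}$ forces pointedness). The one genuine difference lies in how you obtain the dual-cone formula: you argue both inclusions directly and, for the nontrivial direction, invoke Krein's positive-extension theorem, using $\id\in\W$ to ensure that $\W$ is a majorizing subspace of $\BH^\sa$. The paper instead proves an auxiliary general result (its Theorem~\ref{thm:dual}) via a separating-hyperplane argument, and then derives the formula by applying it twice --- once with ambient space $\BH^\sa$ and $A=\W$ to get $(\W^+)^*\cong (\PSD)^*/\W^\perp$, and once with ambient space $\W$ and $A=\K$ to descend to the quotient. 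These are of course the same circle of ideas: Krein's extension theorem is itself a corollary of the Hahn--Banach separation theorem. Your route is a bit more self-contained and avoids having to track which generating hypothesis ($U^+|_A$ generating versus its dual form $(U^+)^*|_{A^\perp}$ generating) feeds into which direction of the auxiliary theorem; the paper's route has the virtue of isolating a reusable duality lemma. Both correctly pin down the role of the positivity-extension step as the place where the hypotheses $\id\in\W$ and $\W=\vecspan(\W^+)$ are indispensable.
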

The proof is given in Appendix~\ref{app:dual}.

It is convenient to denote the dual projection as $\widetilde L:\widetilde\W+\W^\perp\rightarrow \fracc{(\widetilde\W+\W^\perp)}{\W^\perp}$, under which $\C^*$ reads
\begin{align}
	\C^*=\widetilde L\big((\widetilde\W+\W^\perp)^+\big).
\end{align}
This map will be useful in the following section.

\medskip
In the definition of $\C$ (Eq.~\eqref{states}), the subspace $\W$ and the map $L$ 
appear as mere artifacts to actually describe the set $\C$. As is shown in the 
conditions of Lemma~\ref{LEMMA:DUAL}, not all representations of $\C$ are 
convenient to easily describe the dual $\C^*$. However, at this level this 
appears as a mere choice of convenience (chose $L$ such that its kernel has 
no positive semidefinite elements).
However, as we now discuss, the SDR representation does contain much more 
information than the actual SDR set. This information can be unpacked by 
considering the operator system it generates. We refer the reader 
to \cite{paulsen_completely_2003} for a clear exposition of operator systems theory.

Since  $\id\in \W\subseteq\BH^\sa$, $\W$ can be regarded as an operator 
system~\cite{paulsen_completely_2003}. 
Let $\W_n=\W\otimes\B(\mathbb{C}^n)^\sa$ and $\W_n^+$ be its positive cone. 
Likewise, given a linear map $L:\W\rightarrow\V$, let 
$L_n\equiv L\otimes\I_n:\W_n\rightarrow \V_n$. We define cones
\begin{equation}
  \C_n=L_n\big(\W_n^+\big)\subset\V_n.
\end{equation}
When $\K\cap\PSD=\{0\}$, $\K$ is called a \emph{completely order proximinal} 
kernel. This guarantees that $(\V, \C_n,L(\id))$ is a quotient operator 
system~\cite{kavruk_quotients_2013}. Two operator systems $(\V,\C_n,L(\id))$ 
may be different despite the fact that their first level of the hierarchy 
$\C_1\equiv\C$ may be the same. This is illustrated in the following
Example~\ref{ex:differentOS}. 
Let us first consider the duals of $\C_n$. Since $\ker L=\K$, we then have 
$\ker L_n=\K_n$. This corresponds to the quotient $\fracc{\W_n}{\K_n}$. 
However, 
\[
  \K_n = (\widetilde\W^\perp\cap\W)\otimes\B(\mathbb{C}^n)^\sa
       = \widetilde\W^\perp_n\cap\W_n.
\]
Thus we have
\begin{align}
	\C_n   &=\fracc{\W_n^+}{(\widetilde\W_n\cap \W_n)},               \\
	\C_n^* &=\fracc{(\widetilde \W_n+\W_n^\perp\big)^+}{\W_n^\perp}.
\end{align}


\begin{example}
\label{ex:differentOS} 
We now consider the two different representations for the positive semidefinite cone 
$\PSD$ as a subset of $\V = \B(\mathbb{C}^2)^\sa \cong \mathbb{R}^4$. 
Let us denote them as $\C$ and $\C'$,
\begin{align}
	\C  &= \B(\mathbb{C}^2)^+,\\
	\C' &= L'(\W^+),
\end{align}
where $\B(\mathbb{C}^2)^+=\PSD$ and $\W^+$ are the positive elements in 
\[
  \W =      \{ Y \oplus Z\ :\ Z=\Phi(Y),\ Y \in \B(\mathbb{C}^2)^\sa \} 
    \subset \B(\mathbb{C}^2\oplus\mathbb{C}^2)^\sa,
\]
as defined in Eq.~\eqref{eq:W-space}, Example~\ref{ex:qubit+qubit}.
(Recall that $\Phi$, up to a unitary conjugation, is the transpose map). 
Furthermore, $L':\W \rightarrow \V$ is defined as in that example,
\[
  L'(Y\oplus \Phi(Y)) = Y,
\]
which is clearly an isomorphism, mapping $\W^+$ to $\C'=\C$.
\emph{I.e.}, these two representations yield exactly the same cone.
Their extensions, however, are different:
\begin{align}
	\C_n  &= \Big(\B(\mathbb{C}^2\otimes\mathbb{C}^n)^+\Big),\\
	\C'_n &= (L'\otimes\id_n)\Big(\big(\W\otimes\B(\mathbb{C}^n)\big)^+\Big) \\
	      &= \{ Y \in \B(\mathbb{C}^2\otimes\mathbb{C}^n) : Y \geq 0 \text{ and } Y^\Gamma \geq 0 \},
\end{align}
where $\Gamma$ denotes the partial transpose (transpose on the first system).
In fact, $\C_n \subsetneq \C_n'$ for all $n > 1$, the inclusion being
clear, and the strictness enough to observe for $n=2$. Indeed, $\C_n$
contains entangled rank-one projections, which are all excluded in $\C_n'$.
It is well-known that $\C_2'$ consists of exactly the separable semidefinite
operators~\cite{peres96, Horodecki1996}.
\end{example}

\section{Regular quasi-realizations as quotient realizations}
\label{sec:regular-is-quotient}
From Theorem \ref{THM:ISOMORPHISM} it follows that if a regular quasi-realization 
$\mathcal R=(\V, \pi,\D^{(u)},\tau)$ has an equivalent completely positive 
realization $\mathcal Q=(\BH^\sa,\rho,\E^{(u)},\id)$, the former ($\mathcal R$) 
must be a quotient realization of the latter ($\mathcal Q$). This implies 
several constraints on the structure of the stable subspaces of $\mathcal Q$, 
and provides necessary conditions for the feasibility of the CPRP. 
The quotient construction was illustrated by Examples~\eqref{ex:twoqubits} and 
\eqref{ex:qubit+qubit}, 
where it is shown how the regular quasi-realization arises naturally from the 
quotient map. In this section we develop the opposite construction, namely, 
deriving a completely positive realization from the quasi-realization. 
To do so, we first illustrate the requirements in a particular case, where 
Arveson's extension theorem plays a central role.

\begin{lemma}\label{lemma:injective}
Let $\mathcal R$ be a quasi-realization $(\W,\pi,D,\tau)$, where $\W$ is a 
selfadjoint subspace $\W\subseteq\BH^\sa$. If the linear maps $D^{(u)}$ satisfy
\begin{enumerate}
	\item $D^{(\uu)}_n\W_n^+\subseteq\W_n^+$ $\forall n\in\mathbb{N}$ ($D^{(\uu)}$ are 
completely positive in the operator system $\W$)
	\item $\pi\in(\W^+)^*$ and
	\item $\tau\in\W^+$ is strictly positive ($\tau>0$),
\end{enumerate}
then $\mathcal R$ can be extended to an 
equivalent, completely positive realization $(\BH^\sa$, $\rho,\F,\id)$.
\end{lemma}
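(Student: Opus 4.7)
The plan is to construct the completely positive realization in two stages: first extend each $D^{(u)}$ to a CP map on all of $\BH$ via Arveson's extension theorem, and then produce an equivalent stationary density operator $\rho$ by a Krein-type extension of $\pi$ followed by a Cesaro average.

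As a preparatory step, I will exploit the strict positivity of $\tau$ to bring $\mathcal R$ into the standard hypothesis of Arveson's theorem, where the operator system contains the identity. Since $\tau>0$ is invertible on $\H$, the conjugation $\Phi(X)=\tau^{-1/2}X\tau^{-1/2}$ is a CP isomorphism of $\BH^\sa$, and I set $\hat\W=\Phi(\W)$, $\hat D^{(u)}=\Phi\circ D^{(u)}\circ\Phi^{-1}$, $\hat\pi=\pi\circ\Phi^{-1}$, $\hat\tau=\Phi(\tau)=\id$. Then $\hat\W\ni\id$ is an operator system, each $\hat D^{(u)}$ is CP on $\hat\W$ (being a composition of CP maps), $\hat\pi$ is positive on $\hat\W^+$, and $\sum_{u\in\M}\hat D^{(u)}(\id)=\id$; the transported realization $(\hat\W,\hat\pi,\hat D,\id)$ is equivalent to $\mathcal R$.

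Next, Arveson's extension theorem yields a CP extension $\F^{(u)}:\BH\to\BH$ of each $\hat D^{(u)}:\hat\W\to\BH$, and I define $\F^{(\uu)}=\F^{(u_1)}\circ\cdots\circ\F^{(u_\ell)}$ so that $\F$ is automatically a unital representation of $\M^*$. A short induction using $\hat D^{(\uu)}(\id)\in\hat\W$ and $\F^{(u)}|_{\hat\W}=\hat D^{(u)}$ gives $\F^{(\uu)}(\id)=\hat D^{(\uu)}(\id)\in\hat\W$ for every word $\uu$; in particular $\sum_u\F^{(u)}(\id)=\id$, so $\mathcal T=\sum_u\F^{(u)}$ is unital. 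Since $\id$ is an order unit of $\hat\W^+$ and $\hat\pi$ is positive there, a Krein-type extension then produces a positive linear functional on $\BH^\sa$, represented by a positive operator $\rho_0$ with $\tr\rho_0=\hat\pi(\id)=\pi(\tau)=p(\emptyword)=1$.

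The remaining issue is that $\rho_0$ need not be $\mathcal T^*$-stationary, which I will repair by replacing it with the Cesaro average $\rho_N=\frac{1}{N}\sum_{k=0}^{N-1}(\mathcal T^*)^k(\rho_0)$ and extracting an accumulation point $\rho$ from the compact set of density operators; this $\rho$ satisfies $\mathcal T^*\rho=\rho$. Because $\mathcal T^k\F^{(\uu)}(\id)=\sum_{w\in\M^k}\F^{(w\uu)}(\id)$ still lies in $\hat\W$ and $\tr[\rho_0\F^{(w\uu)}(\id)]=\hat\pi(\hat D^{(w\uu)}(\id))=p(w\uu)$, stationarity of $p$ gives $\tr[\rho_0\mathcal T^k\F^{(\uu)}(\id)]=\sum_{w\in\M^k}p(w\uu)=p(\uu)$ for every $k$ and $\uu$, so $\tr[\rho\F^{(\uu)}(\id)]=p(\uu)$ in the limit. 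The main obstacle is thus harmonizing Arveson's and Krein's extensions, neither of which by itself yields a stationary state for the extended dynamics; the Cesaro averaging bridges them, and its success will hinge crucially on the stationarity of $p$ to keep the intermediate probabilities invariant along the averaging.
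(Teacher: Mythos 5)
Your proof follows the same backbone as the paper's (conjugate by $\tau^{\pm 1/2}$ to reduce to the case $\tau=\id$, extend each $D^{(u)}$ to a completely positive map on $\BH$ by Arveson's theorem, and extend $\pi$ to a positive functional on $\BH$), but it is more careful on one point where the paper is terse. The paper identifies $\pi$ with a positive $\chi\in\BH^*$ via its dual-cone lemma, sets $\rho=T^*(\chi)$, and then simply asserts that $\rho\circ\sum_u\F^{(u)}=\rho$ ``similarly''; but that identity is only verified on $T^{-1}(\W)$, since the Arveson extensions $\E^{(u)}$ may carry elements outside $\W$ anywhere in $\BH$, where $\chi$ is unconstrained by the stationarity of $\pi$. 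Your proof recognizes this and repairs it: because $\mathcal T=\sum_u\F^{(u)}$ is unital and CP, $\mathcal T^*$ preserves density operators, so the Ces\`aro averages $\rho_N$ stay in a compact set, every accumulation point is $\mathcal T^*$-fixed, and the invariance $\sum_{w\in\M^k}p(w\uu)=p(\uu)$ guarantees that $\tr[\rho_N\F^{(\uu)}(\id)]=p(\uu)$ for all $N$, hence in the limit. That ergodic-averaging step is exactly what is needed to turn the Arveson and Krein extensions into a genuine, stationary completely positive realization, and it closes a gap that the paper's proof as printed leaves open; conversely, the paper's route produces an explicit $\rho$ in one step when $\W=\BH^\sa$, but that shortcut does not survive a proper subspace $\W$.
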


\begin{proof} 


Since $\tau\in\W$ is strictly positive, then $\W\subseteq\BH$ is an operator system. Then every map $D^{(\uu)}:\W\rightarrow\BH$ is completely positive and by virtue of Arveson's extension theorem, there are completely positive maps $\E^{(\uu)}:\BH\rightarrow\BH$ such that $D=\E|_\W$, hence $D^{(\uu)}(\tau)=\E^{(\uu)}(\tau)$. Since the range of $D$ is $\W$, then $\E^{(\uu)}\W\subseteq\W$. In addition, since $\pi\in(\W_+)^*$ we have, by virtue of Lemma~\ref{LEMMA:DUAL} that $\W^+\equiv\fracc{(\widetilde\W+\W^\perp)^+}{\W^\perp}$. Therefore, under the canonical isomorphism, $\pi=\chi+\W^\perp$ for some positive semidefinite $\chi\in(\widetilde\W+\W^\perp)^+\subset \BH^*$. Thus
\begin{align}
	p(\uu)&=\pi\circ D^{(\uu)}(\tau)\\
		&=(\chi+\W^\perp)\circ\E^{(\uu)}(\tau)\\
		&=\chi\circ\E^{(\uu)}(\tau),
\end{align}
where in the second line we have omitted $\W^\perp$ because $\E^{(\uu)}(\tau)\in\W$. We now define the completely positive map $T(X)=\sqrt \tau X \sqrt \tau$, thus having $\tau=T(\id)$, and can define maps $\F^{(\uu)}=T^{-1}\E^{(\uu)}T$ and $\rho=T^*(\chi)\geq0$ to obtain
\begin{align}
	p(\uu)&=\chi\circ T\circ \F^{(\uu)}(\id)\\
		&=\rho\circ \F^{(\uu)}(\id).
\end{align}
In addition, 
\begin{align}
	\sum_{u\in \M}\F^{(u)}(\id)&=\sum_{u\in\M}T^{-1}\E^{(u)}T(\id)\\
	&=\sum_{u\in\M}T^{-1}\E^{(u)}(\tau)\\
	&=\sum_{u\in\M}T^{-1}D^{(u)}(\tau)\\
	&=T^{-1}(\tau)\\	
	&=\id,	
\end{align}
and similarly $\rho\circ\sum_{u\in\M}\F^{(u)}=\rho$.
\end{proof}

We have seen that the quasi-realization in Example~\ref{ex:quasirealization} can be associated to the positive semidefinite cone $\C=\PSD$ in various ways. Example~\ref{ex:invertiblequbit} shows a realization in $\B(\mathbb{C}^2)$. However, such realization is not completely positive as it involves the transposition map. Example \ref{ex:SDR1} illustrates how the SDR cone $\PSD$ arises naturally as a stable cone of realization \ref{ex:invertiblequbit}, as required in Lemma~\ref{nonnegative}.

We have shown a completely positive realization on two qubits in $\B(\mathbb{C}^2\otimes\mathbb{C}^2)$ (as in Example~\ref{ex:twoqubits}), and in Example \ref{ex:SDR3} we derived the stable cone associated to it. On the other hand, Example~\ref{ex:qubit+qubit} shows how the same process can be described in a subspace $\W\subset\B(\mathbb{C}^4)$, with completely positive maps. This, combined with Lemma~\ref{lemma:injective} guarantees that extensions to the entire $\B(\mathbb{C}^4)$ exist. Examples of such extensions are provided in the same example Example~\ref{ex:qubit+qubit}.

Lemma~\ref{lemma:injective} establishes that once a completely positive realization is given in some $\W\subseteq\BH$ (a concrete operator system), then a completely positive realization on the entire $\BH$ follows naturally from Arveson's theorem. This may lead to think that once a quasi-realization is presented, then the only task is to present the right embedding into a suitable subspace of $\BH$. However, it may occur that the vector space $\V$ of the regular quasi-realization is not isomorphic to any operator system $\W$ on which $D^{(\uu)}$ is completely positive. This is the case when, for example, the regular quasi-realization involves a nontrivial quotient. As we have seen, the resulting positive cones in $\V_n$ are not those of a concrete operator system, and Arveson's extension theorem does not apply. The main contribution of this work is to characterize the conditions which replace those in Lemma~\ref{lemma:injective} when complete positivity w.r.t. a concrete operator system cannot be established.

For a hypothetical completely positive realization $(\BH,\rho,\F,\id)$, the accessible subspace 
$\W=\vecspan\{\F^{(u)}(\id)\}$ is an operator system $\W\subseteq\BH$, and complete positivity 
of $\F$ in $\W$ follows from complete positivity in $\BH$,
\begin{align}
	\F_n(\W_n^+)\subseteq\W_n^+.
\end{align}

The null space $\widetilde \W^\perp $ of $\mathcal Q$, and its restriction to $\W$, $\K=\W\cap \widetilde \W^\perp $ must also be stable under the action of $\F^{(\uu)}$. The quotient space is $\V=\W/\K$ and the canonical projection $L:\W\rightarrow \V$ brings $\mathcal Q$ to $\mathcal R$. Under the quotient construction, the induced maps satisfy the relation
\begin{equation}\label{eq:commutative}
	D\circ L = L\circ \F|_\W.
\end{equation}
In addition, we have the following relations
\begin{subequations}\label{taupi}
\begin{align}
	\tau       &= L(\id),\\
	\pi &= \tilde L(\rho),
\end{align}
\end{subequations}
which relate $\mathcal R$ to $\mathcal Q$. With this, we have $\rho|_\W=\pi\circ L$. Using the definitions of the previous section ($\C_n=L_n(\W_n^+)$), we have
\begin{equation}\label{eq:cpabstract}
	\D_n(\C_n)\subseteq \C_n,\quad\forall n\geq1.
\end{equation}
This is precisely the condition of complete positivity in the quotient 
operator system $(\V, \C_n,L(\id))$. Hence a necessary condition for the 
existence of a CP realization in $\BH$ is that the regular realization 
is completely positive with respect to a quotient operator system, 
together with the relations 
\begin{align}\label{eq:ccstar}
	\tau&\in\C,&\pi&\in\C^*,
\end{align}
which follow from \eqref{taupi}. It can be noticed that condition \eqref{eq:cpabstract} 
together with conditions \eqref{eq:ccstar} are the straightforward generalization 
of the conditions in Lemma~\ref{lemma:injective}, from the concrete operator system 
to abstract operator systems. However, quotient operator systems $\V\simeq\fracc{\W}{\K}$ 
(with $\W\subset\BH^\sa$) are not injective in $\BH^\sa$, and Arveson's theorem does not 
apply. Therefore, these conditions are necessary but not sufficient. In fact, 
there exist completely positive maps in $\V\simeq\fracc{\W}{\K}$ which are not 
induced maps of completely positive maps in $\W\subset\BH^\sa$. 
Therefore, condition 1 in Lemma~\ref{lemma:injective} is too weak to ensure existence 
of a completely positive lift from $\V$ to $\W\subset\BH^\sa$ (equivalently,
by Arveson's theorem, directly to $\BH^\sa$). The following is a concrete example for this phenomenon, which we learned from Vern Paulsen~\cite{Vern:personalcomm}.

\begin{example}
\label{ex:Verno-grazia}
For $\mathcal{H}=\mathbb{C}^{m}$, consider the subspace $\W \subset \BH^\sa$ 
of tridiagonal $m\times m$-matrices, i.e.
\[
  \W = \{ A : A_{jk} = 0 \text{ for } |j-k|>1 \},
\]
and in it the subspace
\[
  \K = \{ A\in\W : A = \operatorname{diag}(A_{jj}:j=1,\ldots,m) \text{ and } \tr A = 0 \}
\]
of traceless diagonal matrices. Clearly, $\K$ is the kernel of a 
completely positive (unital and trace preserving) map, so the quotient
$\fracc{\W}{\K}$ is a bona fide operator system. In fact,
it is a very interesting one, since it was shown in
\cite[Thm.~4.2]{farenick_operator_2012} that $\fracc{\W}{\K}$ is completely 
order isomorphic to
\[
  \S_{m-1} := \operatorname{span}\{I,u_1,\ldots,u_{m-1},u_1^*,\ldots,u_{m-1}^*\} \subset C^*(F_{m-1}),
\]
where $C^*(F_{m-1})$ is the $C^*$-algebra generated by the free universal
unitaries $u_1,\ldots,u_{m-1}$. The latter naturally presupposes a
representation on an infinite dimensional Hilbert space.
Furthermore, the map $L:\W \longrightarrow \S_{m-1}$ given by
\begin{align*}
  L(\ketbra{j}{j+1}) &= \frac1m g_j, \\
  L(\ketbra{j+1}{j}) &= \frac1m g_j^*, \\
  L(\ketbra{j}{J})   &= \frac1m I,
\end{align*}
is in fact completely positive, has kernel $\K$, and the
induced quotient map $\tilde{L}:\fracc{\W}{\K} \longrightarrow \S_{m-1}$
is a complete order isomorphism.

Now, for a permutation $\pi$ of $m-1$ objects, observe that its action
on the generators, $u_j \longmapsto u_{\pi(j)}$, extends to an
automorphism of $C^*(F_{m-1})$, which clearly leaves $\S_{m-1}$
invariant, and hence defines a completely positive and unital map,
in fact an automorphism, $D^{\pi}:\S_{m-1} \longrightarrow \S_{m-1}$,
which we identify with an automorphism of $\fracc{\W}{\K}$.
In particular, $D^{\pi}_n = D^\pi \otimes \id_n$ maps each positive
cone $\left(\fracc{\W_n}{\K_n}\right)^+$ to itself.

For concreteness, let us now look at $m=4$ and the permutation
$\pi=(1)(23)$. We claim that $D^{\pi}$ is not obtained by
taking the quotient of even a positive Hermitian-preserving
map $\F:\W \longrightarrow \W$ that preserves $\K$, 
let alone a completely positive one.
Indeed, such an $\F$ would have to map
\begin{align*}
  \F(\ketbra{1}{2}) &= \ketbra{1}{2} + K_1, \\
  \F(\ketbra{2}{3}) &= \ketbra{3}{4} + K_2, \\
  \F(\ketbra{3}{4}) &= \ketbra{2}{3} + K_3, 
\end{align*}
with traceless diagonal matrices $K_1$, $K_2$, $K_3$. Similarly,
\[
  \F(\ketbra{j}{j}) = R_j,
\]
where $R_j$ are non-negative diagonal matrices with unit trace
($j=1,\ldots,4$). 
Now, let us apply $\F$ to the positive semidefinite matrix
$\ketbra{1}{1}+\ketbra{2}{2}+\omega\ketbra{1}{2}+\omega^*\ketbra{2}{1}$,
with a complex unit $\omega$ (we will only need $\pm 1$ and $\pm i$):
by linearity and positivity,
\[
  R_1 + R_2 + \omega K_1 + \omega^* K_1^* + \omega\ketbra{1}{2}+\omega^*\ketbra{2}{1} \geq 0.
\]
As the trace of the diagonal and necessarily positive semidefinite
$R_1 + R_2 + \omega K_1 + \omega^* K_1^*$ equals $2$, this can only be if
for all $\omega$,
\[
  R_1 + R_2 + \omega K_1 + \omega^* K_1^* = \ketbra{1}{1} + \ketbra{2}{2},
\]
and hence $K_1=0$ and
\[
  R_1 + R_2 = \ketbra{1}{1} + \ketbra{2}{2}.
\]
Analogously, 
working on $\ketbra{2}{2}+\ketbra{3}{3}+\omega\ketbra{2}{3}+\omega^*\ketbra{3}{2}$,
we find $K_2=0$ and
\[
  R_2 + R_3 = \ketbra{3}{3} + \ketbra{4}{4},
\]
but this is a contradiction, as it says that $R_2$ would have to
be supported on $\operatorname{span}\{\ket{1},\ket{2}\}$ and at the same 
time on the orthogonal subspace $\operatorname{span}\{\ket{3},\ket{4}\}$.

More generally, for arbitrary $m$, one can show by a careful 
application of the above reasoning that the only permutations
$\pi$ (viewed as cp maps of $\S_{m-1}$) admitting a lifting
to a positive map $\F:\W \longrightarrow \W$ that preserves $\K$
are those that map neighbours to neighbours, i.e. $|j-k|\leq 1$
implies $|\pi(j)-\pi(k)|\leq 1$; these are precisely the identity 
and the complete reversal $j \leftrightarrow m+1-j$. For the latter,
the lifting $\F$ is unique among the positive maps, however it is 
not completely positive as it is unitarily equivalent to the transpose map.
\end{example}

\medskip
To overcome this problem, we will not impose complete positivity in 
the operator systems sense, but instead a stronger condition that 
guarantees complete positivity in the quotient operator 
system $\V$ as well as in $\W$.
%
%

\begin{lemma} Let $\W\subseteq\BH$ and $\widetilde \W\subseteq\BH^*$. 
Let $\mathscr S^\CP$ be the set of completely positive maps $\F:\BH\rightarrow\BH$ such that
\begin{subequations}\label{eq:stability}
\begin{align}
\label{eq:stabilityW}
	\F(\W)&\subseteq\W\\
\label{eq:stabilitytildeW}
	\F^*(\widetilde \W)&\subseteq \widetilde \W.
\end{align}
\end{subequations}
Then $\mathscr S^\CP$ is a pointed SDR cone in $\BH\otimes\BH^*$.
\end{lemma}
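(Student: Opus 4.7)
The plan is to translate the problem through the Choi--Jamio{\l}kowski isomorphism and recognize $\mathscr{S}^{\CP}$ as a slice of the positive semidefinite cone. Let $n=\dim\H$ and let $J$ denote the real-linear isomorphism sending a Hermitian-preserving map $\F:\BH\to\BH$ to its Choi operator
\[
	J(\F) = \sum_{i,j}\F\bigl(\ketbra{i}{j}\bigr)\otimes\ketbra{i}{j} \;\in\; \B(\H\otimes\H)^\sa.
\]
The essential input is the Choi--Jamio{\l}kowski theorem: $\F$ is completely positive if and only if $J(\F)\in\PSD$. So complete positivity pulls back to intersection with the positive semidefinite cone in $\B(\mathbb{C}^{n^2})^\sa$.

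Next I would check that the two stability conditions \eqref{eq:stabilityW}--\eqref{eq:stabilitytildeW} together cut out a real linear subspace of the Hermitian-preserving maps. The inclusion $\F(\W)\subseteq\W$ is equivalent, via the Hilbert--Schmidt inner product, to the finitely many linear equations $\tr\bigl(Y_a^\dagger\F(X_b)\bigr)=0$ as $X_b$ runs over a basis of $\W$ and $Y_a$ over a basis of $\W^\perp$; each such equation is manifestly linear in $\F$. The dual condition $\F^*(\widetilde\W)\subseteq\widetilde\W$ is linear in $\F^*$ and hence also in $\F$, since the adjoint is a real-linear involution. Let $\mathcal{L}$ be the common solution space of these equations in the space of Hermitian-preserving maps, and set $\mathcal{U}:=J(\mathcal{L})$, a linear subspace of $\B(\H\otimes\H)^\sa$.

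Combining the two observations yields
\[
	J(\mathscr{S}^{\CP}) \;=\; \mathcal{U}\cap\PSD \;=\; \mathcal{U}^+,
\]
so $\mathscr{S}^{\CP} = J^{-1}(\mathcal{U}^+) = L(\mathcal{U}^+)$ with $L:=J^{-1}|_{\mathcal{U}}$ linear. This fits exactly the definition of an SDR cone from Section~\ref{sec:SDR+quotients}, with ambient space $\BH\otimes\BH^*$ and representing subspace $\mathcal{U}\subseteq\B(\mathbb{C}^{n^2})^\sa$. Pointedness follows for free from pointedness of $\PSD$: if both $\F$ and $-\F$ lie in $\mathscr{S}^{\CP}$, then $J(\F)$ and $-J(\F)$ are both positive semidefinite, forcing $J(\F)=0$ and hence $\F=0$.

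I do not anticipate a serious technical obstacle here; the whole argument is essentially a dictionary between the channel and Choi pictures. The only bookkeeping worth verifying is that both stability constraints really are linear in $\F$, which is transparent once they are written as vanishing of Hilbert--Schmidt inner products with bases of the appropriate orthogonal complements $\W^\perp$ and $\widetilde\W^\perp$.
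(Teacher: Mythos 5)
Your argument is correct, and at its core it is the same strategy as the paper's: pass to the Choi picture, observe that the stability conditions cut out a linear subspace, and recognize $\mathscr{S}^{\CP}$ as the image of a slice of the PSD cone under a linear map, with pointedness inherited from $\PSD$. The one place you genuinely streamline is in establishing that the stability conditions define a linear subspace: you note directly that they are vanishing conditions of Hilbert--Schmidt pairings $\tr\bigl(Y_a^\dagger\F(X_b)\bigr)$ and hence manifestly linear in $\F$, which is all the lemma needs. The paper instead spends most of its proof deriving the explicit decomposition
\[
\mathscr S=\W\otimes (\W^\perp+\widetilde \W)+(\W\cap\widetilde \W^\perp)\otimes \BH^*+\BH\otimes \W^\perp,
\]
which is logically unnecessary for the SDR/pointedness conclusion but is reused later (in Theorem~\ref{thm:scarrafone} and the subsequent analysis of $\scarrafone$) to understand the range of $L\otimes\widetilde L$. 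So your shortcut buys brevity at the cost of an explicit description that the rest of the paper leans on; if you intend to continue to the later results, the explicit form of $\mathscr S$ (or an equivalent) would still have to be produced.
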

\begin{proof}

Let $\F:\BH\rightarrow\BH$ satisfy Eqs.~\eqref{eq:stability} and regard it as an element in $\BH\otimes \BH^*$. Let $i:\W\hookrightarrow\BH$ be the natural inclusion, and $i^*:\BH^*\rightarrow\W^*$ be its adjoint, \emph{i.e.}, the canonical projection $i^*:\BH^*\rightarrow\fracc{\BH^*}{\W^\perp}\cong\W^*$, with $\ker i^*=\W^\perp$. From Eq.~\eqref{eq:stabilitytildeW} we have that $\F(\widetilde \W^\perp)\subseteq\widetilde \W^\perp$. Hence
\begin{align}\label{eq:stableK}
	\F(\W\cap\widetilde \W^\perp)\subseteq\W\cap\widetilde \W^\perp.
\end{align}

The restriction of $\F$ to $\W$ is given by $\F|_\W=\mathbb{1}\otimes i^*(\F)\in \BH \otimes \W^*$. However, since $\F(\W)\subseteq\W$ we have that in fact 
\begin{align}
	\mathbb{1}\otimes i^*(\F)\in\W\otimes\W^*.
\end{align}
Which implies that $\F\in \W\otimes \BH^*+\BH\otimes \W^\perp$.

Similarly, apply the canonical projection $L\otimes\mathbb{1}$ to obtain $L\otimes i^*(\F)\in \W/\K\otimes \W^*$, with $\K=\W\cap\widetilde \W^\perp$. However, from Eq.~\eqref{eq:stableK} we have that $L\otimes i^*(\F)(\K)=0$, thus $L\otimes i^*(\F)\in \fracc{\W}{\K}\otimes \K^\perp$. This determines that 
\begin{align}
	\mathbb{1}\otimes i^*(\F)\in \W\otimes \K^\perp+\ker L\otimes\W^*.
\end{align} 
However, $i^*{}^{-1}(\K^\perp)=\W^\perp+\widetilde \W+\ker i^*$, while in general, 
\begin{align}
	\mathbb{1}\otimes i^*{}^{-1}( \W\otimes \K^\perp)\in \W\otimes (\W^\perp+\widetilde \W)+\BH\otimes\ker i^*.
\end{align}
Therefore, maps satisfying $\F(\W)\subseteq\W$ and $\F(\widetilde \W)\subseteq \widetilde \W$ are in the subspace $\mathscr S\subseteq \BH\otimes\BH^*$,
\begin{align}
	\mathscr S=\W\otimes (\W^\perp+\widetilde \W)+(\W\cap\widetilde \W^\perp)\otimes \BH^*+\BH\otimes \W^\perp,
\end{align}
where we have used $\ker L=(\W\cap\widetilde \W^\perp)$ and $\W^\perp=\ker i^*$.

To put this expression in perspective, we refer the reader back to Lemma~\ref{LEMMA:DUAL}, where it was pointed out that the spaces $\fracc{\W}{(\W\cap\widetilde \W^\perp)}$ and $\fracc{(\W^\perp+\widetilde\W)}{\W^\perp}$ are natural duals of each other. Then the quotient maps $L:\W\rightarrow\V$ and $\widetilde L:(\W+\widetilde\W^\perp)\rightarrow\V^*$ play a central role in understanding also the maps satisfying relations \eqref{eq:stability}. Indeed, $\mathscr S$ can be written as
\begin{align}
	\mathscr S=\mathrm{dom\,}L\otimes\mathrm{dom\,}{\widetilde{L}}+\ker L\otimes\BH+\BH^*\otimes \ker \widetilde L
\end{align}

Hence, one can effectively understand this subspace as a triangularization condition for the maps $\F$, where the forms in $\mathrm{dom\,}{\widetilde L}$ are coupled to elements in $\mathrm{dom\,}L$, while arbitrary forms are only coupled to $\ker L$ and only $\ker{\widetilde{L}}$ is coupled to arbitrary vectors.

%

Let $\chi$, $\phi=\chi^{-1}$ be the Choi-Jamio\l{}kowski isomorphisms. Under this isomorphism CP maps become positive semidefinite operators in $\B(\H\otimes\H)$, thus $\F$ is CP if and only if $\phi(\F)\geq0$. Hence, if $\F$ is CP we have
\begin{align}
	\F\in\mathscr S^\CP=\mathscr S\cap \chi(\B(\H\otimes\H)^+).
\end{align}
To see that $\mathscr S^\CP$ is SDR, let $\mathscr W=\phi(\mathscr S)\subseteq\B(\H\otimes\H)$. Then,
\begin{align}
	\mathscr S^\CP=\chi(\mathscr W^+).
\end{align}

This shows that the set of CP maps which satisfy the stability relations \eqref{eq:stability} forms an SDR cone. It is easy to see that $\mathscr S^\CP$ is a pointed cone. Since $\chi$ is an isomorphism, it is enough to show that $\mathscr W^+$ is pointed. 
In general, for any pointed cone, its restriction to a subspace is also pointed. Hence pointedness of $\B(\H\otimes\H)^+$ implies pointedness of $\mathscr W^+$.
\end{proof}

\begin{lemma}
If $\W$ and $\widetilde \W$ have order units, then $\mathscr W$ has an order unit and hence $\mathscr W^+$ is generating, \emph{i.e.} $\mathscr W^+-(\mathscr W^+)=\mathscr W$.
\end{lemma}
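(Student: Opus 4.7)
My plan is to construct an explicit order unit of $\mathscr W$ out of the order units of $\W$ and $\widetilde\W$, using a rank-one completely positive map. Let $e_\W\in\W^+$ and $e_{\widetilde\W}\in\widetilde\W^+$ denote the assumed order units. In the operator-system setting used throughout the paper (where $\id\in\W$ and $\widetilde\W$ contains a faithful state), these may be taken to be the natural choices $\id$ and $\tr$, both strictly positive in the ambient $\BH$ and $\BH^*$ respectively.

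The candidate order unit will be the Choi operator of the rank-one completely positive map $\F_0:\BH\to\BH$ defined by $\F_0(X)=e_{\widetilde\W}(X)\,e_\W$. Complete positivity is immediate since $e_{\widetilde\W}\geq 0$ as a functional and $e_\W\geq 0$ as an operator. I would then verify that $\F_0\in\mathscr S^{\CP}$: its range is contained in $\vecspan(e_\W)\subseteq\W$, so $\F_0(\W)\subseteq\W$; dually, $\F_0^*$ has range in $\vecspan(e_{\widetilde\W})\subseteq\widetilde\W$, so $\F_0^*(\widetilde\W)\subseteq\widetilde\W$. Hence $e:=\phi(\F_0)\in\mathscr W^+$. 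Under the Choi--Jamio\l{}kowski identification, $e$ is (up to convention) the tensor product $e_\W\otimes e_{\widetilde\W}^{\top}\in\B(\H\otimes\H)$, and being a tensor product of strictly positive operators, $e$ is itself strictly positive, hence an order unit of the ambient $\B(\H\otimes\H)^{\sa}$.

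The order unit property then transfers to $\mathscr W$: for every $x\in\mathscr W^{\sa}$ one picks $\lambda>0$ with $\lambda e\pm x\geq 0$ in $\B(\H\otimes\H)$, and since $\mathscr W$ is a subspace containing both $e$ and $x$, both elements $\lambda e\pm x$ lie in $\mathscr W^+$. Thus $e$ is an order unit of $\mathscr W$, and the decomposition $x=\tfrac{1}{2}\bigl((\lambda e+x)-(\lambda e-x)\bigr)$ gives $\mathscr W^+-\mathscr W^+=\mathscr W$. The main obstacle I anticipate is ensuring that the hypothesized order units can be chosen strictly positive on the ambient spaces (rather than only within the subspaces they generate); once this is secured, as is automatic in the standard operator-system setting of the paper, the rest of the argument reduces to a one-line tensor-product observation.
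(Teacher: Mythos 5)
Your proposal follows essentially the same route as the paper: the candidate order unit is the rank-one completely positive map $\E = e\otimes\tilde e$ built from the order units of $\W$ and $\widetilde\W$, you verify membership in $\mathscr S^{\CP}$ the same way, and you deduce genericity from the order-unit decomposition. The only cosmetic difference is that you conclude strict positivity of the Choi operator directly from the tensor-product factorization $\phi(\E)=E^\top\otimes e$ of two strictly positive operators, whereas the paper computes $\bra v\phi(\E)\ket v$ via an explicit eigendecomposition; both also share the same implicit reliance on the order units being strictly positive in the ambient spaces, which (as you note) is automatic for $\W$ since $\id\in\W$.
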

\begin{proof}
Let $\W$ and $\widetilde \W$ have order units $e$ and $\tilde e$, respectively. Then, the map $\E=e\otimes \tilde e$ satisfies Eqs.~\eqref{eq:stability}, and hence is in $\mathscr S$. In addition $\E$ is completely positive, hence $\E\in\mathscr S^\CP$. To show that $\E$ is an order unit we check strict positivity under the Choi-Jamio\l{}kowski isomorphism $\phi(\E)$,
\begin{align}
	\phi(\E)=\sum_{ij}\tilde e(\ket i\bra j)\, \ket i\bra j\otimes e.
\end{align}
To show that this is strictly positive definite, take $\ket v=\sum_{kl}c_{kl} \ket{kl}$ and compute $\bra v\phi(\E)\ket v$, which reduces to
\begin{align}\label{eq:sum}
	\bra v\phi(\E)\ket v=\sum_k \lambda_k\, \tilde e(\ket{u_k}\bra{u_k}),
\end{align}
where $\ket{u_k}=\sum_i u_{ik}\ket i$ and $u_{ik}, \lambda_k$ are the unitaries and eigenvalues of the positive semidefinite matrix
\begin{align}
	M_{ij}=\left(\sum_k c_{ik}^* \bra{k}\right) e\left(\sum_l c_{jl}\ket{l}\right), 
\end{align}
such that $M_{ij}=\sum_k u_{ik}\lambda_k u_{jk}^*$. Since $\tilde e$ is positive definite then $\tilde e(\ket{u_k}\bra{u_k})>0$ for all $\ket{u_k}$. Since $e$ is strictly positive, then $M$ has some nonzero eigenvalue. Hence the sum in Eq.~\eqref{eq:sum} has some strictly positive summand. Thus
\begin{align}
	\phi(\E)>0.
\end{align}
This implies that $\phi(\E)\in\mathscr W$ is an order unit, hence, for any $W\in\mathscr W$ we have $\lambda>0$ such that $W+\lambda\phi(\E)>0$, and $W=(W+\lambda(\E))-\lambda\phi(\E)$, 
thus $\mathscr W=\mathscr W^+-(\mathscr W^+)$.
\end{proof}

In addition, we also point out that $\mathscr S^\CP$ is a semigroup, \emph{e.g.} for any $\E, \F\in\mathscr S^\CP$, $\E\circ \F\in \mathscr S^\CP$. We are now in position to obtain the structure of the set of maps $D:\V\rightarrow\V$ which are induced by completely positive maps in $\BH$. 

%

\begin{theorem}\label{thm:scarrafone} Let $\W\subseteq\BH$, $\widetilde\W\subseteq\BH^*$ be concrete operator systems, such that $\K=\W\cap\widetilde\W^\perp$ has no positive semidefinite elements $\K\cap\PSD=\{0\}$. Then $\V\simeq\fracc{\W}{\K}$ is a quotient operator system. Let $L:\W\rightarrow\V$ be the quotient map.

Let $D:\V\rightarrow\V$. Then, a completely positive map $\F:\BH\rightarrow\BH$ exists such that
\begin{enumerate}
	\item $\F_n(\W_n)\subseteq\W_n$,
	\item $\F_n^*(\widetilde \W_n)\subseteq\widetilde \W_n$, and 
	\item $L\circ \F|_\W=D\circ L$,
\end{enumerate}
if and only if
\begin{align}\label{scarrafo}
	D\in L\otimes\widetilde L(\mathscr S^\CP)
\end{align}
\end{theorem}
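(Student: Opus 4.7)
My proof plan is to unpack the tensor-product description of $\mathscr{S}$ obtained in the preceding lemma, together with the duality between $\fracc{\W}{\K}$ and $\fracc{(\widetilde\W+\W^\perp)}{\W^\perp}$ from Lemma~\ref{LEMMA:DUAL}. The central observation is that, under the canonical isomorphism $\V \otimes \V^* \simeq \mathcal{L}(\V)$, applying $L \otimes \widetilde L$ to any $\F \in \mathscr{S}$ (viewed as an element of $\BH \otimes \BH^*$) yields precisely the quotient-induced map $\bar\F:\V \to \V$ determined by $\bar\F(L(w)) = L(\F(w))$ for $w \in \W$. The ``extra'' summands $(\W \cap \widetilde\W^\perp) \otimes \BH^*$ and $\BH \otimes \W^\perp$ in $\mathscr{S}$ are exactly those annihilated by $L \otimes \widetilde L$, since $L(\K) = 0$ and $\ker \widetilde L = \W^\perp$; hence $L \otimes \widetilde L$ is well defined on all of $\mathscr{S}$ and lands in $\V \otimes \V^*$.

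For the ``only if'' direction, I would take a completely positive map $\F:\BH\to\BH$ satisfying conditions 1, 2 and 3 and observe that, at $n=1$, conditions 1 and 2 are exactly the stability relations defining $\mathscr{S}$, placing $\F$ in $\mathscr{S}^\CP$. Condition 2 gives $\F(\widetilde\W^\perp) \subseteq \widetilde\W^\perp$, so combined with condition 1 one has $\F(\K) \subseteq \K$, so that $\F|_\W$ descends to a well-defined linear map $\bar\F:\V \to \V$. Condition 3 says precisely $\bar\F = D$, whence $D = L \otimes \widetilde L(\F) \in L \otimes \widetilde L(\mathscr{S}^\CP)$.

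For the ``if'' direction, suppose $D = L \otimes \widetilde L(\F)$ for some $\F \in \mathscr{S}^\CP$. By definition $\F$ is completely positive on $\BH$, and the stability relations $\F(\W)\subseteq \W$ and $\F^*(\widetilde\W)\subseteq \widetilde \W$ extend to arbitrary $n$ by tensoring with $\I_n$, using $\W_n = \W \otimes \B(\mathbb{C}^n)^\sa$ and analogously for $\widetilde\W_n$; this yields conditions 1 and 2. Reading the identification $L \otimes \widetilde L(\F) = \bar\F$ in reverse, $D = \bar \F$ translates exactly into condition 3, $L \circ \F|_\W = D \circ L$.

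The only subtle point I anticipate is verifying the identification $L \otimes \widetilde L(\F) = \bar\F$ cleanly, in particular checking that $\widetilde L$ is the correct dual factor so that pairing a vector $w \in \W$ against $L \otimes \widetilde L(\F)$ produces $L(\F(w))$. This is a consequence of the duality in Lemma~\ref{LEMMA:DUAL} together with the explicit triangular decomposition of $\mathscr{S}$, and constitutes the only genuinely content-bearing step in an otherwise straightforward unpacking of definitions.
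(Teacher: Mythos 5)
Your proof is correct and follows essentially the same route as the paper: the paper also reduces the theorem to the identification of $L\otimes\widetilde L(\F)$ with the quotient-induced map (via the adjoint $\widetilde L^*$ and the triangular decomposition of $\mathscr S$), and likewise treats the converse direction as a direct unpacking. The only cosmetic difference is that you phrase the key step in terms of $\bar\F$ descending from $\F(\K)\subseteq\K$, whereas the paper writes $D = L\circ\F\circ\widetilde L^*$; both hinge on the same observation that $L\otimes\widetilde L$ annihilates the $\ker L\otimes\BH^*$ and $\BH\otimes\ker\widetilde L$ summands of $\mathscr S$.
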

\begin{proof}
	The ``if'' part is trivial. To prove the ``only if'' part, recall that $\widetilde L:\W^\perp+\widetilde\W\rightarrow\V^*$, thus the adjoint map is the natural inclusion $\widetilde L^*:\V\hookrightarrow(\W^\perp+\widetilde\W)^*$. This serves the purpose of lifting $\V$ to $\BH$, and thus
\begin{align}\label{eq:projectedquotient}
	D=L\circ \F\circ \widetilde L^*.
\end{align}
One can check that with this, maps in $\mathscr S$ satisfy the commutative relation Eq.~\eqref{eq:commutative},
\begin{align}
	D\circ L=L\circ \F\circ\widetilde L^*\circ L=L\circ \F|_\W ,
\end{align}
as expected for quotient maps. Furthermore, in the $\BH\otimes\BH^*$ notation, we have
\begin{align}
	D=L\otimes\widetilde L(\F),
\end{align}
Hence all quotient maps in $\V$ corresponding to completely positive maps in $\mathscr S$ are contained in the cone
\begin{align}\label{eq:scarrafone}
	\scarrafone =L\otimes \widetilde L(\mathscr S^\CP).
\end{align}

Notice that in principle, the range of $L\otimes \widetilde L$ is not well defined in the entire $\BH\otimes \BH^*$, and arbitrary extensions would be required. However, for each of the subspaces making up $\mathscr S$ it is well-defined,
\begin{align}
	L\otimes \widetilde L:\left\{
		\begin{array}{ll}
			\mathrm{dom\,}L\otimes\mathrm{dom\,}{\widetilde{L}}&\longrightarrow~ \V\otimes\V^*,\\
			\ker L\otimes\BH&\longrightarrow~ 0,\\
			\BH^*\otimes \ker \widetilde L&\longrightarrow~ 0.
		\end{array}\right.
\end{align}
This clarifies that Eq.~\eqref{eq:projectedquotient} is well defined. 
\end{proof}

Another relevant question is to elucidate how the structure of $\scarrafone$ determines the hierarchy $\{\C_n\}$. As we have seen, if the spaces $\W$ and $\widetilde \W$ have order units, then the resulting $\mathscr W^+$ is generating, with order unit $\E$. Then $E=L\otimes\widetilde L(\E)$ is an order unit in  $\V\otimes \V^*$, \emph{i.e.}, for any $D\in\V\otimes\V^*$ we have some $\F\in\mathscr S$ such that $D=L\otimes\widetilde L(\F)$. Then, there is always some $\lambda>0$ such that
\begin{align}
	D+\lambda E=L\otimes\widetilde L(\F+\lambda \E)\in\scarrafone.
\end{align}
Thus, $\V\otimes\V^*=\scarrafone-\scarrafone$. 

Based on Theorem \ref{thm:scarrafone}, one can readily verify that $D\in\scarrafone$ implies $D$ and its adjoint $D^*$ are completely positive in their respective operator systems.

\begin{lemma} Let $D\in\scarrafone$. Then $D_n(\C_n)\subseteq\C_n$ and $D_n^*(\C_n^*)\subseteq\C_n^*$.
\end{lemma}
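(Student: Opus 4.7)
The plan is to unpack the assumption $D\in\scarrafone$ via Theorem~\ref{thm:scarrafone}, obtain an intertwining relation between $D$ and a lift $\F\in\mathscr S^\CP$, and then push forward positivity through the quotient maps $L$ and $\widetilde L$. The two claims split symmetrically under the primal/dual pair of Lemma~\ref{LEMMA:DUAL}.

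For the first inclusion, $D \in \scarrafone$ means there exists a completely positive $\F:\BH\to\BH$ with $\F(\W)\subseteq\W$, $\F^*(\widetilde\W)\subseteq\widetilde\W$, and $D\circ L = L\circ\F|_{\W}$. Tensoring with $\id_n$ preserves both the intertwining and complete positivity, so $D_n\circ L_n = L_n\circ\F_n|_{\W_n}$ and $\F_n$ is positive on $\BH_n$ with $\F_n(\W_n)\subseteq\W_n$. Given $x\in\C_n=L_n(\W_n^+)$, pick $w\in\W_n^+$ with $x=L_n(w)$; then $\F_n(w)\in\W_n$ (by stability) and $\F_n(w)\geq0$ (by complete positivity), so $\F_n(w)\in\W_n^+$ and $D_n(x) = L_n(\F_n(w))\in L_n(\W_n^+)=\C_n$.

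For the second inclusion, the key preparatory observation is that $\F^*$ stabilizes $\widetilde\W+\W^\perp$. Indeed, $\F^*(\widetilde\W)\subseteq\widetilde\W$ is assumed, while $\F(\W)\subseteq\W$ gives $\F^*(\W^\perp)\subseteq\W^\perp$ by standard adjoint reasoning ($\langle\F^*(\phi),w\rangle = \langle\phi,\F(w)\rangle = 0$ for $\phi\in\W^\perp$, $w\in\W$). Moreover $\F^*$ is CP whenever $\F$ is. Next I derive the dual intertwining $D^*\circ\widetilde L = \widetilde L\circ \F^*|_{\widetilde\W+\W^\perp}$ by checking the pairing: for $\phi\in\widetilde\W+\W^\perp$ and $w\in\W$,
\begin{align*}
  \langle D^*\widetilde L(\phi), L(w)\rangle
     &= \langle \widetilde L(\phi), D(L(w))\rangle
      = \langle \widetilde L(\phi), L(\F(w))\rangle
      = \phi(\F(w)) \\
     &= \F^*(\phi)(w)
      = \langle \widetilde L(\F^*(\phi)), L(w)\rangle,
\end{align*}
where the middle equalities invoke the natural pairing of the mutually dual quotients $\W/\K$ and $(\widetilde\W+\W^\perp)/\W^\perp$ established in Lemma~\ref{LEMMA:DUAL}. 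This identity tensors to $D_n^*\circ\widetilde L_n = \widetilde L_n\circ \F_n^*|_{(\widetilde\W+\W^\perp)_n}$. Now for $y\in\C_n^* = \widetilde L_n\bigl((\widetilde\W_n+\W_n^\perp)^+\bigr)$, write $y=\widetilde L_n(\phi)$ with $\phi\in(\widetilde\W_n+\W_n^\perp)^+$; stability and complete positivity of $\F^*$ give $\F_n^*(\phi)\in(\widetilde\W_n+\W_n^\perp)^+$, whence $D_n^*(y)=\widetilde L_n(\F_n^*(\phi))\in\C_n^*$.

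The only subtle point — and the one I would verify most carefully — is the dual intertwining, because $\widetilde L$ is not the literal adjoint of $L$ as maps between $\W$ and $\V$ but rather between the ``enlarged'' space $\widetilde\W+\W^\perp$ and $\V^*$; however, Lemma~\ref{LEMMA:DUAL} already identifies the relevant dual pairing, so the argument above is essentially a bookkeeping exercise in that identification. The remaining ingredients (complete positivity of $\F^*$, stability of $\W^\perp$ under $\F^*$) are entirely standard.
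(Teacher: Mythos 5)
Your first inclusion is exactly the paper's argument: lift $x\in\C_n$ to $\omega\in\W_n^+$, push through the intertwining $D_n\circ L_n = L_n\circ\F_n$, and land back in $\C_n$. For the second inclusion, however, you take a genuinely different (and substantially longer) route than the paper. The paper simply observes that the second claim is a formal consequence of the first one: once $D_n(\C_n)\subseteq\C_n$ is known, for any $y\in\C_n^*$ and any $x\in\C_n$ one has $D_n^*(y)(x) = y(D_n(x))\geq 0$, hence $D_n^*(y)\in\C_n^*$ by the very definition of the dual cone. No appeal to $\F^*$, its complete positivity, or the dual-quotient intertwining is needed. Your approach instead constructs an explicit dual lift — establishing $\F^*(\widetilde\W+\W^\perp)\subseteq\widetilde\W+\W^\perp$, complete positivity of $\F^*$, and the relation $D_n^*\circ\widetilde L_n = \widetilde L_n\circ\F_n^*$ — and then mirrors the primal argument. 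This is correct, and it has the virtue of exhibiting the explicit dual SDR representation from Lemma~\ref{LEMMA:DUAL} rather than relying on the abstract duality of cones; it also shows that the dual statement is ``self-standing'' and not merely derived from the primal one. But it is more machinery than the result requires, and you should be aware that the one-line duality argument suffices.
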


\begin{proof}Let $D\in\scarrafone$. Then, there is a completely positive map $\F\in\mathscr W^+$ such that $D\circ L=L\circ \F$ (and hence $D_n\circ L_n=L_n\circ \F_n$). Also, for any $x\in\C_n$ there is $\omega\in\W_n^+$ such that $x=L_n(\omega)$. Thus,
\begin{align}
	D_n(x)=D_n\circ L_n(\omega)=L_n\circ \F_n(\omega)\in L_n(\W_n^+).
\end{align}
This also implies that for any $y\in\C_n^*$, and for any $x\in\C_n$, we have
\begin{align}
	D_n^*(y)(x)=y\circ D_n(x)\geq0,
\end{align}
hence $D_n^*(y)\in\C_n^*$.
\end{proof}

With this, we now consider the rank-1 elements in $\scarrafone$. These elements can be regarded as probabilistic preparations, as they produce effects (states) independent of the effect (state) they (their adjoints) act upon. Let the set of rank-1 elements in $\scarrafone$ be denoted as $\mathscr X$. Also, let $\mathcal A, \mathcal B$ be vector spaces and let $A,B$ be respective subsets. Then, we denote by 
\begin{align}
	A\odot B=\{x\in\mathcal A\otimes\mathcal B:\exists a\in A,\exists b\in  B\mathrm{~such~that~} x=a\otimes b\}
\end{align}
the set of tensor products from $A$ and $B$.

\begin{lemma} $\mathscr X=\C^*\odot \C$.
\end{lemma}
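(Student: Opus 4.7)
The plan is to prove the equality $\mathscr X = \C^* \odot \C$ by establishing both inclusions. A rank-one element of $\scarrafone \subseteq \V\otimes\V^*$ represents a linear map $D(x) = b(x)\,a$ with $a \in \V$ and $b \in \V^*$, which we write as $a \otimes b$; under the natural identification of $\V \otimes \V^*$ with $\V^* \otimes \V$, the content of the lemma is that $D \in \mathscr X$ precisely when $a \in \C$ and $b \in \C^*$.

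For the inclusion $\mathscr X \subseteq \C^* \odot \C$, I would invoke the immediately preceding lemma, which yields $D(\C) \subseteq \C$ for any $D \in \scarrafone$. Applied to the rank-one $D = a \otimes b$ this reads $b(x)\,a \in \C$ for every $x \in \C$. If $a = 0$ then $D = 0 \in \C^* \odot \C$ trivially. Otherwise, pointedness of $\C$---which holds because $\C$ is proper by Lemma~\ref{LEM:ORDERUNIT} applied to the regular realization---forbids $b$ from attaining both strictly positive and strictly negative values on $\C$, since otherwise both $a$ and $-a$ would belong to $\C$. If $b$ vanishes identically on $\C$, then it vanishes on $\C - \C = \V$ (using that $\C$ is generating) and $D = 0$; otherwise, after possibly replacing $(a,b)$ by $(-a,-b)$, we have $b \in \C^*$, and any $x_0 \in \C$ with $b(x_0) > 0$ witnesses $a = b(x_0)^{-1}\bigl(b(x_0)\,a\bigr) \in \C$.

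For the converse $\C^* \odot \C \subseteq \mathscr X$, given $a \in \C$ and $b \in \C^*$, I would exhibit a concrete $\F \in \mathscr S^\CP$ whose quotient image is $a \otimes b$. Since $\C = L(\W^+)$, pick $\omega \in \W^+$ with $L(\omega) = a$; by Lemma~\ref{LEMMA:DUAL}, pick $\chi \in (\widetilde\W + \W^\perp)^+$ with $\widetilde L(\chi) = b$. Define the rank-one map $\F:\BH\to\BH$ by $\F(X) = \chi(X)\,\omega$. This map is completely positive because its Choi matrix factors as the tensor product $\omega \otimes \chi \geq 0$ of two positive operators; viewed as an element of $\BH \otimes \BH^*$, $\F = \omega \otimes \chi$ lies in the subspace $\W \otimes (\widetilde\W + \W^\perp) \subseteq \mathscr S$, so indeed $\F \in \mathscr S^\CP$. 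Finally $L \otimes \widetilde L(\F) = L(\omega) \otimes \widetilde L(\chi) = a \otimes b$, which is manifestly rank one and therefore belongs to $\mathscr X$.

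The nontrivial ingredient is Lemma~\ref{LEMMA:DUAL}, which furnishes the positive lift of $b$ into the enlarged subspace $\widetilde\W + \W^\perp$: insisting on a positive lift $\chi \in \widetilde\W^+$ is in general impossible, because $b \in \C^*$ viewed as a positive functional on the quotient operator system need not extend to a positive functional supported on $\widetilde\W$ alone. It is exactly this relaxation that allows the elementary rank-one ansatz $\F = \omega \otimes \chi$ to be simultaneously completely positive, resident in $\mathscr S$, and of the prescribed quotient image. The forward direction, by contrast, is purely algebraic, with pointedness of $\C$ playing the key role in ruling out the spurious sign ambiguity in the factor $a$.
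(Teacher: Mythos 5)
Your proof is correct. The easy inclusion $\C\odot\C^*\subseteq\mathscr X$ is the same as the paper's: lift $a\in\C$ and $b\in\C^*$ to $\omega\in\W^+$ and $\chi\in(\widetilde\W+\W^\perp)^+$, form the manifestly completely positive rank-one map $\omega\otimes\chi\in\mathscr S^\CP$, and apply $L\otimes\widetilde L$. For the reverse inclusion $\mathscr X\subseteq\C\odot\C^*$ you take a genuinely different route. You argue directly on a rank-one $D=a\otimes b$: from $D(\C)\subseteq\C$ you get $b(x)\,a\in\C$ for all $x\in\C$; pointedness of $\C$ (guaranteed by Lemma~\ref{LEM:ORDERUNIT}) rules out $b$ changing sign on $\C$; $\C$ generating rules out $b\equiv0$ on $\C$ unless $D=0$; and a single sign flip then lands $b\in\C^*$ and $a\in\C$. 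The paper instead uses an algebraic sandwich: it chooses a helper $X_0=a_0\otimes b_0\in\C\odot\C^*$ with $\nu(a_0)>0$ and $b_0(\mu)>0$ for the given $X=\mu\otimes\nu$, computes $XX_0X=X(a_0)\otimes X^*(b_0)=\nu(a_0)\,b_0(\mu)\,X$, observes that this lies in $\C\odot\C^*$ because $X(\C)\subseteq\C$ and $X^*(\C^*)\subseteq\C^*$, and divides by the positive scalar. Your version is more elementary and makes explicit the sign normalization that the paper passes over when asserting the existence of $X_0$ (one must flip $(\mu,\nu)\mapsto(-\mu,-\nu)$ if necessary); the paper's sandwich packages that normalization into the choice of $X_0$ and uses only the stability of $\C$ and $\C^*$ under $X$ rather than pointedness. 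Both reach the same conclusion and rely on the same two preceding lemmas (stability of $\C$, $\C_n$, $\C^*$ under $\scarrafone$, and the dual description of $\C^*$).
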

\begin{proof} We first show $\C\odot\C^*\subseteq\mathscr X$. Since all elements in $\C\odot\C^*$ are of rank 1, it is enough to show $\C\odot\C^*\subseteq\scarrafone$.
Let $D\in\C\odot\C^*$ be of the form $D=L(x)\otimes \tilde L(y)$ for some $x\in \W^+$ and $y\in(\W^\perp+\widetilde\W)^+$. Then $x\otimes y\in\W^+\otimes(\W^\perp+\widetilde\W)^+$. Then clearly $x\otimes y\in\mathscr S^\CP$, and thus $D=L\otimes\tilde L(x\otimes y)\in\scarrafone$.

We now show $\mathscr X\subseteq\C\odot\C^*$. 
Clearly, $X(\C)\subseteq\C$ and $X^*(\C^*)\subseteq\C^*$ for all $X=\mu\otimes\nu\in\mathscr X$. Since $\C\odot\C^*\subseteq\mathscr X$, and $\C$, $\C^*$ are generating, it is possible to chose $X_0=a\otimes b$ with $a\in\C$, $b\in\C^*$ such that $\nu(a)>0$, $b(\mu)>0$. Then, for all $X=\mu\otimes\nu\in\mathscr X$ we have
\begin{align}
	X X_0X=X(a)\otimes X^*(b)\in\C\odot \C^*.
\end{align}
But also $XX_0X=\nu(a)b(\mu) X$. Hence $\nu(a)b(\mu) X\in\C\odot\C^*$, thus $X\in\C\odot\C^*$.
\end{proof}

This result suggests that once $\scarrafone$ is given, the cones $\C$ and $\C^*$ are naturally embedded into it. This is true, but it is not the end of the story. Indeed, the cones $\C$ and $\C^*$ are obtained in two different ways. 
\begin{lemma} Let $\scarrafone=L\otimes \tilde L(\mathscr W^+)$ be an SDR mapping cone of the type Eq.~\eqref{eq:scarrafone}. Then the cones $\C$ and $\C^*$ are projections of $\scarrafone$.
\end{lemma}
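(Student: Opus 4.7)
The plan is to make precise what ``projection of $\scarrafone$'' means: since $\scarrafone \subseteq \V\otimes \V^*$ and elements of $\V\otimes\V^*$ act as linear maps $\V\to\V$, the two natural projections onto the factors are the evaluation maps
\begin{align*}
  \phi_{\tau_0} : \V\otimes\V^* &\longrightarrow \V,      & D &\longmapsto D(\tau_0),\\
  \psi_{\pi_0}  : \V\otimes\V^* &\longrightarrow \V^*,    & D &\longmapsto D^*(\pi_0),
\end{align*}
for fixed $\tau_0\in\V$ and $\pi_0\in\V^*$. My claim, to be proved, is that for appropriately chosen $\tau_0\in\C$ and $\pi_0\in\C^*$ one has $\phi_{\tau_0}(\scarrafone)=\C$ and $\psi_{\pi_0}(\scarrafone)=\C^*$.

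First I would fix $\tau_0$ and $\pi_0$ in the relative interiors of $\C$ and $\C^*$, which is legitimate because $\C=\fracc{\W^+}{\K}$ is a proper SDR cone by Lemma~\ref{LEMMA:DUAL} (the hypothesis $\K\cap\PSD=\{0\}$ is in force, and $\widetilde\W$ contains the order unit of $\BH^*$, making $\C^*$ proper as well). After rescaling, I may assume $\pi_0(\tau_0)=1$, and moreover $\pi(\tau_0)>0$ for every nonzero $\pi\in\C^*$, and symmetrically $\pi_0(\tau)>0$ for every nonzero $\tau\in\C$; this is exactly the strict positivity that an interior point of a proper cone pairs with on the dual.

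Next, the inclusions $\phi_{\tau_0}(\scarrafone)\subseteq\C$ and $\psi_{\pi_0}(\scarrafone)\subseteq\C^*$ are immediate from the preceding lemma, which states that every $D\in\scarrafone$ satisfies $D(\C)\subseteq\C$ and $D^*(\C^*)\subseteq\C^*$. For the reverse inclusions I would use the characterization of rank-one elements already obtained, $\mathscr X = \C\odot\C^*$: given any $y\in\C$, the tensor $D=y\otimes\pi_0$ lies in $\mathscr X\subseteq\scarrafone$ and produces
\begin{equation*}
  \phi_{\tau_0}(D) = \pi_0(\tau_0)\,y = y,
\end{equation*}
so $\C\subseteq\phi_{\tau_0}(\scarrafone)$. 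Symmetrically, for $\eta\in\C^*$ the element $D=\tau_0\otimes\eta\in\C\odot\C^*\subseteq\scarrafone$ satisfies $\psi_{\pi_0}(D)=\pi_0(\tau_0)\,\eta=\eta$, giving $\C^*\subseteq\psi_{\pi_0}(\scarrafone)$.

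There is no real obstacle: the substantive content is carried by the two preceding lemmas. The lemma establishing $D(\C)\subseteq\C$ and $D^*(\C^*)\subseteq\C^*$ supplies one inclusion, while the identification $\mathscr X=\C\odot\C^*$ supplies enough rank-one maps to realize every element of $\C$ and $\C^*$ as a value of the evaluation projections. The only point requiring a moment's care is the initial choice of interior points $\tau_0,\pi_0$ to guarantee $\pi_0(\tau_0)\neq 0$ (so that the rescaling is legal), which is where properness of $\C$ from Lemma~\ref{LEMMA:DUAL} is used.
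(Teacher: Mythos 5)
Your proof is correct and is essentially the same as the paper's: both define the projections as the evaluation maps $D\mapsto D(\tau_0)$ and $D\mapsto D^*(\pi_0)$ for suitably chosen interior elements $\tau_0\in\C$ and $\pi_0\in\C^*$ (the paper obtains these from a supporting hyperplane of the pointed cone $\scarrafone$, you obtain them directly from properness of $\C$; the choices are equivalent), then derive one inclusion from $D(\C)\subseteq\C$, $D^*(\C^*)\subseteq\C^*$, and the other from the identification $\mathscr X=\C\odot\C^*$. A minor cosmetic point: properness of $\C^*$ follows automatically from properness of $\C$ in finite dimensions, so the extra remark about $\widetilde\W$ containing an order unit is not needed for that step.
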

\begin{proof} Since $\scarrafone$ is pointed, it has a supporting hyperplane $F:\V\otimes \V^*\rightarrow \mathbb{R}$ such that $F(\scarrafone)>0$. In particular, one of such supporting hyperplanes is given by $F=a\otimes b$ where $a\in\C^*$ (resp. $b\in\C$) is a supporting hyperplane of $\C$ (resp. $\C^*$). Then, the maps $\hat a:\V\otimes\V^*\rightarrow\V^*$ defined as $\hat a(D)=D^*(a)$ and $\hat b:\V\otimes\V^*\rightarrow\V$ as $\hat b(D)=Db$ we have
\begin{align}
	\hat a(\scarrafone)=\C^*\qquad\mathrm{and}\qquad \hat b(\scarrafone)=\C.
\end{align}
We will prove only the first identity. One inclusion is clear,
\begin{align}
	\C^*= a(\C)\C^*=\hat a(\C\odot\C^*)=\hat a(\mathscr X)\subseteq \hat a(\scarrafone).
\end{align}
The other inclusion is shown as follows. Take $D\in\scarrafone$ and $w\in\C$. Then $D(\omega)\in\C$, thus $\hat a(D)(\omega)=D^*(a)\omega=a\circ D(\omega)\geq0$, so $\hat a(D)\in\C^*$. Thus $\hat a(\scarrafone)\subseteq\C^*$. The other relation is proven analogously.
\end{proof}

Finally, the set $\C$, $\C^*$ are also restrictions of $\scarrafone$.
\begin{lemma}  Let $\scarrafone=L\otimes \tilde L(\mathscr W^+)$ be an SDR mapping cone of the type Eq.~\eqref{eq:scarrafone}. Then the cones $\C$ and $\C^*$ are restrictions of $\scarrafone$ to suitable subspaces.
\end{lemma}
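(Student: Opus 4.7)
My plan is to exhibit two explicit linear subspaces of $\V\otimes\V^*$ whose intersections with $\scarrafone$ coincide with $\C$ and $\C^*$ under natural identifications. The candidates suggest themselves from the distinguished vectors of the quasi-realization itself: $\tau\in\C$ and $\pi\in\C^*$, which satisfy the crucial pairing $\pi(\tau)=p(\emptyword)=1>0$.

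Concretely, I would define
\[
  \mathcal U_\pi := \V\otimes\mathrm{span}\{\pi\},\qquad
  \mathcal U_\tau := \mathrm{span}\{\tau\}\otimes\V^*,
\]
together with the canonical linear isomorphisms $\mathcal U_\pi\ni x\otimes\pi\mapsto x\in\V$ and $\mathcal U_\tau\ni\tau\otimes\nu\mapsto\nu\in\V^*$. The claim then reduces to showing
\[
  \scarrafone\cap\mathcal U_\pi = \C\otimes\mathrm{span}\{\pi\},\qquad
  \scarrafone\cap\mathcal U_\tau = \mathrm{span}\{\tau\}\otimes\C^*.
\]

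For the $\supseteq$ inclusions, I would invoke the rank-one characterization: for $x\in\C$ the map $x\otimes\pi$ lies in $\mathscr X=\C\odot\C^*\subseteq\scarrafone$, and symmetrically $\tau\otimes\nu\in\scarrafone$ whenever $\nu\in\C^*$. For the $\subseteq$ direction, suppose $D=x\otimes\pi\in\scarrafone$; by the preceding lemma $D(\C)\subseteq\C$, so $D(\tau)=\pi(\tau)\,x\in\C$, and since $\pi(\tau)=1>0$ we conclude $x\in\C$. The dual case is strictly analogous: applying $D^*(\C^*)\subseteq\C^*$ to the functional $\pi$ yields $D^*(\pi)=\pi(\tau)\,\nu\in\C^*$, hence $\nu\in\C^*$.

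I do not expect any serious obstacle: the proof is a direct combination of the rank-one description $\mathscr X=\C\odot\C^*$ with the preservation properties $D(\C)\subseteq\C$ and $D^*(\C^*)\subseteq\C^*$ established in the two immediately preceding lemmas. The only point requiring care is selecting the slicing vectors from the quasi-realization itself, so that the pairing $\pi(\tau)=1>0$ holds automatically and removes any need for an extra interior-point or order-unit argument.
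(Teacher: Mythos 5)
Your argument is correct and structurally parallel to the paper's own proof; the only real difference is the choice of slicing vector. The paper fixes an order unit of $\C$ (resp.\ $\C^*$) as the rigid tensor factor, which pairs strictly positively with every nonzero element of the dual cone; you instead use the realization's own vectors $\tau\in\C$ and $\pi\in\C^*$, which are available in context and satisfy the single pairing $\pi(\tau)=1>0$ that your argument actually needs. Both proofs combine the same two ingredients from the preceding lemmas --- the rank-one description $\mathscr X=\C\odot\C^*$ for the inclusion $\supseteq$, and the preservation $D(\C)\subseteq\C$, $D^*(\C^*)\subseteq\C^*$ evaluated at the fixed element for $\subseteq$. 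Your variant trades generality for concreteness: the paper's statement is intrinsic to the mapping cone $\scarrafone$ and needs no realization in sight (hence it invokes order units, whose existence follows from properness), whereas yours conveniently recycles $\tau$ and $\pi$ but therefore rests on data external to $\scarrafone$. One notational nit: $\C\otimes\mathrm{span}\{\pi\}$, taken literally as a set of simple tensors with one factor ranging over a line, is all of $\mathcal U_\pi$ because $\C$ is generating; what you mean, and what you actually prove, is that $\scarrafone\cap\mathcal U_\pi=\{x\otimes\pi:x\in\C\}$ and similarly $\scarrafone\cap\mathcal U_\tau=\{\tau\otimes\nu:\nu\in\C^*\}$, which under the obvious identifications give $\C$ and $\C^*$.
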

\begin{proof} Let $a$ be an order unit of $\C^*$. Then $\scarrafone|_{a\otimes \V^*}=\{D\in\scarrafone|\exists b\in\V^*\mathrm{~s. t.~} D=a\otimes b\}$ is a subset of $\mathscr X$, with elements of the form $a\odot \C^*$. The natural isomorphism $a\otimes\V^*\rightarrow\V^*$ takes this onto $\C^*$. A similar argument shows how $\scarrafone|_{\V\otimes b}=\C\odot b$, with $b$ an order unit of $\C$.
\end{proof}

\section{Characterization of completely positive realizations by SDR mapping cones}
\label{sec:ogni-scarraffon-e-bbello-a-mmamma-suia}
So far we have derived a set of necessary conditions which follow 
from the hypothesis that an underlying completely positive realization 
exists. In this section we show that these are also sufficient.

\begin{proposition}[Removing spurious eigenvectors]
\label{sieve} 
Let $\{\E^{(u)}\}$ be a set of completely positive maps on $\BH$ with 
$\E=\sum_u \E^{(u)}$, and let $\rho$, $\id$ be positive semidefinite operators 
in $\BH$ such that $\tr \rho\id=1$. 
If $\omega$ is a positive semidefinite eigenvector of $\E$ such that 
$\tr \rho\omega =0$, then there is always another set of completely positive maps 
$\{\hat\E^{(u)}\}$ on $\B(\ker(\omega))$ and positive semidefinite operators 
$\hat\rho$, $\hat\id\in\B(\ker(\omega))$ such that
\begin{equation}
	\tr[\rho\, \E^{(\uu)}(\id)]=\tr[\hat \rho\, \hat \E^{(\uu)}(\hat \id)].
\end{equation}
for all $\uu\in\M^*$.
\end{proposition}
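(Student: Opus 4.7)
My plan is to restrict the whole computation to the corner $P\BH P$, where $P$ is the orthogonal projection onto $\ker(\omega)$ (and $Q:=\id-P$ onto $\supp(\omega)$). First I observe that $\rho,\omega\geq 0$ together with $\tr\rho\omega=0$ force $\supp(\rho)\subseteq\ker(\omega)$, i.e.\ $\rho=P\rho P$. Next, from the eigenvector hypothesis $\E(\omega)=\lambda\omega$ (and $\lambda\geq 0$ because $\E$ is CP and $\omega\geq 0$), iteration gives $\sum_{\uu\in\M^\ell}\E^{(\uu)}(\omega)=\lambda^\ell\omega$, and since each summand is PSD we get $0\leq \E^{(\uu)}(\omega)\leq \lambda^\ell\omega$ for every word $\uu$. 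Because $\omega$ is strictly positive on its support, there is $c>0$ with $Q\leq c\,\omega$, so $\E^{(\uu)}(Q)\leq c\lambda^\ell\omega$; in particular $\E^{(u)}(Q)$ is supported inside $\supp(\omega)$ for every single letter $u$. Equivalently, the corner subalgebra $Q\BH Q$ is invariant under every $\E^{(u)}$.

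The crucial structural step is to translate this invariance into a block form for the Kraus operators. Write $\E^{(u)}(X)=\sum_k A_k^{(u)}XA_k^{(u)*}$ and decompose each Kraus operator in the $(P,Q)$-block form $A_k^{(u)}=\bigl(\begin{smallmatrix}a_k & b_k\\ c_k & d_k\end{smallmatrix}\bigr)$. Applying $\E^{(u)}$ to $QXQ$ and reading off the $(P,P)$-block yields $\sum_k b_k X_{QQ}b_k^*$; the invariance of $Q\BH Q$ forces this to vanish for every PSD $X_{QQ}$, in particular for $X_{QQ}=I_Q$, so $\sum_k b_kb_k^*=0$ and hence $b_k=0$ for all $k$. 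Thus every Kraus operator is block lower-triangular,
\begin{equation}
A_k^{(u)}=\begin{pmatrix} a_k^{(u)} & 0\\ c_k^{(u)} & d_k^{(u)}\end{pmatrix}.
\end{equation}
Define $\hat\E^{(u)}:\B(\ker\omega)\to\B(\ker\omega)$ by $\hat\E^{(u)}(Y):=\sum_k a_k^{(u)}Y a_k^{(u)*}$; this is manifestly completely positive. A direct block computation gives $(\E^{(u)}(X))_{PP}=\hat\E^{(u)}(X_{PP})$, which iterates to $(\E^{(\uu)}(X))_{PP}=\hat\E^{(\uu)}(X_{PP})$ for any word $\uu\in\M^*$.

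Setting $\hat\rho:=P\rho P$ (viewed as a PSD operator on $\ker\omega$) and $\hat\id:=P\id P$, the fact that $\rho=P\rho P$ yields $\tr[\rho M]=\tr[\hat\rho\,M_{PP}]$ for every $M\in\BH$, and specialising to $M=\E^{(\uu)}(\id)$ gives $\tr[\rho\E^{(\uu)}(\id)]=\tr[\hat\rho\,\hat\E^{(\uu)}(\hat\id)]$, the desired identity. Normalisation is preserved because $\tr\hat\rho\hat\id=\tr[\rho(P\id P)]=\tr[\rho\id]-\tr[\rho Q\id Q]=1$, the last equality using $\rho Q=0$. The only delicate step is the Kraus-triangularisation, and this in turn is the one place where the proof uses the full strength of $\omega$ being a genuine PSD eigenvector: exactness of $\E(\omega)=\lambda\omega$ is what lets the bound $\E^{(u)}(Q)\leq c\lambda\omega$ produce an \emph{invariant} (not merely approximately invariant) corner of the operator algebra.
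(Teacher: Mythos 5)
Your proof is correct and follows essentially the same route as the paper: both use $\E(\omega)=\lambda\omega$, positivity, and the identity $\E=\sum_u\E^{(u)}$ to show that the corner $Q\BH Q$ (with $Q$ the projector onto $\supp(\omega)$) is invariant under each $\E^{(u)}$, and then compress everything to the complementary corner on $\ker(\omega)$. The paper phrases the key triangularisation step abstractly as $\P\circ\E^{(\uu)}=\P\circ\E^{(\uu)}\circ\P$ for the hereditary projection $\P:X\mapsto PXP$, attributing it to ``positivity''; your explicit Kraus-block argument (showing $b_k=PA_kQ=0$) is exactly the unpacking of that step, so the two proofs are the same in substance with yours merely more detailed.
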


\begin{proof}
Let $\mathcal{S}=\ker(\omega)$ and $\mathcal Q=\mathrm{range}(\omega)=\mathcal{S}^\perp$ its orthogonal complement. Let $P$  (resp. $Q$) be the corresponding orthogonal projection in $\H$, and $\P=P\,\cdot\,P$, (resp. $\Q$) the hereditary projection on $\BH$.
Since $\omega$ is a positive semidefinite eigenvector, we have 
that $\E\circ\Q=\Q\circ\E\circ\Q$. From positivity, this extends to all $\E^{(\uu)}$ and thus
\begin{equation}
	\P\circ \E^{(\uu)}=\P\circ \E^{(\uu)}\circ\P,\quad\forall \uu\in\M^*.
\end{equation}
From orthogonality of $\rho\geq0$ and $\omega\geq0$ it follows that $\rho=\P(\rho)$ and we can write
\begin{eqnarray}\nonumber
	p(\uu)&=&\,\tr[\rho\, \P \E^{(\uu)}(\id)]\\
\label{projectedCPR}
	&=&\,\tr[\rho\, \P\E^{(u_1)}\P\circ\P\E^{(u_2)}\P\cdots\P\E^{(u_\ell)}\P(\id)].
\end{eqnarray}
Replace $\H\leftarrow \mathcal{P}$, $\BH\leftarrow \B(\mathcal{P})$ and 
\begin{subequations}
\begin{eqnarray}
	\E^{(u)}&\leftarrow&\P\E^{(u)}\P\\
	\id&\leftarrow &\P(\id)\\
	\rho&\leftarrow&\P(\rho).
\end{eqnarray}
\end{subequations}
The resulting maps are still completely positive and  $\rho$, $\id$ are positive semidefinite with support in $\B(\mathcal{S})$, thus the new $\id$ has $\tr[\id\omega]=0$. In addition, from Eq.~\eqref{projectedCPR}, they generate the same process. 
\end{proof}

\begin{theorem}[``\,'O scarrafone'']
\label{mainresult}
Given a pseudo-realization $\mathcal{R}=(\V,\pi,\D,\tau)$, an equivalent, 
finite-dimensional, unital, completely positive realization $(\BH^\sa,$ $\rho,\E,\id)$ exists
if and only if there is an SDR cone $\scarrafone \subset \V\otimes \V^*$ such that
\begin{enumerate}
	\item $\D^{(u)}\in\scarrafone$ for all $u\in\M$, 
	\item $\tau\in \C$,
	\item $\pi\in \C^*$.
\end{enumerate}
where $\C$, $\C^*$ and $\scarrafone$ are of type \eqref{states}, \eqref{forms} and \eqref{scarrafo}, respectively.
\end{theorem}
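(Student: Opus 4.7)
For necessity, given an equivalent CP realization $\mathcal Q = (\BH^\sa, \rho, \E, \id)$, the strategy is to verify directly that the SDR mapping cone associated to $\mathcal Q$ witnesses all three conditions. I would set $\W = \vecspan\{\E^{(\uu)}(\id):\uu\in\M^*\}$ and $\widetilde\W = \vecspan\{\rho\circ\E^{(\uu)}:\uu\in\M^*\}$; these are concrete operator systems containing $\id$ and $\rho$ as order units. By Theorem~\ref{THM:ISOMORPHISM}, $\V$ is (up to isomorphism) the quotient of $\W$ by $\K = \W\cap\widetilde\W^\perp$, and $\K\cap\PSD = \{0\}$ by Lemma~\ref{LEM:ORDERUNIT}. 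Each $\E^{(u)}$ preserves $\W$ and its adjoint preserves $\widetilde\W$, so $\E^{(u)}\in\mathscr S^\CP$; Theorem~\ref{thm:scarrafone} then gives $\D^{(u)}=L\otimes\widetilde L(\E^{(u)})\in\scarrafone$. Finally, $\tau=L(\id)\in\C$ and $\pi=\widetilde L(\rho)\in\C^*$.

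For sufficiency, I would first unpack the SDR representation $\scarrafone=L\otimes\widetilde L(\mathscr S^\CP)$ and pick, for each $u$, a CP lift $\F^{(u)}\in\mathscr S^\CP$ with $L\circ\F^{(u)}|_\W=\D^{(u)}\circ L$; from $\tau\in\C$ and $\pi\in\C^*$ pick $T\in\W^+$ with $L(T)=\tau$ and $\chi\in(\widetilde\W+\W^\perp)^+$ with $\widetilde L(\chi)=\pi$. The canonical pairing of Lemma~\ref{LEMMA:DUAL}, $\widetilde L(\chi)(L(X))=\chi(X)$ for $X\in\W$, together with the iterated commutative relation $L(\F^{(\uu)}(T))=\D^{(\uu)}(\tau)$, yields
\[
  p(\uu) \;=\; \pi\bigl(\D^{(\uu)}(\tau)\bigr) \;=\; \chi\bigl(\F^{(\uu)}(T)\bigr), \qquad \forall\,\uu\in\M^*.
\]
This already looks like a CP realization, except that $T$ need not be the identity of $\BH$, $\chi$ need not be a stationary state of $\F^*=(\sum_u\F^{(u)})^*$, and $\F=\sum_u\F^{(u)}$ fixes $T$ only modulo $\K$.

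The next step is a Cesàro averaging to produce genuine fixed points. Since $\F$ preserves both $\W$ and $\widetilde\W^\perp$, it preserves $\K$, so the affine fiber $(T+\K)$ is $\F$-invariant; as the orbit $\{\F^n(T)\}$ lies in the bounded slice $(T+\K)\cap\W^+$ (bounded because $\K\cap\W^+=\{0\}$), the mean $T^\infty=\lim_N\frac1N\sum_{n=0}^{N-1}\F^n(T)$ converges in $\W^+$, satisfies $\F(T^\infty)=T^\infty$, and $L(T^\infty)=\tau$. A dual Cesàro construction on $\F^*$ applied to $\chi$ yields $\chi^\infty\in(\widetilde\W+\W^\perp)^+$ with $\F^*(\chi^\infty)=\chi^\infty$ and $\widetilde L(\chi^\infty)=\pi$, and one still has $p(\uu)=\chi^\infty(\F^{(\uu)}(T^\infty))$. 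When $T^\infty$ is strictly positive on $\H$, the CP sandwich $T'(X)=\sqrt{T^\infty}\,X\,\sqrt{T^\infty}$ is invertible and the conjugations $\hat\E^{(u)}=T'^{-1}\circ\F^{(u)}\circ T'$ are CP, sum to a unital map (using $T'^{-1}(\F(T^\infty))=\id$), and admit the invariant density $\hat\rho=(T')^*(\chi^\infty)$: this is the desired realization.

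The main obstacle is the non-invertible case, to be reduced to the invertible case by sieving. The plan is to let $\omega$ be the projection onto $\ker T^\infty$, to verify that $\omega$ is a PSD eigenvector of $\F$ (a consequence of the block-triangular structure CP maps must have with respect to the support of any PSD fixed point, via a Schwarz-type argument), and to verify $\chi^\infty(\omega)=0$ (forced by the stationarity $\chi^\infty(T^\infty)=1$ together with the absence of spurious contributions through $\omega$). Proposition~\ref{sieve} then produces an equivalent CP system on $\B(\mathrm{range}\,T^\infty)$ where $T^\infty$ becomes strictly positive, at which point the sandwich construction above closes the proof. The delicate technical core is precisely the verification that the kernel of $T^\infty$ can be peeled off cleanly, requiring careful bookkeeping of the interplay between $\W$, $\widetilde\W$ and the support projections; the remaining ingredients (Cesàro averaging, the Arveson-style reasoning implicit in Lemma~\ref{lemma:injective}, and the sandwich trick) are standard once the sieving reduction is in hand.
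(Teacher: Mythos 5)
Your necessity argument follows the paper's line. For sufficiency you take a genuinely different and in fact cleaner route, but there is one concrete gap in the final reduction.

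The Cesàro step is a real simplification over the paper's argument. The paper worries that the means $\omega_n=\frac1n\sum_{k=1}^n\F^k(T)$ might diverge, and introduces the growth ratio $\lambda$, iteratively removing spurious eigenspaces via Proposition~\ref{sieve} until $\lambda=1$; alternating this with a dual sweep before finally normalizing. Your observation short-circuits all of that: since $\F^n(T)\in T+\K$ (because $L\circ\F|_\W=D\circ L$ and $D(\tau)=\tau$) and $\F^n(T)\in\W^+\subseteq\PSD$, the orbit lies in the slice $(T+\K)\cap\PSD$, whose recession cone is $\K\cap\PSD=\{0\}$ by hypothesis, hence compact. So $\lambda>1$ simply cannot occur, any limit point of the Cesàro means is a fixed point with $L(T^\infty)=\tau$, and the symmetric argument for $\chi$ (using $\W^\perp\cap\PSD^*=\{0\}$, which follows from $\mathbb{1}\in\W$) gives $\chi^\infty$ with $\F^*(\chi^\infty)=\chi^\infty$ and $\widetilde L(\chi^\infty)=\pi$. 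Your verification that replacing $(T,\chi)$ by $(T^\infty,\chi^\infty)$ leaves $p(\uu)$ unchanged (using $\F^{(\uu)}(\K)\subseteq\K$, $\chi|_\K=0$, and $\chi^\infty-\chi\in\W^\perp$) is correct, and since the two Cesàro passes live in the fixed $\BH$ and do not alter $\F$, there is no need for the paper's back-and-forth iteration between the primal and dual sides.

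The gap is in the non-invertible case. You propose to take $\omega=P_{\ker T^\infty}$, argue it is a PSD eigenvector of $\F$ with $\chi^\infty(\omega)=0$, and then invoke Proposition~\ref{sieve}. Neither claim holds in general, and the attempted Schwarz-type argument gives the wrong conclusion. The fact that $T^\infty\geq0$ is a fixed point of the CP map $\F$ forces $\F\bigl(\B(\mathrm{range}\,T^\infty)\bigr)\subseteq\B(\mathrm{range}\,T^\infty)$ — i.e. Kraus operators block-\emph{upper}-triangular with $\mathrm{range}\,T^\infty$ as the first block — and this makes $\B(\mathrm{range}\,T^\infty)$ invariant, \emph{not} $\B(\ker T^\infty)$; correspondingly $\F(\omega)$ will generically have a nonzero $(1,1)$-block, so $\omega$ is not an eigenvector. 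And $\chi^\infty(\omega)=0$ also fails in general: nothing prevents the support of $\chi^\infty$ from overlapping $\ker T^\infty$ (already for $\F=\mathrm{id}$ take any $T^\infty$ with a kernel and normalize $\chi^\infty$ to have full support). The correct move — which is what the paper's STEP 3 actually does — is to compress directly to $\B(\mathrm{range}\,T^\infty)$ using the block-triangularity just stated: $\E^{(u)}\leq\E$ then gives $\E^{(\uu)}(T^\infty)\in\B(\mathrm{range}\,T^\infty)$ for all $\uu$, so the process is unchanged when you pass to $\P\E^{(u)}\P$, $\P(T^\infty)$, $\P(\chi^\infty)$; the same upper-triangular Kraus structure also shows $\bigl(\sum_u\P\E^{(u)}\P\bigr)^*(\P\chi^\infty)=\P\chi^\infty$, so stationarity survives the compression, and no orthogonality $\chi^\infty(\omega)=0$ is required. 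After that, your $\sqrt{T^\infty}$-sandwich closes the argument exactly as you say. So the fix is small, but as written the sieve-based reduction would fail.
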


\begin{proof}
That the conditions are necessary was proven in the previous section. 
It follows from condition 1 that CP maps $\E^{(u)}:\BH\rightarrow\BH$ 
can be defined such that $\E^{(u)}(\K)\subseteq \K$ and $\E^{(u)}(\W)\subseteq\W$, and that 
\begin{equation}
	L\circ \E^{(u)}=\D^{(u)}\circ L,\quad\forall u\in\M.
\end{equation}

To lift the vectors $\tau$ and $\pi$, notice that since $\tau\in\C$ and $\pi\in\C^*$, there is $\id\in\W^+$ and $\rho\in(\W^\perp+\widetilde \W)^+$ such that 
\begin{eqnarray}
	\tau&=&L(\id)\\
	\rho&=&\pi\circ L.
\end{eqnarray}
At this point it is easy to check that $\D^{(\uu)}(\tau)=\D^{(\uu)} L (\id)=L\E^{(u)}(\id)$, so that
\begin{equation}\label{pseudoCPrealization}
	\pi\cdot\D^{(\uu)}(\tau)=\rho \circ\E^{(\uu)}(\id),\quad\forall\uu\in\M^*,
\end{equation}

However, the operators $\rho$ and $\id$ are not left- and right-eigenvectors of $\E=\sum_{u\in\M}\E^{(u)}$, so they $(\BH^\sa,\id,\E,\rho)$ does not form a realization. In order to find a proper completely positive realization, we will iteratively replace them by suitable projections by making use of Theorem~\ref{sieve}, until the desired properties are obtained. In the process, we remove all spurious contributions to $\rho$ and $\id$ until only relevant contributions to Eq.~\eqref{pseudoCPrealization} remain.

\begin{itemize}[leftmargin=1.7cm]
\item [{\bf STEP 1:}] Consider the Ces\`aro mean $\omega_n=\frac{1}{n}\sum_{k=1}^n \E^k(\id)$. Clearly, $\omega_n\geq0~\forall n$. 

Define the ratio $\lambda= \lim_{n\rightarrow\infty}\frac{\|\omega_{n+1}\|}{\|\omega_n\|}$ so that the limit is well-defined,
\begin{equation}
	\omega=\lim_{n\rightarrow\infty}\frac{ \omega_n}{\lambda^n}.
\end{equation}
 Clearly, $\omega\geq0$, and 
 \begin{equation}
 	\E(\omega)=\lim_{n\rightarrow\infty}\frac{1}{n\lambda^n}\sum_{k=1}^n \E^{k+1}(\id)=\lambda\omega.
\end{equation}
At this point, two different scenarios may occur. Either $\lambda=1$ or $\lambda>1$. 
Consider first the case when $\lambda>1$. This means that there is a contribution to $\id$ which grows under the action of $\E$, and $\omega$ captures its asymptotic behavior. One can see that 
\begin{equation}
	\tr[\rho\omega] = \lim_{n\rightarrow\infty}\frac{1}{n\lambda^n}\sum_{k=1}^n\tr[\rho \E^{k}(\id)]
                    = \lim_{n\rightarrow\infty}\frac{1}{\lambda^n} = 0.
\end{equation}
Hence, by making use of Theorem~\ref{sieve}, we can obtain a new set of CP maps $\{\E^{(u)}\}$, $\rho$ and $\id$ such that $\tr[\id\omega]=0$. However, $\rho$ and $\id$ are still not eigenvectors. Repeat STEP 1 until $\lambda=1$.

If $\lambda=1$ then $\omega=\lim_{n\rightarrow\infty} \omega_n$ is well defined. Replace $\id\leftarrow \omega$ and proceed to STEP 2. 
\end{itemize}

At each iteration of STEP 1 a new $\omega$ is obtained, orthogonal to all previous ones, and the associated eigenvalue can only be equal or decrease. The aim of this iteration is to capture the eigenspace of $\E$ with the largest eigenvalue and remove it without altering the resulting stochastic process $p(\uu)$.

Because $\E$ has only finitely many eigenvalues, eventually $\lambda$ will equal 1. In that  case, the resulting $\omega$ is strictly positive. Proceed to PART~2.

\begin{itemize}[leftmargin=1.7cm]
\item[{\bf STEP 2:}] At this point $\id$ is an eigenvector but $\rho$ is not.  Return STEP 1 with the dual realization, {\em i.e.}, with $((\BH^\sa)^*,\id,\E^*,\rho)$, interchanging the roles of $\rho$ and $\id$.
\end{itemize}

After STEP 2, $\rho$ is an eigenvelue of $\E$ but $\id$ may not be. A further iteration of steps 1 and 2 will lead to further dimension reductions. Since the dimension is finite, eventually no further truncations will be necessary and both $\id$ and $\rho$ will be proper left- and right- eigenvalues of $\E$.

Once one has iterated through STEPS 1 and 2, one has a completely positive realization $(\rho,\E^{(u)},\id)$ with the required stability properties for $\rho$ and $\id$. It just remains to ensure that $\id>0$. The procedure is very similar to the one just exposed.

\begin{itemize}[leftmargin=1.7cm]
	\item [{\bf STEP 3:}] Let $\mathcal Q=\ker(\id)$ and $\mathcal{S}=\mathcal Q^\perp=\mathrm{range}(\id)$ its orthogonal complement. Since $\id\geq0$ is an eigenvector of $\E$, we have that $\E^{(\uu)}(\id)\in\B(\mathcal{S})$ for all $\uu\in\M^*$. Hence we can make the substitutions $\H\leftarrow \mathcal{S}$, $\BH\leftarrow \B(\mathcal{S})$ and 
\begin{subequations}
\begin{eqnarray}
	\E^{(u)}&\leftarrow&\P\E^{(u)}\P\\
	\id&\leftarrow &\P(\id)\\
	\rho&\leftarrow&\P(\rho).
\end{eqnarray}
\end{subequations}
With this, now $\id>0$. One can define the completely positive map $\mathcal N(x)=\id^{-1/2}x\id^{-1/2}$. Finally, replace
\begin{subequations}
\begin{eqnarray}
	\E^{(u)}&\leftarrow&\mathcal N\E^{(u)}\mathcal N^{-1}\\
	\id&\leftarrow& \mathcal N(\id)=\mathbb{1}\\
	\rho&\leftarrow&\mathcal N^{-1}(\rho).
\end{eqnarray}
\end{subequations}
\end{itemize}
This substitution makes $\sum_{u\in\M}\E^{(u)}(\mathbb{1})=\mathbb{1}$, while preserving complete positivity and the resulting $\rho$ is the stationary state of the system. 
\end{proof}

\medskip
Note that several steps in the reduction algorithm could be avoided by imposing further conditions on the properties of the subspaces defining $\scarrafone$, but to explore these relations is beyond the scope of our present work.

The constructive algorithm in the above proof shows that not only appropriate 
completely positive maps can be obtained from the condition $\D\in\scarrafone$, 
but also that their structure can be cast into the form of a quantum instrument, 
and $\rho$ is a fixed point of $\sum_{u\in\M}\E^{(u)}$. The fact that a dimension 
smaller than that of $\BH$ is capable of reproducing the model described by 
$(\BH^\sa,\rho,\E,\id)$ is ultimately due to the non-primitivity of $\E^*$ and 
the lack of information completeness of the POVM elements  
$M^{(\uu)}=\E^{(\uu)}(\id)$. Theorem~\ref{mainresult} establishes that 
this explanation is the only possible one, revealing the essential traits 
that a quasi-realization should exhibit in order to be equivalent to a 
higher-dimensional quantum model.

\section{Discussion}
Our main result, Theorem~\ref{mainresult}, represents a generalization of 
Dharmadhiraki's polyhedral cone condition~\cite{dharmadhikari_sufficient_1963},
and establishes the type of positivity that needs to be respected at the 
level of a quasi-realization for it be completely positively realizable in 
in a certain way $\BH$. 
This brought to light a central issue that goes unnoticed in the commutative 
case. Unlike in the formulation of Dharmadhiraki's cone condition, the truly 
fundamental object is the set of maps $\scarrafone$, from which the cones 
$\C$ and $\C^*$ can be derived. This shifts the focus from the geometry of 
the cone of states to the cone and at the same time semigroup of transformations 
corresponding to a given process $p$.


This is far from a full solution to the problem. Although 
condition \eqref{scarrafo} can be verified by a semidefinite program, finding 
a suitable cone $\scarrafone$ for a given process is still a formidable 
challenge. Our result highlights significant departures from the PRP, so that 
novel approaches may be possible. In particular, the CPRP turns out to be 
deeply related to lifting properties for quotient operator systems. Aspects 
of this theory are deeply connected with several open questions in operator 
theory~\cite{farenick_operator_2012}, such as Connes Embedding Problem and 
Kirchberg's conjecture. In addition, classical algorithms for learning Hidden 
Markov Models using matrix factorizations~\cite{cybenko_learning_2011} may be 
extended to semidefinite factorizations~\cite{fiorini_linear_2012,gouveia_lifts_2013} 
thus establishing links between the computational complexity of the CPRP 
and that of other relevant problems in Quantum Information science. 
An interesting question, from the operator systems theory point of view, 
is to identify the abstract operator system in $\V$ for which $\scarrafone$ 
is precisely the cone of completely positive maps, and to determine its 
nuclearity properties. 

Conversely, for a given quotient operator system
of a concrete operator system in a finite-dimensional Hilbert space, 
there are in fact always infinitely many $\W \subset \BH^\sa$ and
quotient maps $L:\W\rightarrow \V$ realizing the same quotient. But while 
all of these quotients give rise to the same set of completely positive
maps, the cones $\scarrafone$ depend on the pairs $(\W,L)$, but all
contained in the cone of completely positive maps of $\V$. The question
is whether the cone generated by all these $\scarrafone$ together (or 
at least its closure) exhausts all completely positive maps.

Just as the positive realization problem, the completely positive realization 
problem is highly relevant in systems identification and quantum control. 
It addresses the problem of finding compact models for systems with quantum 
memory and a classical readout interface. In particular, modeling stochastic 
processes which are generated by quantum devices will be the primary 
application of our results. The positive description of a process not 
only provides insight into the physical mechanisms underlying a process, 
but allows to identify \emph{latent} variables, \emph{i.e.}~variables 
that are not directly observed but allow to see order and simplicity in 
otherwise apparently chaotic and highly unpredictable behavior. In this 
sense, accounting for hidden quantum mechanical mechanisms, and more 
importantly, quantum memory to an information source, is potentially the 
difference between obtaining a simple description of a process or a highly 
complex one. 

Finally, we draw the reader's attention to the question of how general 
completely positive realizations are, in relation to general quasi-realizations
(we know that they are strictly more powerful than classical positive
realizations). In particular, in the setting we considered here, of
a finite set of maps on a finite-dimensional vector space, can the cone required 
in Lemma~\ref{nonnegative} always be assumed to be SDR? 
More specifically, consider the smallest cone 
$\C = \mathbb{R}_{\geq 0}\{ \D^{(\uu)}\tau : \uu \in \M^* \}$ generated
by a quasi-realization; is $\C$ an SDR cone, or at least semi-algebraic? 
(Note that SDR implies semi-algebraic, and if the Helton-Nie 
conjecture \cite{HeltonNie-2009} is true they are equivalent.) 
Conversely, if that is not the case, it would mean that there
exists a process with a (finite-dimensional) quasi-realization which
cannot be reproduced by any quantum system of finite dimension.

\section*{Acknowledgments}
We thank Miguel Navascu\'es, Robin Blume-Kohout, Vern Paulsen and Mihai Putinar
for insightful discussions on realizations, SDR cones and complete positivity. 
Especially we would like to thank Vern Paulsen for sharing with us
the insights behind Example~\ref{ex:Verno-grazia}.

This research was partly undertaken while AM was at the Centre for Quantum Technologies (CQT), 
National University of Singapore, and while AW was at the Department of Mathematics, 
University of Bristol, and partly affiliated with CQT.
AM and AW were supported by the ERC (Advanced Grant ``IRQUAT'',
contract no.~ERC-2010-AdG-267386) and the Spanish MINECO 
(project FIS2008-01236) with the support of FEDER funds. 
AW furthermore acknowledges support by the EC (STREP ``RAQUEL'',
contract no. FP7-ICT-2013-C-323970).

\appendix

\section{Proof of Lemma~{\ref{LEM:ORDERUNIT}}}  
\label{app:proof1}

\noindent
\textbf{Lemma~\ref{LEM:ORDERUNIT}.}
Let $\mathcal R=(\V,\pi,D^{(\uu)},\tau)$ be a regular realization, 
\emph{i.e.}, no equivalent quasi-realization exists of smaller dimension. 
Then, any cone $\C$ satisfying conditions in Lemma~\ref{nonnegative} 
is proper: $\C$ does not contain nor is contained in a proper subspace 
of $\V$, and $\tau$ is an order unit of $\C$.

\medskip
\begin{proof} Suppose $\C$ is not proper. This means it is either not pointed or not generating. Consider first the case $\C$ is not generating. Then $\W=\C-\C$, is a proper subspace of $\V$, and from conditions 1 and 2 in Lemma~\ref{nonnegative} one has that $\tau\in\W$ and $D^{(u)}\W\subseteq\W$. Thus, the realization $(\W,\pi|_\W,D|_\W,\tau)$ is an equivalent quasi-realization of smaller dimension than $\mathcal R$, which contradicts the assumption that $\mathcal R$ is regular.

Suppose now that $\C$ is not pointed. Define the dual cone $\C^*=\{f\in\V^*:f(v)\geq0~\forall v\in\C\}$. If $\C$ is not pointed then $\C^*$ is not generating. Let $\widetilde W=\C^*-\C^*$ be the subspace generated by $\C^*$. Due to condition 3, $\pi\in\widetilde \W$, and $D^{(u)}{}^*\widetilde\W\subseteq\widetilde\W$. Hence one can define $\widetilde D=D^*|_{\widetilde \W}$ and $\widetilde \tau=\tau|_{\widetilde\W}$ as a linear function on $\widetilde \W$. Then, the realization $(\widetilde \W^*, \pi, \widetilde D^*, \widetilde \tau)$ is equivalent to $\mathcal R$ and has smaller dimension, thus $\mathcal R$ is not regular.

Now that we have established that $\C$ is proper, consider the subspace 
$\W=\vecspan\{D^{(\uu)}\tau : \uu\in\M^*\}$. 
Then $\C|_\W=\C\cap\W$ is a subcone of $\C$, and therefore 
$\C|_\W$ establishes an order relation $\geq$ in $\W$ 
($a\geq b$ iff $a-b\in\C|_\W$). To see that $\tau$ is an order unit 
of $\C|_\W$ take $w=w_+-w_-\in\W$, where $w_\pm\in\W^+$. 
There is a set of words $\uu_i\in\M^*$ and nonnegative reals $c_i$ such that
\begin{align}
	w_+=\sum_i c_i D^{(\uu_i)}\tau.
\end{align}
Then $(\sum_i c_i)\tau\geq w$,
\begin{align}
	\left(\sum_i c_i\right)\tau&=\sum_i c_i \left[\sum_{v\in\M}D^{(v)}\right]^{|\uu_i|}\tau\\
	&= \sum_i c_i \sum_{v\in\M^{|\uu_i|}}D^{(v)}\tau\\
	&\geq\sum_i c_i D^{(\uu_i)}\tau\\
	&\geq w_+-w_- =w.
\end{align}
Furthermore, $\C|_\W$ satisfies all the conditions in Lemma~\ref{nonnegative}. 
Thus $\C|_\W$ must be generating, which implies that $\W=\V$ and so
$\C|_\W=\C$. Hence, $\tau$ is an order unit of $\C$.
\end{proof}

\section{Proof of Theorem~{\ref{THM:ISOMORPHISM}}} 
\label{app:isomorphism}

\noindent
\textbf{Theorem~\ref{THM:ISOMORPHISM} (\cite{ito_identifiability_1991}).}
Two quasi-realizations $\mathcal{R}_1=(\V_1,\pi_1,D_1,\tau_1)$ and 
$\mathcal{R}_2=(\V_2,\pi_2,D_2,\tau_2)$ of the same stochastic process 
$p$, not necessarily of the same dimension, have isomorphic quotient 
realizations 
$\overline{\mathcal{R}}_i=(\overline\V_i,\overline{\pi}_i,\overline{D}_i,\overline{\tau}_i)$, 
$i=1,2$:
It holds $\overline\V_1\stackrel{T}\cong\overline\V_2$, and
\begin{align*}
 	\overline\pi_1      &= \overline\pi_2 T,\\
	\overline D_1^{(u)} &= T^{-1}\overline D_2^{(u)}T,\\
	\overline\tau_1     &= T^{-1}\overline\tau_2.
\end{align*} 

\medskip
\begin{proof}
Let $L_i:\W_i\rightarrow\V_i$, $(i=1,2)$ be the canonical projections of the respective quotient maps. Let $\{\uu_i\}_{i=1,\ldots,\dim \W_1}$ be a set  words such that $\{ D_1^{(\uu_i)}\tau_1\}_{i=1,\ldots,\dim \W_1}$ is a basis for $\W_1$. Chose a subset of words $U=\{\uu_i\}_{i\in I}$ such that 
\begin{equation}
	\e_i=L_1 D_1^{(\uu_i)}\tau_1,\quad i\in I
\end{equation}
 is a basis of $\V_1$. Define the map $\eta:\V_1\rightarrow \V_2$ as follows
\begin{equation}
	\eta(\e_i)=L_2  D_2^{(\uu_i)}\tau_2\equiv \mathbf{f}_i.
\end{equation}
Next, we show that $\{\mathbf{f}_i\}_{i\in I}$ is a basis of $\V_2$, by first showing that they span the whole space and then showing linear independence.

Let $\w\in\M^*$ be any word and let $c_i$ be coefficients such that
\begin{equation}
	L_1 D_1^{(\w)}\tau_1=\sum_i c_i \e_i
\end{equation}
Now, let $\vv $ be any word in $\M^*$ and take 
\begin{align}
	p(\vv \w)&=&\pi_1^\top D_1^{(\vv )}D_1^{(\w)}\tau_1\\
	&= L^*_1( D_1^{(\vv )}{}^\top\pi_1)L_1( D_1^{(\w)}\tau_1)\\
	&= L^*_1( D_1^{(\vv )}{}^\top\pi_1)\sum_{i\in I}c_iL_1( D_1^{(\uu_i)}\tau_1)\\
	&= \sum_{i\in I} c_i L^*_1( D_1^{(\vv )}{}^\top\pi_1)L_1( D_1^{(\uu_i)}\tau_1)\\
	&= \sum_{i\in I}c_i \pi_1^\top D_1^{(\vv \uu_i)}\tau_1
	 = \sum_{i\in I}c_i \,p(\vv \uu_i).
\end{align}
Since the representations are equivalent we get
\begin{eqnarray}
	\pi_2^\top D_2^{(\vv )} D_2^{(\w)}\tau_2&=&\sum_{i\in I}c_i\,\pi_2^\top D_2^{(\vv )} D_2^{(\uu_i)}\tau_2\\
	&=& \pi_2^\top D_2^{(\vv )}\sum_{i\in I}c_i  D_2^{(\uu_i)}\tau_2,\quad\forall \vv \in\M^*.
\end{eqnarray}
Therefore, vectors $D_2^{(\w)}\tau_2$ and $\sum_{i\in I}c_i  D_2^{(\uu_i)}\tau_2$ lie in the same equivalence class,
\begin{align}
	L_2(D_2^{(\w)}\tau_2) &= \sum_{i\in I} c_i\,L_2(D_2^{(\uu_i)}\tau_2)\\
		                  &= \sum_{i\in I} c_i\mathbf{f}_i.
\end{align}
Thus, $\mathbf{f}_i=\eta(\e_i)$ spans $\V_2=L_2(\W_2)$. We now show that 
the $\mathbf{f}_i$ are linearly independent; suppose 
\begin{equation}
	\sum_{i\in I} c_i\mathbf{f}_i=\sum_{i\in I} c_i L_2(\tilde D_2^{(\uu_i)}\tau_2)=0.
\end{equation}
Then, taking the product with an arbitrary $\vv \in\M^*$, 
\begin{align}
\left[L^*_2(D_2^{(\vv )}{}^\top\pi_2)\right]^\top\cdot\left[\sum_{i\in I} c_i\mathbf{f}_i\right]
    &= \sum_{i\in I} c_i\left[L^*_2(D_2^{(\vv )}{}^\top\pi_2)\right]^\top \cdot \left[L_2( D_2^{(\uu_i)}\tau_2)\right]\\
	&= \sum_{i\in I}c_i p(\vv \uu_i)\\
	&= \left[L^*_1( D_1^{(\vv )}{}^\top\pi_1)\right]^\top\cdot \left[\sum_{i\in I} c_i L_1( D_1^{(\uu_i)}\tau_1)\right]\\
	&= \pi_1^\top D_1^{(\vv )}\cdot \sum_{i\in I} c_i  D_1^{(\uu_i)}\tau_1 = 0.
\end{align}
Thus, whenever $\sum_{i\in I}c_i \mathbf{f}_i=0$, we have that $\sum_{i\in I} c_i  D_1^{(\uu_i)}\tau_1\in \ker L_1$, the null subspace of realization 1.
\begin{equation}
	\sum_{i\in I} c_i L_1( D_1^{(\uu_i)}\tau_1)=0.
\end{equation}
Since $\{L_1(\tilde D_1^{(\uu_i)}\tau_1)\}_{i\in I}$ is a basis, the $c_i$'s must be zero. Therefore, the two quotient spaces are isomorphic, with $\eta$ being an explicit isomorphism between them.
\end{proof}

\section{Proof of Lemma~{\ref{LEMMA:DUAL}}}  
\label{app:dual}

\noindent
\textbf{Lemma~\ref{LEMMA:DUAL}.}
Let $\id\in \W\subseteq\BH^\sa$ and $\widetilde\W \subseteq (\BH^\sa)^*$.  
The spaces $\big(\W/(\W\cap\widetilde \W^\perp)\big)^*$ and 
$\big(\widetilde \W+\W^\perp\big)/\W^\perp$ are naturally isomorphic.
If $\W=\vecspan(\W^+)$ and $\big(\W\cap\widetilde \W^\perp\big)^+=\{0\}$ then
\begin{enumerate}
	\item The cone
	  $\displaystyle{\C=\fracc{\W^+}{\W\cap\widetilde \W^\perp}}$
	  is a proper SDR cone.
	\item The dual cone of $\C$ is given by
      \begin{equation*}
	    \C^*=\fracc{\big(\widetilde\W+\W^\perp\big)^+}{\W^\perp}.
      \end{equation*}
\end{enumerate}

\medskip
Let $U$ be a finite-dimensional vector space with a proper positive cone $U^+$, and let $A\subseteq U$ be a subspace. In the case at hand $U=\BH^\sa$ and $U^+=\PSD$, but our results are valid more generally. The induced positive cone in $A$ is given by the restriction of $U^+$ to $A$, $A^+=U^+|_A$. We now consider the structure of the dual cone $(A^+)^*$. 

The \emph{annihilator} of $A\subseteq U$ is defined as the vector subspace of $U^*$ that vanish on $A$, and is denoted by $A^\perp\subseteq U^*$. Note that $A^\perp$ operation is always relative to ambient space containing $A$. We will take care to always denote annihilator spaces w.r.t. to the largest space $U$. When the context is not clear, \emph{e.g.}~$A\subseteq B\subseteq U$, we will assist with a specification of the reference space $A^{\perp_B}$ as opposed to $A^{\perp_U}$.

\begin{lemma} 
\label{dualsubspace}
Let $U$ be a Banach space and $A$ a closed subspace. The following isomorphisms are natural:
\begin{align}
	\displaystyle A^*     &\cong \fracc{U^*}{A^\perp},      \\
	\displaystyle A^\perp &\cong \left(\fracc{U}{A}\right)^*,
\end{align}
where $\cong$ indicates isometric isomorphism.
\end{lemma}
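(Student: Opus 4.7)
Both isomorphisms are classical consequences of the Hahn--Banach theorem and the universal property of the quotient, so the plan is to exhibit the natural maps explicitly and verify isometry. I would treat the two parts separately but in parallel, since they are essentially dual constructions.

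For the first isomorphism $A^* \cong U^*/A^\perp$, I would define the restriction map $r : U^* \to A^*$ by $r(f) = f|_A$. This is clearly linear, and its kernel is by definition $\{f \in U^* : f|_A = 0\} = A^\perp$. The key step is surjectivity: given any $g \in A^*$, the Hahn--Banach theorem provides an extension $f \in U^*$ with $f|_A = g$ and $\|f\| = \|g\|$. Passing to the quotient yields an algebraic isomorphism $\bar r : U^*/A^\perp \to A^*$. To establish isometry, note that for any $f \in U^*$ and any $h \in A^\perp$, $\|r(f)\| = \|r(f+h)\| \leq \|f+h\|$, so $\|r(f)\| \leq \inf_{h \in A^\perp}\|f+h\| = \|[f]\|$; conversely, the Hahn--Banach extension realizes the inequality in the opposite direction.

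For the second isomorphism $A^\perp \cong (U/A)^*$, I would exploit the universal property of the quotient map $q : U \to U/A$. Any $g \in (U/A)^*$ pulls back to $g \circ q \in U^*$, which vanishes on $A$ by construction, so $g \mapsto g \circ q$ defines a linear map $\psi : (U/A)^* \to A^\perp$. Conversely, given $f \in A^\perp$, the universal property produces a unique $\tilde f : U/A \to \mathbb{R}$ with $f = \tilde f \circ q$, and this is the inverse of $\psi$. For the isometry, I would use the characterization of the quotient norm $\|q(u)\| = \inf_{a\in A}\|u - a\|$: on the one hand, $|f(u)| = |f(u-a)| \leq \|f\|\,\|u-a\|$ for all $a \in A$, giving $\|\tilde f\| \leq \|f\|$; on the other hand, $|f(u)| = |\tilde f(q(u))| \leq \|\tilde f\|\,\|q(u)\| \leq \|\tilde f\|\,\|u\|$, giving $\|f\| \leq \|\tilde f\|$.

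Naturality in both cases follows from the fact that the defining maps (restriction and precomposition with $q$) are canonical and commute with bounded linear maps $U \to U'$ sending $A \to A'$. The only non-routine ingredient is the Hahn--Banach extension theorem, which supplies both the surjectivity in part~1 and the norm-preserving extension needed for isometry; everything else is bookkeeping via the first isomorphism theorem and the universal property of quotients. Since the paper applies the result only in the finite-dimensional setting $U = \B(\H)^{\sa}$, no subtleties about closedness or completeness arise, but stating the lemma in Banach-space generality costs nothing beyond invoking Hahn--Banach.
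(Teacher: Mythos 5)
Your argument is correct and is exactly the classical proof that the paper itself simply cites (the paper's ``proof'' consists of a single line: \emph{See Rudin, Sect.~4.8}). You have expanded the textbook reference into a self-contained argument: restriction/Hahn--Banach for the first isomorphism, factorization through the quotient for the second, with the two inequalities in each direction giving isometry, and canonicity of the defining maps giving naturality. Nothing is missing; the treatment matches the standard one.
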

\begin{proof} 
See \cite[Sect.~4.8.]{Rudin}.
\end{proof}

These isomorphisms assist in establishing the intertwined relationships between subspaces and quotients, on the one hand, and dual and annihilator spaces on the other. 
%
%
These relations extend to the cones defined in those spaces.

\begin{theorem}
\label{thm:dual} 
Let $A\subset U$ be a subspace. If $U^+|_A$ is generating, then
\begin{align}
	(U^+|_A)^*                    &\cong \fracc{(U^+)^*}{A^\perp},\\
	\left(\fracc{U^+}{A}\right)^* &\cong  (U^+)^*|_{A^\perp},
\end{align}
where $\cong$ denotes the natural isomorphism introduced in Lemma~\ref{dualsubspace}.
\end{theorem}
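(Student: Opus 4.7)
The plan is to prove the two isomorphisms separately, building on the vector-space identifications of Lemma~\ref{dualsubspace}: $A^{*}\cong U^{*}/A^{\perp}$ via the restriction map $\pi\colon g\mapsto g|_{A}$ (surjective with kernel $A^{\perp}$), and $A^{\perp}\cong (U/A)^{*}$ via $g\mapsto \bar g$, where $\bar g(u+A):=g(u)$ is well-defined precisely because $g$ annihilates $A$. The theorem asserts that these linear isomorphisms restrict to identifications of the respective positive cones, and that is what has to be verified.

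The second isomorphism is essentially immediate. By definition, $U^{+}/A$ is the image of $U^{+}$ under the quotient $q\colon U\to U/A$, so $\bar g\geq 0$ on $U^{+}/A$ iff $g(u)\geq 0$ for every $u\in U^{+}$, i.e.\ iff $g\in (U^{+})^{*}$. Intersecting with the constraint $g\in A^{\perp}$ yields $(U^{+}/A)^{*}\cong (U^{+})^{*}\cap A^{\perp}=(U^{+})^{*}|_{A^{\perp}}$, as claimed.

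For the first isomorphism, one inclusion is automatic: if $g\in (U^{+})^{*}$, then $\pi(g)=g|_{A}$ is nonnegative on $U^{+}\cap A=U^{+}|_{A}$, so $\pi\bigl((U^{+})^{*}\bigr)\subseteq (U^{+}|_{A})^{*}$. The substantive content is the converse: given $f\in (U^{+}|_{A})^{*}$, I must produce $\tilde g\in (U^{+})^{*}$ with $\tilde g|_{A}=f$. I plan to do this by induction on $\dim U-\dim A$, adjoining one vector $u\in U\setminus A$ at a time and setting $\tilde g(u)=c$. Positivity of the extension on $(A+\mathbb R u)\cap U^{+}$ translates into the twin families of constraints $\lambda c+f(a)\geq 0$ whenever $\lambda u+a\in U^{+}$ with $\lambda>0$, and $-\lambda' c+f(a')\geq 0$ whenever $a'-\lambda' u\in U^{+}$ with $\lambda'>0$. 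A compatible $c$ therefore exists iff
\begin{equation*}
  \sup\Bigl\{-f(a)/\lambda\ :\ \lambda>0,\ \lambda u+a\in U^{+}\Bigr\}
  \ \leq\
  \inf\Bigl\{f(a')/\lambda'\ :\ \lambda'>0,\ a'-\lambda' u\in U^{+}\Bigr\}.
\end{equation*}
The pairwise inequality $\lambda' f(a)+\lambda f(a')\geq 0$ that underlies this reduces to the observation that $\lambda' a+\lambda a'=\lambda'(\lambda u+a)+\lambda(a'-\lambda' u)\in U^{+}\cap A=U^{+}|_{A}$, on which $f\geq 0$ by hypothesis; finiteness of the sup and of the inf follows from the same hypothesis applied to the recession cones of the two feasible sets, which are both equal to $U^{+}|_{A}$. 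Iterating over the codimension delivers a global $\tilde g\in (U^{+})^{*}$.

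The main obstacle is exactly the verification of this sup/inf compatibility, which is a positive-extension (Hahn--Banach-type) step in the spirit of the Krein extension theorem. The generating assumption $U^{+}|_{A}-U^{+}|_{A}=A$ enters by ensuring that $(U^{+}|_{A})^{*}$ is pointed in $A^{*}$ and that both sides of the first isomorphism are proper cones, so that the identifications from Lemma~\ref{dualsubspace} match genuine cones of the same type rather than degenerate dual objects; all remaining steps are routine bookkeeping with those natural identifications.
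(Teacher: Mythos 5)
Your direct proof of the second isomorphism is clean, and it is a worthwhile observation that it needs no hypothesis at all: for \emph{any} closed cone $U^+$ and subspace $A$, the dual of the image cone $U^+/A$ is $(U^+)^*\cap A^\perp$. The paper instead obtains the two statements from each other by a formal dualisation ($U\leftrightarrow U^*$, $A\leftrightarrow A^\perp$, $U^+\leftrightarrow (U^+)^*$), so your treatment is somewhat more transparent on that side. For the first isomorphism, your one-dimension-at-a-time positive-extension argument is the Krein--Riesz form of Hahn--Banach, while the paper uses the dual separating-hyperplane form (lifting $f$ to $u\in U^*$, separating the affine flat $u+A^\perp$ from $(U^+)^*$, and extracting an interior point $e\in A\cap\operatorname{int}U^+$ at which $f$ vanishes). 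Both routes are legitimate.

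There is, however, a genuine gap in the step where you justify finiteness of the sup and inf. The asserted principle --- that the infimum of a linear functional over a closed convex set $S$ is finite whenever the functional is nonnegative on the recession cone of $S$ --- is true only for \emph{polyhedral} $S$ and fails in general, which matters precisely because the cones of interest here are non-polyhedral. Concretely, take $U=\mathbb{R}^3$, $U^+=\{(x,y,z): x\geq 0,\ z\geq 0,\ xz\geq y^2\}$ (the $2\times 2$ PSD cone), $A=\{z=0\}$, $u=(0,0,1)$, and $f(x,y,0)=y$. Then $f\geq 0$ on $U^+\cap A=\{x\geq 0,\ y=0\}$, the first feasible set is $\{(a_1,a_2,0): a_1\geq a_2^2\}$, its recession cone is exactly $U^+\cap A$ on which $f\equiv 0$, and yet $\sup\{-f(a)\}=+\infty$, so no admissible $c$ exists --- and indeed $f$ has no positive extension to $U$ here. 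In this example $U^+\cap A$ fails to be generating in $A$, so the theorem is allowed to fail; but your argument never invokes the generating hypothesis in the finiteness step, and hence would ``prove'' the unconditional statement, which is false. The generating assumption must be brought to bear exactly where the sup/inf bound is established (the paper's proof correspondingly needs the interior point $e$ to lie in $A$), and a recession-cone observation alone cannot deliver it for non-polyhedral cones.
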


\begin{proof} 
Notice that the two statements are equivalent. They are obtained from one another by dualizing under the following exchange: $U\leftrightarrow U^*$, $A\leftrightarrow A^\perp$ and $U^+\leftrightarrow U^+{}^*$. We will prove the first one. The cone $U^+|_A$ can be expressed as
\[
	U^+|_A=\{a\in A | a \in U^+\}.
\]

First, we show that $\fracc{(U^+)^*}{A^\perp}\subseteq(U^+|_A)^*$. 
Let $f\in \fracc{(U^+)^*}{A^\perp}$, such that there is an element 
$u\in (U^+)^*$ for which $f=u+A^\perp$. Then, for any $a\in U^+|_A$ we have
\begin{align}
	f(a)=(u+A^\perp)(a)=\underbrace{u}_{\in(U^+)^*}(\underbrace{a}_{\in U^+})\geq0,
\end{align}
hence $\fracc{(U^+)^*}{A^\perp} \subseteq (U^+|_A)^*$.

We now prove that $f\notin \fracc{(U^+)^*}{A^\perp}$ implies that
$f\notin (U^+|_A)^*$. Let $u\in U^*$ be such that
\[
	f=u+A^\perp.
\]
That $f\notin U^+{}^*/A^\perp$ implies that $u+A^\perp$ does not intersect $U^+{}^*$. Since $U^+$ is generating, $U^+{}^*$ is pointed. Thus, one can extend $u+A^\perp$ to a supporting hyperplane of $U^+{}^*$, by 
\[
	u+N_e\text{ such that } (u+N_e)\cap U^+{}^*=\emptyset.
\]
and $A^\perp\subseteq N_e$. Therefore, there exists an element $e$ in the interior of $U^+$ such that 
\[
	(u\pm\nu)(e)=0,\qquad\forall \nu\in N_e
\]
hence $u(e)=0$ and $\nu(e)=0$. Since $A^\perp\subseteq N_e$, we have that $\nu(e)=0,~\forall\nu\in A^\perp$ and thus $e\in A$. Since $e$ is in the interior of $U^+$ then it is also in the interior of $U^+|_A$. On the other hand, $u(e)=0$ implies that
\[
	f(e)=(u+A^\perp)(e)=0.
\]
However, since $e$ is in the interior of $U^+|_A$, there is $\epsilon\in A$ such that $e+\epsilon\in U^+|_A$ and such that
\[
	f(e+\epsilon)<0.
\]
Therefore $f\notin (U^+|_A)^*$.
\end{proof}

For finite dimensional spaces, we have the following additional property:
\begin{lemma}
\label{intersectionperp} 
Let $A, B$ be subspaces of a finite dimensional vector space $U$. Then
\[
	(A\cap B)^\perp=A^\perp+B^\perp.
\]
\end{lemma}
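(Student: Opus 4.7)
The plan is to prove both inclusions, with the easy one coming from direct definitional chasing and the reverse one exploiting the finite-dimensional duality $(A^\perp)^\perp = A$.

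First I would dispatch the inclusion $A^\perp + B^\perp \subseteq (A\cap B)^\perp$: any $f = f_A + f_B$ with $f_A \in A^\perp$, $f_B \in B^\perp$ vanishes on $A \cap B$ because each summand already does.

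For the reverse inclusion, the cleanest route is to start from the identity $(X+Y)^\perp = X^\perp \cap Y^\perp$, which holds in general and follows immediately from the definition of the annihilator (a functional vanishes on a sum of subspaces iff it vanishes on each). Applying this to $X = A^\perp$ and $Y = B^\perp$ gives
\begin{equation*}
  (A^\perp + B^\perp)^\perp = (A^\perp)^\perp \cap (B^\perp)^\perp.
\end{equation*}
In finite dimensions, $(A^\perp)^\perp = A$ and $(B^\perp)^\perp = B$ under the canonical identification $U \cong U^{**}$, so the right-hand side is $A \cap B$. Taking annihilators once more and using $(C^\perp)^\perp = C$ for the subspace $C = A^\perp + B^\perp$ of $U^*$ yields
\begin{equation*}
  A^\perp + B^\perp = (A\cap B)^\perp,
\end{equation*}
which is the desired equality.

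As an alternative verification (and perhaps the main thing to get right carefully), one can count dimensions: from $\dim X^\perp = \dim U - \dim X$ and the inclusion-exclusion formula $\dim(A+B) = \dim A + \dim B - \dim(A\cap B)$, a short calculation gives $\dim(A^\perp + B^\perp) = \dim U - \dim(A\cap B) = \dim(A\cap B)^\perp$, so the trivial inclusion must in fact be an equality. I expect no real obstacle here; the only subtlety is that the double-annihilator identity $(A^\perp)^\perp = A$ genuinely uses finite dimensionality (or reflexivity), which matches the hypothesis of the lemma.
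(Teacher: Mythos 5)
Your proof is correct. Note that the paper states Lemma~\ref{intersectionperp} without any proof of its own, treating it as a standard linear-algebra fact, so there is no in-text argument to compare against. Both of your routes are sound: the easy inclusion follows directly from the definition, and the reverse inclusion is cleanly obtained either by applying the elementary identity $(X+Y)^\perp = X^\perp \cap Y^\perp$ to $X=A^\perp$, $Y=B^\perp$ and invoking the double-annihilator $(A^\perp)^\perp = A$ (valid here by finite dimensionality), or by the dimension count $\dim(A^\perp + B^\perp) = \dim U - \dim(A\cap B)$, which forces the trivial inclusion to be an equality. You correctly flag that finite dimensionality is exactly where the double-annihilator identity is used, matching the lemma's hypothesis.
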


\begin{theorem}[Second Theorem of Isomorphism]
\label{secondisomorphism} 
Given two subspaces $A$, $B\subseteq U$, then the following quotients are isomorphic
\begin{align}
	\fracc{A}{(A\cap B)}\cong \fracc{(A+B)}{B}.
\end{align}
\end{theorem}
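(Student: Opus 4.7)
The plan is to apply the First Isomorphism Theorem to a well-chosen linear map, which is the standard route for this kind of quotient identification.

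First, I would define the map
\begin{equation*}
  \phi : A \longrightarrow \fracc{(A+B)}{B}, \qquad a \longmapsto a + B,
\end{equation*}
that is, the composition of the inclusion $A \hookrightarrow A+B$ with the canonical projection $A+B \twoheadrightarrow (A+B)/B$. Linearity is immediate from the linearity of both the inclusion and the projection.

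Next, I would check surjectivity. Given an arbitrary coset $(a+b)+B \in (A+B)/B$ with $a\in A$ and $b\in B$, we have $(a+b)+B = a+B = \phi(a)$, so $\phi$ is onto.

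Then I would compute the kernel. By definition, $\phi(a)=0$ in $(A+B)/B$ means $a\in B$; combined with $a\in A$ this is precisely $a \in A\cap B$. Hence $\ker\phi = A\cap B$.

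Finally, the First Isomorphism Theorem applied to $\phi$ yields the natural isomorphism
\begin{equation*}
  \fracc{A}{(A\cap B)} \;\cong\; \fracc{(A+B)}{B}.
\end{equation*}
There is no real obstacle here; the content is entirely in correctly identifying $\ker\phi$, and the fact that we are in the finite-dimensional setting is not even needed for the isomorphism, only for complementary dimension-counting statements that are not claimed here.
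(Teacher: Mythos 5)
Your proof is correct and is the standard textbook argument via the First Isomorphism Theorem. The paper itself states Theorem~\ref{secondisomorphism} as background without offering a proof, so there is nothing to compare against; your derivation (map $a\mapsto a+B$, check surjectivity and compute $\ker\phi=A\cap B$, then quotient) supplies exactly the argument the paper implicitly assumes to be known.
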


\begin{theorem}[Third Theorem of Isomorphism]
\label{thirdisomorphism} 
Given $A\subseteq B\subseteq U$, we have 
\begin{itemize}
\item Quotients of subspaces are subspaces of quotients: $\fracc{B}{A}\subseteq\fracc{U}{A}$
\item Chain rule:
  $\fracc{\left(\fracc{U}{A}\right)}{\left(\fracc{B}{A}\right)} \cong \fracc{U}{B}$.
\end{itemize}
\end{theorem}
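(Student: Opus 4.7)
My plan is to treat each of the two statements by standard quotient-space arguments, defining the relevant linear maps explicitly and invoking the first isomorphism theorem for vector spaces. Since $A \subseteq B \subseteq U$ are nested subspaces, all cosets modulo $A$ make sense inside $U/A$, and this is essentially what the two claims formalize.

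For the first bullet, I would define the canonical map $\iota : B/A \to U/A$ by $\iota(b + A) = b + A$, interpreted on the right-hand side as a coset in $U/A$. Well-definedness is immediate: if $b_1 + A = b_2 + A$ in $B/A$, then $b_1 - b_2 \in A$, hence the same equality holds in $U/A$. Linearity is trivial from the vector space operations on cosets. Injectivity follows because $A$ is the same subspace used in both quotients, so $\iota(b + A) = 0$ in $U/A$ means $b \in A$, i.e., $b + A = 0$ in $B/A$. Hence $\iota$ identifies $B/A$ with a subspace of $U/A$, justifying the notation $B/A \subseteq U/A$.

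For the chain rule, the key step is to construct the surjection $\varphi : U/A \to U/B$ given by $\varphi(u + A) = u + B$. This is well-defined precisely because $A \subseteq B$: if $u_1 - u_2 \in A \subseteq B$, then $u_1 + B = u_2 + B$. It is clearly linear and surjective, since every coset of $B$ in $U$ is the image of the corresponding coset of $A$. I would then compute $\ker \varphi = \{u + A : u \in B\}$, which under the embedding of the first part is exactly the subspace $B/A \subseteq U/A$. The first isomorphism theorem for vector spaces then gives $(U/A)/(B/A) \cong U/B$ via the induced map $\bar\varphi$.

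I do not anticipate a substantive obstacle here; the result is a classical and purely algebraic fact, and the only small subtlety is to be careful that the two occurrences of the symbol ``$+A$'' in the definition of $\iota$ refer to cosets in different ambient spaces but describe the same equivalence relation when restricted. Once the maps $\iota$ and $\varphi$ are in place, the rest is routine verification of well-definedness, linearity, and the kernel computation. No finite-dimensionality or topological assumptions are required.
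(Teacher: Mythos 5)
Your argument is correct and is the standard textbook proof of the third isomorphism theorem for vector spaces. The paper does not actually supply a proof of Theorem~\ref{thirdisomorphism} --- it is stated without proof as a classical fact and invoked in Appendix~\ref{app:dual}; your construction of the inclusion $\iota$ and the surjection $\varphi$ with $\ker\varphi = B/A$, followed by the first isomorphism theorem, is exactly what such a proof would look like, so there is nothing to reconcile against the paper's text.
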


\medskip
We now start proving Lemma~\ref{LEMMA:DUAL} with its first part.
\begin{proof} 
Define $\K=\W\cap \widetilde\W^\perp \subseteq \W \subseteq U$. 
First, using the 3rd Isomorphism Theorem notice that
\[
	\fracc{\W}{\K}\subseteq \fracc{U}{\K}
\]
So that using Lemma \ref{dualsubspace}, we have
\begin{eqnarray}
  \left(\frac{\W}{\K}\right)^*
    \cong \fracc{\left(\fracc{U}{\K}\right)^*}{\left(\fracc{\W}{\K}\right)^\perp}
\end{eqnarray}
where ${}^\perp$ is understood as a subspace of $(\fracc{U}{\K})^*$.  
The numerator, by Lemmas \ref{dualsubspace} and \ref{intersectionperp} is
\[
  \left(\fracc{U}{\K}\right)^*
  \cong \K^\perp 
  =     (\W\cap \widetilde \W^\perp)^\perp
  =     \W^\perp+\widetilde \W.
\]
On the other hand, using Lemma \ref{dualsubspace} and 
Theorem \ref{thirdisomorphism}, this can be written as
\begin{eqnarray}
 (\fracc{\W}{\K})^\perp 
     \cong \left( \fracc{\left(\fracc{U}{\K}\right)}{\left(\fracc{\W}{\K}\right)}\right)^*
     \cong \left(\fracc{U}{\W}\right)^*
     \cong \W^\perp,
\end{eqnarray}
where the first $\perp$ refers to a subspace in $\left(\fracc{U}{\K}\right)^*$.
and the second one to a subspace in $U^*$.
Combining these, we have
\[
  \left(\fracc{\W}{\left(\W\cap\widetilde \W^\perp\right)}\right)^* 
                        \cong \fracc{\left(\W^\perp+\widetilde \W\right)}{\W^\perp},
\]
where the last isomorphism is given by Theorem~\ref{secondisomorphism}. 
All isomorphisms used are natural.
%
%
This proves the first statement of Lemma \ref{LEMMA:DUAL}.
\end{proof}

Next, notice that if $\vecspan(\W^+)=\W$, then $\W^+$ is not contained in any subspace of $\W$, hence it is generating. In addition, since $U^+$ is pointed, its restriction to $\W\subseteq U$ must also be pointed. Hence $\W^+$ is proper.

\medskip\noindent
\emph{Proof of Lemma~\ref{LEMMA:DUAL}, statement 1}. 
Let $L:\W\rightarrow \W/\K$ be the canonical quotient projection, 
see Fig.~\ref{diag:comm}. The cone 
\begin{align}
	\C=\fracc{\W^+}{\K}
\end{align}
can be expressed as
\begin{align}
	\C=L(\W^+),
\end{align}
so $\C$ is clearly SDR. We now show it is pointed. Suppose $v\in\C$ and $-v\in\C$. Then there are $\omega,\omega'\in\W^+$ such that $v=L(\omega)$ and $-v=L(\omega')$. But then $\omega+\omega'\in \ker L=\K$, but also $\omega+\omega'\in\W^+$. Since $\K\cap\W^+=\{0\}$, we conclude $\omega'=-\omega$. Since $\W^+$ is pointed, $\omega\in\W^+$ and $-\omega\in\W^+$ implies $\omega=0$. Thus, $v=0$. 

Next, we show $\C=L(\W^+)$ is generating. $\W^+$ is generating, which means it has an order unit, \emph{i.e.}~an $e\in\W^+$ such that for every $\omega\in\W$ there is some $\lambda>0$ for which $e+\lambda \omega\in\W^+$.  For every $v\in \W/\K$ there is $\omega\in\W$ such that $v=L(\omega)$. Then $L(e)+\lambda v=L(e+\lambda \omega)\in L(\W^+)$. Hence, $L(e)$ is an order unit of $\C$ and therefore $\C=L(\W^+)$ is generating. This shows $L(\W^+)$ is a proper SDR cone. This proves statement 1.
\hfill$\blacksquare$

\begin{figure}
\begin{equation}
\begin{tikzcd}
	U^*\arrow[swap]{d}{i^*}&&U\\
	\W^*&&\W \arrow[hook]{u}{i}\arrow{d}{L}\\
	(\W/\K)^*\arrow[hook]{u}{L^*}&&\W/\K 
\end{tikzcd}
\end{equation}
\caption{\label{diag:comm} Projections and injections of the spaces $U$, $\W$ $\W/\K$ and their duals.}
\end{figure}

\medskip\noindent
\emph{Proof of Lemma~\ref{LEMMA:DUAL}, statement 2}. 
We now consider the cone $\C=L(\W^+)=\W^+/\K$ and its dual $\C^*$. 
Using Theorem~\ref{thm:dual} we have that
\[
	\C^*=(\W^+/\K)^*=\W^+{}^*|_{\K^\perp}=(U^+|_\W)^*|_{\K^\perp},
\]
and 
\begin{align}
	(U^+|_\W)^* = \fracc{(U^+)^*}{\W^\perp},
\end{align}
whereas the restriction to $\K^\perp$ becomes, under the isomorphism
$\W^*\cong \fracc{U^*}{\W^\perp}$, a restriction to $\K^\perp/\W^\perp$:
\begin{align}
	\W^*|_{\K^\perp}\cong \fracc{\K^\perp}{\W^\perp}.
\end{align}
Hence,
\begin{equation}
	\C^* = \Bigl.\fracc{(U^+)^*}{\W^\perp}\Bigr|_{\K^\perp} = \fracc{(\K^\perp)^+}{\W^\perp}.
\end{equation}
Thus, finally, using $\K^\perp=(\widetilde\W^\perp\cap\W)^\perp=\widetilde\W+\W^\perp$, 
we have,
\begin{equation}
	\C^* = \fracc{(\widetilde\W+\W^\perp)^\perp}{\W^\perp}.
\end{equation}
This proves statement 2.
\hfill$\blacksquare$

\bibliographystyle{alpha}	

\newcommand{\etalchar}[1]{$^{#1}$}

\end{document}